\documentclass[format=acmsmall, review=false, screen]{acmart}

\makeatletter
\def\@ACM@checkaffil{
    \if@ACM@instpresent\else
    \ClassWarningNoLine{\@classname}{No institution present for an affiliation}%
    \fi
    \if@ACM@citypresent\else
    \ClassWarningNoLine{\@classname}{No city present for an affiliation}%
    \fi
    \if@ACM@countrypresent\else
        \ClassWarningNoLine{\@classname}{No country present for an affiliation}%
    \fi
}
\makeatother

\acmConference{}{}{} 
\startPage{0} 
\AtBeginDocument{
\fancypagestyle{firstpagestyle}{
	\fancyhf{}%
	\fancyfoot[L]{\footnotesize{}}%
}
\fancypagestyle{ourstyle}{
	\fancyhf{}%
	\fancyhead[C]{\footnotesize{\mbox{\shortauthors}}}
	\fancyhead[R]{\thepage}%
}
\pagestyle{ourstyle}
\renewcommand{\footnotetextcopyrightpermission}[1]{} 
\renewcommand{\footnotetextauthorsaddresses}[1]{} 
}

\usepackage{booktabs} 
\usepackage[ruled]{algorithm2e} 

\SetAlFnt{\small}
\SetAlCapFnt{\small}
\SetAlCapNameFnt{\small}
\SetAlCapHSkip{0pt}
\IncMargin{-\parindent}
\allowdisplaybreaks

\setcitestyle{authoryear}

\title[Avoiding Overrepresentation in Multi-Winner Voting]{Beyond Lower Quota: \\ Avoiding Overrepresentation in Multi-Winner Voting}

\author{Anton Baychkov}
\affiliation{%
  \institution{University of Warwick}
  \city{}
  \country{United Kingdom}}

\author{Martin Lackner}
\affiliation{%
  \institution{University of Applied Sciences St. Pölten}
  \city{}
  \country{Austria}}

\author{Jan Maly}
\affiliation{%
  \institution{WU Vienna University of Economics and Business}
  \city{}
  \country{Austria}}

\author{Oliviero Nardi}
\affiliation{%
	\institution{TU Wien \& WU Vienna University of Economics and Business}
  \city{}
  \country{Austria}}

\author{Jannik Peters}
\affiliation{%
  \institution{Shanghai University of Finance and Economics}
  \city{}
  \country{Shanghai}}

\begin{abstract}
	In the recent social choice literature on proportional representation, much attention has been given to the question of avoiding \emph{underrepresentation} in approval-based multi-winner voting. In this paper, we explore the largely overlooked complementary question of avoiding \emph{overrepresentation}. This has not been explored systematically, despite being a desirable property with concrete applications. Intuitively, overrepresentation happens when a group determines a disproportionately large part of the committee, thereby exceeding the group's \emph{quota}.

	We formulate a strong and appealing axiom for avoiding overrepresentation, called \emph{justified upper quota} (JUQ). We introduce a generalization of Thiele rules, \emph{composite Thiele rules}, and characterize the unique rule in this class satisfying our axiom. This rule, \emph{Adams-AV}, which naturally extends Adams' apportionment method, has not been studied before. Additionally, we introduce a polynomial-time rule that satisfies JUQ.

	Furthermore, we introduce \emph{justified near quota}, an axiom that balances avoiding under- and overrepresentation. It characterizes the unique Thiele rule extending the Sainte-Laguë apportionment method. Finally, we analyze the compatibility of our axioms with established proportionality notions such as EJR+.
\end{abstract}

\usepackage{nicefrac}
\usepackage{tikz}
\usepackage{tikz-cd}
\usepackage{cleveref}
\usepackage{thmtools, thm-restate} 
\usepackage{enumitem}
\usepackage{wrapfig}
\usepackage{wrapstuff}
\usepackage{arydshln} 

\theoremstyle{acmplain}
\newtheorem{theorem}{Theorem}
\newtheorem{lemma}[theorem]{Lemma}
\newtheorem{corollary}[theorem]{Corollary}
\newtheorem{proposition}[theorem]{Proposition}
\newtheorem{observation}[theorem]{Observation}

\declaretheorem[
  style=acmdefinition,
  title=Example,
  refname={example,examples},
  Refname={Example,Examples},
  qed={$\vartriangle$}
]{example}

\declaretheorem[
  style=acmdefinition,
  title=Definition,
  refname={definition,definitions},
  Refname={Definition,Definitions},
  qed={$\vartriangle$}
]{definition}

\newcommand{\nats}{\mathbb{N}}
\newcommand{\posNats}{\nats_{>0}}
\newcommand{\rationals}{\mathbb{Q}}
\newcommand{\reals}{\mathbb{R}}
\newcommand{\nonNegReals}{\reals_{\geq0}}
\newcommand{\posReals}{\reals_{>0}}

\newcommand{\indicator}{\mathbf{1}}

\newcommand{\NP}{\textsc{NP}}

\newcommand{\powerset}[1]{\mathcal{P}(#1)}
\newcommand{\npowerset}[2]{\mathcal{P}_{#1}(#2)}

\DeclareMathOperator*{\argmax}{arg\,max}

\newcommand{\voters}{N}
\newcommand{\candidates}{C}
\newcommand{\app}{A}
\newcommand{\approfile}{\boldsymbol{\app}}
\newcommand{\kcommittees}{\npowerset{k}{\candidates}}
\newcommand{\score}{\mathrm{score}}
\newcommand{\ceil}[1]{\lceil #1 \rceil}
\newcommand{\floor}[1]{\lfloor #1 \rfloor}

\usepackage{xcolor}
\usepackage{pifont}
\newcommand{\cmark}{\textcolor{green!70!black}{\ding{51}}}  

\begin{document}

\begin{titlepage}

\maketitle

\vspace{1cm}
\setcounter{tocdepth}{1} 
\tableofcontents

\end{titlepage}

\section{Introduction}

Consider a scenario where a group of agents must jointly select a certain number of items from a pool of alternatives. Many real-world situations can be described in this way; for example, committee elections~\citep{LaSk22a}, parliamentary apportionment~\citep{BaYo01a}, or finding group recommendations~\citep{SFL16a}.

This problem, known as \emph{multi-winner voting}~\citep{FSST17a,LaSk22a}, has become one of the most-studied settings in (computational) social choice~\citep{BCE+14a}. Most of the effort has focused on \emph{approval-based multi-winner voting}, where voters express their preferences by \emph{approving} candidates \citep{LaSk22a}. In particular, one of the main issues in this area is how to guarantee \emph{proportional representation}, which has so far largely been equated with avoiding underrepresentation. Briefly, the idea is that if a group of voters is ``large enough'', then it deserves to be represented in the outcome. More precisely, if a certain fraction of the voting population has similar preferences, they should be represented by (i.e., approve of) at least roughly the same fraction of candidates in the winning committee. Much progress has been made in this direction, and many properties (\emph{axioms}) have been proposed that formally capture this notion (e.g., extended justified representation~\citep{ABC+16a}, priceability~\citep{PeSk20b}, and many others \citep{BrPe23a,PPS21a,KKL25a}). Furthermore, there are concrete voting rules that embody this property, for instance, Proportional Approval Voting (PAV)~\citep{LaSk21a,Thie95a}, the Method of Equal Shares~\citep{PeSk20b}, or Phragmén's sequential rule~\citep{BFJL24a}.

\begin{wrapfigure}{r}{0.35\textwidth}
  \centering
  \includegraphics[width=0.33\textwidth]{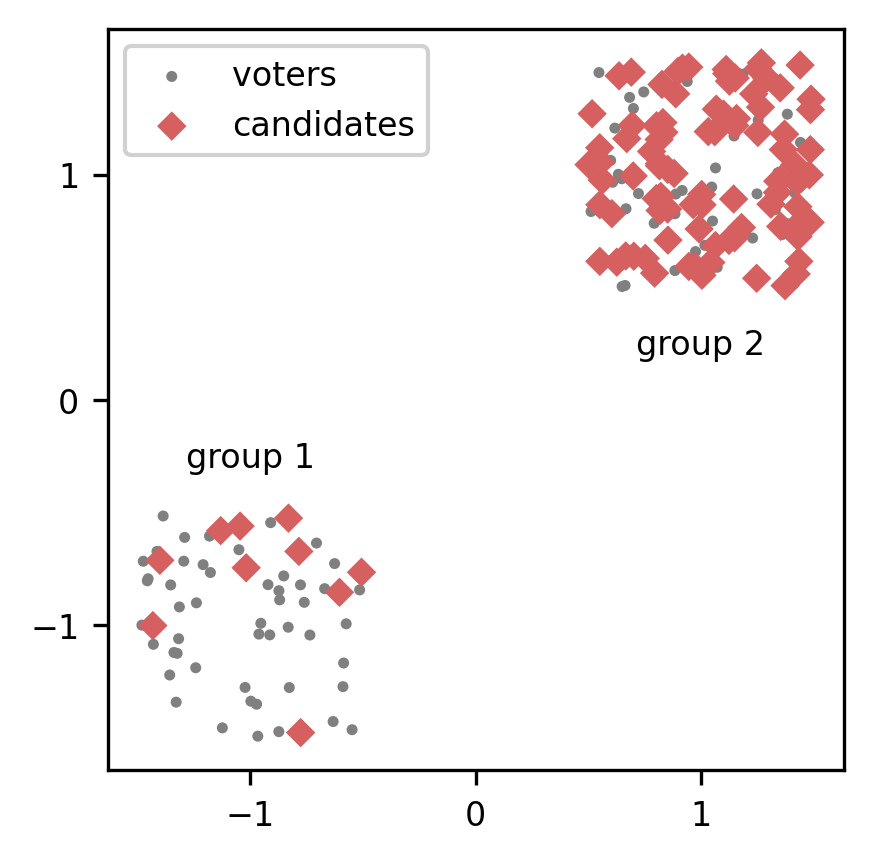}
  \caption{Two groups with the same number of voters.}\label{fig:unbalanced}
\end{wrapfigure}
The main goal of our paper is to look at the other side of the coin, which has not received as much attention: \emph{overrepresentation}. In short, this happens when a group of voters receives a fraction of the winning committee that is much larger than what its size would warrant. 
To illustrate this, consider the example shown in Figure~\ref{fig:unbalanced}. Here, we assume that voters and candidates correspond to two-dimensional points and voters prefer candidates close to them, a type of spatial model that is standard in voting theory. While the groups of voters are of equal size (50 each), there are 10 candidates associated with group~1 and 90 with group~2 (all points were sampled uniformly in either $[0.5, 1.5]^2$ or $[-1.5, -0.5]^2$). We are interested in selecting $10$ candidates, and voters approve the $10$ candidates closest to them. Consequently, the approval preferences of voters in group 1 are identical, whereas those in group~2 are diverse.\footnote{A similar scenario can occur even when candidates are distributed evenly across the two groups, if one group---potentially for strategic reasons---coordinates to vote for the same $10$ candidates. Everyone in this group submits the same ballot (a ``bullet vote''), thus achieving high cohesiveness in the sense of justified representation, as defined by \citet{ABC+16a}.} How do established, \emph{proportional} voting rules handle this scenario? When averaging over 1000 sampled instances, group~2 receives on average
$3.5$ candidates with PAV, between $1.7$ and $3.4$ with the Method of Equal Shares (depending on the completion method), and $3.4$ with Phragmén's sequential rule.
We see that these proportional rules lead to \emph{overrepresentation of group~1}.\footnote{One might wonder why this does not constitute an example of \emph{under}representation for group~2. Indeed, the preferences of group~2 are too diverse to be regarded as a well-defined group with cohesive preferences; instead, there are many subgroups that are too small to be considered by established proportional rules as deserving of representation.}
In contrast, Sainte-Laguë Approval Voting, which plays a prominent role in our paper, treats both groups almost equally ($4.9$ for group~2).

Depending on the application, avoiding overrepresentation can be as important as guaranteeing representation (or more). In parliamentary apportionment~\citep{BaYo01a,Puke14a}, a special case of multi-winner voting, avoiding overrepresentation is equivalent to what is known as \emph{respecting upper quota} (or \emph{upper ideal frame}). This notion has received much attention and can be seen as a way to favor small parties. Going in a more technical direction,~\citet{CeSt21a} highlight the connection between overrepresentation and security in distributed systems, particularly blockchain networks. More specifically,~\citet{BBC+24a} study overrepresentation in the context of \emph{Polkadot}, a blockchain with a mechanism based on multi-winner voting, again highlighting the link between overrepresentation and security.\footnote{However, note that the rules considered by~\citeauthor{CeSt21a}, maximin-support and Phragmén-maximin-support, later also considered by \citeauthor{BBC+24a}, do not respect upper quota even in the apportionment setting. Indeed, maximin-support was originally introduced by~\citet{SFFB24a} as a rule that generalizes D'Hondt's method of apportionment, which respects lower quota but not upper quota.} Beyond multi-winner voting, \citet{jerrett2025low} recently studied the related problem of selecting a representative subset from a set of agents embedded in a metric space. They argue that solutions that overrepresent certain groups are undesirable and propose an axiom for avoiding them.

Despite the importance of this matter, previous work in approval-based multi-winner voting has mostly focused on avoiding underrepresentation. Indeed, rules that respect upper quota have received almost no attention. For example, it is known that PAV is the only rule in the class of \emph{Thiele rules}, a normatively attractive family of rules, that satisfies strong guarantees against underrepresentation (such as extended justified representation~\citep{ABC+16a,SFF+17a}). No corresponding results are known for overrepresentation. Notably, in their survey book on approval-based multi-winner voting, \citet{LaSk22a} include the study of proportionality beyond underrepresentation as an open question. They call for further analysis of rules that avoid overrepresentation, as well as rules that generalize the Sainte-Laguë method of apportionment (which can be seen as a way to balance under- and overrepresentation) to the multi-winner setting. However, even defining a reasonable overrepresentation axiom for the multi-winner setting poses some conceptual difficulties (which we explore in Section~\ref{sec:JUQ}).

Our contributions approach overrepresentation from several angles:
\begin{itemize}
	\item We introduce a strong and appealing upper quota axiom, called \emph{justified upper quota (JUQ)}. While this axiom is not satisfiable within the well-known class of Thiele rules, it can be satisfied in the class of \emph{composite Thiele rules}. The latter is analogous to the class of \emph{composite social choice scoring functions} in standard single-winner voting~\citep{Youn75a}. Intuitively, composite Thiele rules are defined through a sequence of Thiele rules and work in stages. First, we compute the result of the first Thiele rule. Then, at each subsequent stage, we apply the next rule to break the ties of the previous stage. To the best of our knowledge, this is the first paper to introduce composite Thiele rules. 
	
	\item We show that, among this general class, JUQ pinpoints a unique rule: \emph{Adams-AV}. This is the natural extension to the multi-winner voting setting of an apportionment rule proposed by John Quincy Adams. The latter is notable as the only population monotone
	apportionment rule that respects upper quota~\citep{BaYo01a}, which, as discussed above, corresponds to avoiding overrepresentation in the apportionment setting. To the best of our knowledge, although~\citet{LaSk21a} briefly mentioned this rule, we are the first to initiate the axiomatic study of Adams-AV. Finally, since computing Adams-AV corresponds to an NP-hard optimization problem, we also introduce a polynomial-time rule satisfying JUQ.
	
	\item Taking inspiration from the near quota axiom in the apportionment literature, we also introduce \emph{justified near quota} (JNQ), a notion that captures both under- and overrepresentation by requiring each group's representation to deviate as little as possible from its quota. We show that Sainte-Laguë Approval Voting, a Thiele method based on the Sainte-Laguë method of apportionment (which satisfies the near quota axiom~\citep{BaYo01a} in that setting), satisfies JNQ and is in fact the only composite Thiele rule with this property.

	\item Finally, we investigate the compatibility of upper, near, and lower quota axioms. We show that if we allow committees to be smaller than the target committee size $k$, we can guarantee a stronger version of JUQ, a stronger version of extended justified representation (EJR+, introduced by \citet{BrPe23a}) and JNQ simultaneously. We identify further conditions under which these axioms can be guaranteed jointly. Whether any two of JUQ, JNQ, and EJR+ are compatible for committees of size exactly $k$ remains an intriguing open problem, and 
	we show that some natural approaches to proving this compatibility cannot succeed.
\end{itemize}

This is an extended version of a paper presented earlier at a workshop \citep{overrepresentation-mpref}.
All omitted proofs are deferred to Appendix~\ref{app:omitted_proofs}.

\section{Model and Voting Rules}\label{sec:preliminaries}

In this section, we introduce the model of \emph{approval-based multi-winner voting}~\citep{LaSk22a} and its special case, the \emph{apportionment} problem~\citep{BaYo01a,Puke14a}.

We denote by $\nats$ the set of natural numbers including $0$, i.e., the set $\{0, 1, 2, \ldots\}$. Given a positive integer $\ell\in\posNats$, we define $[\ell]=\{1,\,\ldots,\,\ell\}$. Furthermore, given a set $S$, we denote by $\powerset{S}$ its powerset and for $\ell \in \nats$ by $\npowerset{\ell}{S}=\{T\in\powerset{S}\colon |T|=\ell\}$ the set of subsets of size exactly $\ell$.

\subsection{Approval-Based Multi-Winner Voting}

Let $\voters=[n]$ be the set of \emph{voters} and let $\candidates$ be the set of \emph{candidates} with $|\candidates|=m>0$. An \emph{approval profile} (or simply \emph{profile}) is a tuple $\approfile=(\app_1,\,\ldots,\,\app_n)\in\powerset{\candidates}^n$, where $A_i\subseteq\candidates$ for every $i\in N$. 
For a given voter $i \in N$, we call $A_i$ the \emph{approval ballot} of this voter. Given a candidate $c\in\candidates$, we denote its set of \emph{supporters} by
\(\voters(c)=\{i\in \voters\colon c\in \app_i\}\)
and its size by $n_c=|\voters(c)|$. For simplicity we make the (technical) assumption that $n_c > 0$ for all $c \in C$.

Further, we are given a \emph{target committee size} $k \in [m]$. As every candidate is approved by at least one voter, we can uniquely identify an election by the approval profile $\approfile$ together with the committee size $k$ (by taking $C = \bigcup_{i \in \voters} A_i$). We call $(\approfile, k)$ an \emph{instance}. A \emph{voting rule} is a function $F$ that maps every instance $(\approfile, k)$ to a non-empty set of committees $\mathcal{W}\subseteq\{W\subseteq\candidates\colon|W|\leq k\}$. Observe that we allow rules to return committees smaller than the target size; we say that $F$ is \emph{exhaustive} if it always returns committees of size $k$, i.e., if $F(\approfile, k)\subseteq\kcommittees$ for all $(\approfile, k)$.\footnote{Note that exhaustiveness makes avoiding overrepresentation more difficult. Intuitively, the empty committee is not overrepresenting any group of voters (as no voter is represented at all).} Given two voting rules $F$ and $G$, we say that $F$ is a \emph{refinement} of $G$ if $F(\approfile,k)\subseteq G(\approfile,k)$ for all inputs.

For a set of voters $\voters^\prime\subseteq\voters$ and a target committee size $k$, we define its \emph{quota} as
\( q(\voters^\prime)= k\frac{|\voters^\prime|}{|\voters|}. \)
We also define $q_c=q(\voters(c))$ for a candidate $c \in C$. The quota of a group represents the (possibly fractional) number of seats on the committee that corresponds to the size of the group. Intuitively, $\voters^\prime$ would be entitled to fill this many seats on the committee if they can agree on sufficiently many candidates to do so. 
We call $\lceil q(\voters^\prime)\rceil$ the \emph{upper quota} and $\lfloor q(\voters^\prime)\rfloor$ the \emph{lower quota} of $\voters^\prime$. In the next subsection, we present an example illustrating quotas (\Cref{tab:apportionment-instance}).

Ideally, we would like to guarantee that, for each group $\voters^\prime$, the committee contains at least $\lfloor q(\voters^\prime)\rfloor$ candidates approved by all voters in $\voters^\prime$. As it is easy to see that this is impossible in general \citep[see, e.g.,][]{BIMP24a}, several weaker lower quota axioms have been proposed in the literature~\citep{ABC+16a, SFF+17a, PPS21a, BrPe23a, KKL25a}. 
We will focus on \emph{extended justified representation plus} (EJR+), due to \citet{BrPe23a}, a strong proportional representation axiom satisfiable in polynomial time. We express it in terms of quota---differently than how it was initially introduced---to match our notation.
\begin{definition}
	Given a target committee size~$k$, a set $W \subseteq C$ satisfies \emph{EJR+} if there is no candidate $c\in C\setminus W$ and non-empty group of voters $N'\subseteq N(c)$ such that
	$ |A_i\cap W|<\floor{q(N')} \text{ for all } i\in N'.$
\end{definition}
That is, there should be no group $N'$ of voters approving a common unchosen candidate such that everyone in this group approves fewer committee members than the group's lower quota.

\subsection{Apportionment}

The \emph{apportionment} setting is a special case of the approval-based multi-winner voting model.
Intuitively, in apportionment, candidates are partitioned into parties, and each voter approves all candidates of exactly one party; in this sense, voters are partitioned into parties as well.
Formally, an instance $(\approfile, k)$ is an \emph{apportionment instance}, if for every $i,\,j\in\voters$ we have either $A_i=A_j$ or $A_i\cap A_j=\emptyset$. Furthermore, for every $i\in\voters$ we must have $|A_i|=k$, i.e., we assume that each party has enough candidates to fill every seat. An \emph{apportionment rule} is an exhaustive voting rule whose domain is restricted to apportionment instances. We say that a voting rule $F$ \emph{extends} an apportionment rule $G$ if $F(\approfile, k) = G(\approfile, k)$ for all apportionment instances $(\approfile, k)$. 

We introduce some additional notation for apportionment instances. We partition $\voters$ into $t$ disjoint equivalence classes called \emph{parties}, that is, $N=P_1\cup\cdots\cup P_t$. Two voters have the same ballot if and only if they belong to the same party. Given a committee $W$, we define the \emph{seats} given to party $P_\ell$ as $a_\ell(W)=|A_i\cap W|$ (for some arbitrary $i\in P_\ell$). When $W$ is fixed, we just write $a_\ell$.

\begin{definition}
An apportionment rule satisfies \emph{apportionment upper quota} (resp., \emph{lower quota}) if, for every outcome $W$ selected by this rule and party $P_\ell$, we have that $a_\ell\leq\lceil q(P_\ell)\rceil$ (resp., $a_\ell\geq\lfloor q(P_\ell)\rfloor$). Moreover, an apportionment rule satisfies \emph{apportionment near quota} if for every outcome $W$ selected by this rule there do not exist two parties $P_i$ and $P_j$ such that 
\(
\lvert q(P_i) - (a_i - 1)\rvert < \lvert q(P_i) - a_i\rvert \text{ and }  \lvert q(P_j) - (a_j + 1)\rvert < \lvert q(P_j) - a_j\rvert.
\)
In other words, we should not be able to take a seat from party $P_i$ and allocate it to $P_j$ while bringing both parties closer to their respective quotas.
\end{definition}

We refer the reader to the book of~\citet{BaYo01a} for an extended discussion of lower, upper, and near quota in the apportionment setting.

\begin{example}
	\begin{wrapstuff}[r,type=table,width=6.cm,abovesep=0cm,belowsep=-0.2cm]
		\centering
		\begin{tabular}{lcccc}
			\toprule
			& $P_1$ & $P_2$ & $P_3$ & $P_4$ \\
			\midrule
			$\lvert P_i\rvert$ & 24 & 36 & 35 & 15 \\
			$q(P_i) = \nicefrac{k|P_i|}{n}$ & 2.4 & 3.6 & 3.5 & 1.5 \\
			$\lfloor q(P_i)\rfloor$ & 2 & 3 & 3 & 1 \\
			$\lceil q(P_i)\rceil$ & 3 & 4 & 4 & 2 \\
			\midrule
			$a^{(1)}_i$ & 3 & 3 & 3 & 2 \\
			$a^{(2)}_i$ & 1 & 4 & 4 & 2 \\
			$a^{(3)}_i$ & 2 & 5 & 3 & 1 \\
			$a^{(4)}_i$ & 2 & 4 & 4 & 1 \\
			\bottomrule
		\end{tabular}
		\caption{Apportionment instance depicting differences in quota concepts ($n=110$ and $k=11$).}
		\label{tab:apportionment-instance}
	\end{wrapstuff}
	 To illustrate the three different quota concepts, consider the apportionment instance depicted in \Cref{tab:apportionment-instance}. In the instance we abstract from the underlying voters and candidates and consider each party only by its vote count, and each outcome only by its seat allocation. For the instance, we assume $k = 11$ and have $n = 110$. First, allocation $a^{(1)}$ satisfies lower and upper quota. It does not, however, satisfy near quota, as reallocating one seat from party $P_1$ to party $P_2$ would decrease their respective distances to their quotas by $0.2$ each. Allocation $a^{(2)}$ satisfies upper quota and near quota. It does not, however, satisfy lower quota due to party $P_1$. Allocation $a^{(3)}$ satisfies lower quota and near quota, but not upper quota, due to party $P_2$. Finally, allocation $a^{(4)}$ satisfies all three.\qedhere
	\wrapstuffclear
\end{example}

It is easy to see that for apportionment instances, EJR+ and lower quota coincide. 
\begin{restatable}{observation}{obsejrpeq}
	For apportionment instances, EJR+ is equivalent to apportionment lower quota.
\end{restatable}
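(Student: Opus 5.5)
We prove both directions directly from the definitions. Throughout, fix an apportionment instance with parties $P_1,\ldots,P_t$, where every voter in $P_\ell$ has the same ballot of size $k$. For a committee $W$ of size $k$, every voter $i\in P_\ell$ has $|A_i\cap W| = a_\ell$. We may restrict to size-$k$ committees, since apportionment rules are exhaustive.

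\emph{EJR+ implies lower quota.} Suppose some party violates lower quota, i.e., $a_\ell < \lfloor q(P_\ell)\rfloor$.
\begin{itemize}
\item Since $a_\ell < k = |A_i|$ for $i\in P_\ell$, some candidate $c\in A_i\setminus W$ exists.
\item Take $N' = P_\ell \subseteq N(c)$, which is non-empty.
\item Every $i\in N'$ then satisfies $|A_i\cap W| = a_\ell < \lfloor q(N')\rfloor$.
\end{itemize}
This witnesses an EJR+ violation.

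\emph{Lower quota implies EJR+.} Suppose there is an EJR+ violation: an unchosen $c$ and a non-empty $N'\subseteq N(c)$ with $|A_i\cap W| < \lfloor q(N')\rfloor$ for all $i\in N'$.
\begin{itemize}
\item In an apportionment instance, each candidate is approved only by voters of a single party, because ballots are identical or disjoint. So $N(c)\subseteq P_\ell$ for some $\ell$, and hence $N'\subseteq P_\ell$.
\item Since $q$ is monotone in group size, $\lfloor q(N')\rfloor \le \lfloor q(P_\ell)\rfloor$.
\item For any $i\in N'$, we get $a_\ell = |A_i\cap W| < \lfloor q(N')\rfloor \le \lfloor q(P_\ell)\rfloor$.
\end{itemize}
Thus $P_\ell$ violates lower quota.

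The only subtle step is the converse direction. There we must observe that the voters supporting $c$ lie within a single party, and that passing from the subgroup $N'$ to the whole party can only increase the quota. This monotonicity is what makes the argument go through.
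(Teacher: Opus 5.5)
Your proof is correct and follows the paper's argument essentially step for step: in one direction you use $P_\ell\subseteq N(c)$ for an unchosen $c$, and in the other you use $N'\subseteq N(c)\subseteq P_\ell$ together with $\lfloor q(N')\rfloor\le\lfloor q(P_\ell)\rfloor$. Your restriction to size-$k$ committees is unnecessary but harmless, since nothing in your argument uses it: the inequality $a_\ell<\lfloor q(P_\ell)\rfloor\le q(P_\ell)\le k$ already yields the unchosen candidate, and the paper's proof accordingly works for arbitrary $W$.
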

As a final ingredient of the apportionment setting, we introduce \emph{divisor methods}, a normatively appealing class of apportionment rules~\citep{BaYo01a}.\footnote{Divisor methods are the only apportionment rules that are \emph{population monotone}. Population monotonicity captures the idea that if a party $P_i$'s size increases while $P_j$'s decreases, then $a_i$ should not decrease while $a_j$ increases. \citet{BaYo01a} define divisor methods in a slightly different but equivalent way (see Proposition 3.3 of their book).}

\begin{definition}
	A divisor method is defined by a function $d\colon\nats\rightarrow\nonNegReals$ with $d(i) \le d(i + 1)$ for every $i \in \nats$. We sometimes write $d$ as a vector. This method proceeds sequentially, starting with $W = \emptyset$ and iteratively allocating the seat to a party $P_\ell$ maximizing $\nicefrac{\lvert P_\ell\rvert}{d(a_{\ell})}$ by adding an arbitrary unchosen candidate approved by this party to $W$. This is repeated until $\lvert W\rvert = k$, and the method returns all committees that can be obtained by the above procedure plus \emph{some} way of breaking ties. We adopt the convention that $\nicefrac{i}{0}=\infty$ for all $i\in\posNats$ and that $\nicefrac{i}{0} > \nicefrac{j}{0}$ if $i > j$.
\end{definition}

We highlight three well-known divisor methods which are relevant to our analysis:
\begin{itemize}
	\item \emph{D'Hondt's} (or \emph{Jefferson's}) \emph{method} is given by $d_H=(1,\,2,\,3,\,\ldots)$;
	\item \emph{Adams' method} is given by $d_A=(0,\,1,\,2,\,\ldots)$;
	\item \emph{Sainte-Laguë's} (or \emph{Webster's}) \emph{method} is given by $d_S=(1,\,3,\,5,\,\ldots)$.
\end{itemize}

These three methods are notable as they are the only divisor methods that satisfy, respectively, lower, upper, and near quota in apportionment~\citep{BaYo01a}. Note that, since $d_A(0)=0$ and $\nicefrac{s}{0}=\infty$ for all $s\in\posNats$, Adams' method starts by assigning one seat to each party ordered by their size (until all parties are assigned at least one seat or the committee size $k$ is reached) and then proceeds similarly to D'Hondt's method.

\subsection{Multi-Winner Voting Rules}

Since Adams' and Sainte-Laguë's methods satisfy upper quota and near quota, respectively, in apportionment,
generalizing them to multi-winner voting is a natural first step towards understanding overrepresentation
in this setting.
Most divisor methods have direct counterparts in the popular class of \emph{Thiele rules}, introduced by Thorvald N.\ Thiele~\citep{Thie95a,Jans16a}. These (exhaustive) rules are well-studied in multi-winner voting~\citep{LaSk22a}. Intuitively, each Thiele rule is defined by a \emph{weight function} which determines the score that a voter assigns to a committee given the number of candidates they approve in that committee.

However, Adams' method does not correspond directly to a Thiele rule, as we will discuss later. 
Thus, we introduce a more general class of rules, called \emph{composite Thiele rules}. A composite Thiele rule is defined by a sequence of Thiele rules. First, we select a set of winning committees via the first rule. Then, we select a subset of these committees through the second rule. This process continues until either every rule in the sequence has been applied, or we reach a fixed point. 

\begin{definition}
	A \emph{weight function} is a function $w\colon\posNats\rightarrow\nonNegReals$ such that for every $s\in\posNats$, $w(s)\geq w(s+1)$. We often write weight functions as vectors and normalize $w(1)=1$.
  Given $\approfile$, $k$, and $i\in\voters$, we define
	\(\score_{w}(\app_i, W) = \sum_{s=1}^{|\app_i\cap W|}w(s)\text{ and }\score_{w}(\approfile, W)=\sum_{j\in \voters}\score_{w}(\app_j, W).\)
  The \emph{simple} or \emph{1-composite Thiele rule} corresponding to $w$ is then
  \(F_w(\approfile,k) = \argmax_{W\in\kcommittees}\ \ \score_{w}(\approfile,W). \) \sloppy
  Given two weight functions $w_1$ and $w_2$, we define the composition
  \( \left(F_{w_2}\circ F_{w_1}\right)(\approfile,k) = \argmax_{W\in F_{w_1}(\approfile,k)}\ \score_{w_2}(\approfile, W). \)
  Then, finally, for any sequence $\boldsymbol{w}=(w_1,w_2,\ldots)$ of length $\ell\in\posNats\cup\{\infty\}$ of weight functions, the corresponding \emph{$\ell$-composite Thiele rule} is
  \(F_{\boldsymbol{w}}(\approfile,k) = \bigcap_{i=1}^\ell (F_{w_i}\circ \cdots \circ F_{w_1})(\approfile,k). \)
  Note that if $\boldsymbol{w}=(w_1,\ldots,w_\ell)$ is finite the above collapses to $F_{\boldsymbol{w}}(\approfile,k) = \left(F_{w_\ell}\circ \cdots \circ F_{w_1}\right)(\approfile,k)$ and that this is well-defined even for $\ell = \infty$ as there are only finitely many committees.
\end{definition}

We allow for $\infty$-composite Thiele rules to also capture rules such as the one given by the infinite sequence $(1,\,0,\,0,\,\ldots)$, $(1,\,1,\,0,\,0,\,\ldots)$, $(1,\,1,\,1,\,0,\,0,\,\ldots)$, etc. This is an analogue to the leximin social welfare function~\citep{Rawl71a,Sen17a}.\footnote{Observe that, since we assume $n$ to be fixed, we do not technically need infinite sequences to capture this rule. However, we stick with this definition (which accommodates for electorates of variable size) as our results also hold for this more general case.}

To illustrate our definition further, we present some standard Thiele voting rules:
  \begin{itemize}
    \item \emph{Approval Voting} (AV) is the Thiele rule given by $w=(1,\,1,\,1,\,\ldots)$.
    \item \emph{Chamberlin-Courant} (CC) is the Thiele rule given by $w=(1,\,0,\,0,\,\ldots)$.
    \item \emph{Proportional AV} (PAV) is the Thiele rule given by $w=(1,\,\nicefrac{1}{2},\,\nicefrac{1}{3},\,\ldots)$.
  \end{itemize}

Let us give an intuitive idea of these rules. AV picks the committees with the highest total approval score. On the other hand, CC selects the committees that cover as many voters as possible, where a voter is covered if they approve at least one candidate in the committee (the score of a committee does not increase if a voter approves more than one candidate). One can think of AV and CC as lying on two different ends of the spectrum of (simple) Thiele rules: while AV is entirely utilitarian, CC tries to satisfy all voters whenever possible. PAV lies, in some sense, in between these two. Given a committee $W$, under PAV each voter assigns to $W$ a score corresponding to the $|\app_i\cap W|$-th harmonic number; the committees with the highest total score are selected. Thus any additional candidate a voter approves in the committee will yield diminishing returns. 

There is a close connection between divisor methods and Thiele rules: the divisor method given by $d$ is extended uniquely among simple Thiele rules by the one defined by $w(s)=\nicefrac{d(0)}{d(s-1)}$~\citep{LaSk21a,BLS18a}. Note that this is not well-defined for divisor methods with $d(0) = 0$, such as Adams' method. We note that PAV uniquely extends D'Hondt's method and is the only Thiele rule that satisfies EJR+~\citep{ABC+16a,BrPe23a}. Since we are interested in upper and near quota notions, it is then natural to look at the Thiele rules that extend Adams' and Sainte-Laguë's methods.
For Sainte-Laguë, the corresponding Thiele rule is:

\begin{definition}
	\emph{Sainte-Laguë-AV} (SLAV) is the simple Thiele rule given by $w=(1,\,\nicefrac{1}{3},\,\nicefrac{1}{5},\,\ldots)$.
\end{definition}

In contrast, Adams' method does not have a corresponding Thiele method, as its divisor sequence starts with $0$ and hence would imply $w(1)=\infty$. This behavior, however, can be captured in the class of composite Thiele rules.

\begin{definition}
  \emph{Adams-AV} is the $2$-composite Thiele rule given by $w_1=(1,\,0,\,0,\,\ldots)$ and $w_2=(1,\,1,\,\nicefrac{1}{2},\,\nicefrac{1}{3},\,\ldots)$.\label{def:adams-av}
\end{definition}

Observe that Adams-AV first maximizes the number of voters with at least one candidate in the committee (just as CC does), then proceeds with a PAV-like weight function. This resembles exactly the behavior of Adams' method and indeed, we can verify that Adams-AV extends Adams' method.
\begin{restatable}{observation}{adamsextends}
	Adams-AV extends Adams' method.
\end{restatable}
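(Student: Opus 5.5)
The plan is to reduce both stages of Adams-AV on an apportionment instance to statements about seat vectors $(a_1,\ldots,a_t)$, and then compare them with the sequential description of Adams' method. In an apportionment instance, a committee $W$ of size $k$ is determined up to the choice of candidates within parties by its seat vector with $\sum_\ell a_\ell = k$. Since each party has $k$ candidates, every vector of nonnegative integers summing to $k$ is realizable. The score of $W$ under any weight function is $\sum_\ell |P_\ell|\sum_{s=1}^{a_\ell}w(s)$. Likewise, Adams' method outputs exactly the committees whose seat vectors arise from the sequential procedure, with the candidates within a party chosen arbitrarily. So it suffices to show that the set of seat vectors selected by Adams-AV equals the set of seat vectors produced by Adams' divisor method under some tie-breaking.

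\textbf{First stage ($w_1$, CC).} The CC score is $\sum_{\ell: a_\ell\ge 1}|P_\ell|$. If $k\ge t$, the maximizers are exactly the vectors with $a_\ell\ge1$ for all $\ell$. If $k<t$, they are the vectors giving one seat to each party in some set of $k$ parties of maximum total size, with all other parties at zero seats; in this case the second stage cannot change anything. Adams' method matches this. Because $|P_\ell|/0=\infty$, with larger populations preferred among infinite quotients, it first gives one seat to the parties in decreasing order of size, breaking ties among equal sizes arbitrarily. When $k<t$, the resulting sets are exactly the maximum-weight $k$-sets, since a set of $k$ parties maximizes total size precisely when it is a top-$k$ set up to ties.

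\textbf{Second stage ($w_2$), case $k\ge t$.} Write $a_\ell = 1+b_\ell$ with $\sum_\ell b_\ell = k-t$. The $w_2$-score is $\sum_\ell |P_\ell|\,(1+H_{b_\ell})$, where $H_b$ denotes the $b$-th harmonic number. Up to the constant $n$, this is exactly the PAV/D'Hondt objective for distributing $k-t$ further seats, with marginal value $|P_\ell|/(b_\ell+1) = |P_\ell|/a_\ell = |P_\ell|/d_A(a_\ell)$. This is precisely the quotient Adams' method uses once every party has at least one seat.

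It remains to show that maximizing a separable concave objective $\sum_\ell f_\ell(b_\ell)$ with $\sum_\ell b_\ell = k-t$ is equivalent to greedy allocation by largest marginal, including all tie-breakings. This is the step I expect to need the most care, because the claim is equality of sets rather than mere inclusion. The argument is the standard exchange argument for divisor methods, as in \citet{BLS18a}. A vector $b$ is optimal if and only if $\min_{\ell: b_\ell>0} |P_\ell|/b_\ell \ge \max_j |P_j|/(b_j+1)$; otherwise moving a seat strictly improves the score.

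For one direction, every greedy output satisfies this condition, because the marginals given out are nonincreasing over time. For the other direction, take any vector satisfying the condition and let $\theta$ be the minimum marginal it uses. We can build it greedily by first handing out all seats with marginal strictly greater than $\theta$, and then the seats with marginal exactly $\theta$, resolving ties in its favour. This yields equality of the sets of seat vectors, and combining it with the first stage proves the observation.
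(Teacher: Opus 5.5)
Your proof is correct, but it takes a different route from the paper.

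\textbf{The paper's route.} The paper does not treat the two lexicographic stages separately. It uses the reformulation of Appendix~\ref{app:adams:thiele}: on a fixed instance, Adams-AV coincides with the simple Thiele rule with weights $(n\cdot k, 1, \nicefrac{1}{2}, \nicefrac{1}{3}, \ldots)$. It then invokes Theorem~1 of \citet{BLS18a} to conclude that Adams-AV extends the divisor method with $d=(\nicefrac{1}{nk},1,2,3,\ldots)$. Finally it checks that this divisor method behaves exactly like Adams' method. While some party is unseated, the largest unseated party has quotient $nk|P_i| > n-1 \ge |P_j|/d(a_j)$ for every seated party $P_j$. Once all parties are seated, the two divisor sequences coincide.

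\textbf{Your route.} You characterize the CC-optimal seat vectors directly, including the case $k<t$, where the second stage is inert. You reduce the second stage to a D'Hondt allocation of the $k-t$ surplus seats. You then prove by hand that optimal vectors and greedy outputs coincide under all tie-breakings.

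\textbf{Comparison.} The paper's argument is shorter because it outsources exactly this greedy-versus-optimum equivalence to \citet{BLS18a}. It also reuses a reformulation the paper needs anyway, e.g.\ for the UQ-or-EJR+ corollary. The cost is applying that theorem to a weight vector that depends on $n$ and $k$; this is harmless because the correspondence is instance-wise. Your argument is self-contained and makes the set equality, ties included, fully explicit.

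\textbf{One step to tighten.} The sentence ``otherwise moving a seat strictly improves the score'' only shows that optimal vectors satisfy your transfer criterion. The direction `greedy outputs are optimal' needs the converse. It follows from concavity. The $w_2$-score of $1+b$ equals $n$ plus the sum of the marginals $|P_\ell|/s$ with $1\le s\le b_\ell$. Your criterion says every used marginal is at least every unused one. Hence these are $k-t$ largest elements of the multiset $\{|P_\ell|/s\}$, so no allocation can score higher.
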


Returning to the example from the introduction (Figure~\ref{fig:unbalanced}), Adams-AV strongly advantages the diverse group (group~2) with 7.5 chosen candidates on average.\footnote{One may wonder whether this is actually an \emph{overrepresentation} of group~2, and indeed, this is a fair criticism. Adams-AV considers the small diverse groups within group~2 as distinct in the same way as PAV considers them distinct to justify ignoring them. Consequently, following the logic of this example, group~1 and~2 are only treated equally if both under- and overrepresentation are avoided.}
This further highlights the connection to CC, which is known to prioritize \emph{diversity} (i.e., it is representation-focused) \citep{FSST17a,LaSk20b}.
To the best of our knowledge, Adams-AV has not been studied previously in the literature.

\begin{example}
	Consider profile $\approfile=(\{a,b,c,d\},\,\{a,b,c,d\},\,\{a,b,c,e\},\,\{d\},\,\{b,c,e\},\,\{c,e\})$ and let $k=2$. Here, AV picks only $\{b,\,c\}$, the two most approved candidates. CC selects $\{c,\,d\}$ and $\{d,\,e\}$, as they both cover all voters. PAV selects $\{b,\,c\}$ and $\{c,\,d\}$. Among the two CC-optimal committees, Adams-AV picks only committee $\{c,\,d\}$. Similarly, SLAV picks only committee $\{c,\,d\}$.
\end{example}

\section{Avoiding Overrepresentation}\label{sec:JUQ}

In the previous section, we introduced Adams-AV, a multi-winner rule that extends Adams' apportionment method. However, it is not immediately clear whether this rule always avoids overrepresentation, or even how we should extend the idea of overrepresentation to multi-winner elections. In this section, we answer both of these questions by generalizing the normative notion of an upper quota of parties to groups of voters in the approval-based multi-winner setting.

\subsection{Justified Upper Quota}

Let us begin by defining a natural upper quota analog of EJR+ by requiring that there is no selected candidate whose supporters all receive more than this candidate's upper quota. Note that, in contrast to EJR+, we only consider the full set of supporters $\voters(c)$ and not its subsets, as there could exist small subsets that are overrepresented while the entire set of approvers is underrepresented.

\begin{definition}
	A set $W \subseteq C$ of candidates satisfies \emph{upper quota} (UQ) if there is no selected candidate $c\in W$ with 
	$|A_i\cap W|>\ceil{q_c}$ for all $i\in \voters(c).$
\end{definition}
Note that to be overrepresented \emph{everyone} needs to be over the quota, similarly to how EJR-style axioms require all members of a particular voter set to be below quota for underrepresentation to occur.
UQ is trivially satisfiable by non-exhaustive committees, in particular by the empty committee.\footnote{Note that $q_c$ is defined in terms of the target committee size $k$, not the actual size of $W$.} However, it can be infeasible for exhaustive committees.

\begin{example}\label{ex:why-difficult}
	Consider the profile $\approfile=\left(\{a,b,c\},\,\{d,e,f\},\,\{g\}\right)$ and let $k=6$. This is \emph{almost} an apportionment instance, but here voters approve fewer than $k$ candidates. Any exhaustive committee must contain either three candidates approved by voter~$1$ or three candidates approved by voter~$2$ (or both). Suppose, without loss of generality, that $\{a,b,c\}\subseteq W$. Then we have that $|A_1\cap W|=3>\ceil{q_c}=2$, violating UQ. A similar result holds for any other exhaustive committee.
\end{example}

Due to the closeness of Example~\ref{ex:why-difficult} to the apportionment setting, it seems hard to argue with the interpretation that voter 1 and 2 form a unique party or group each and that, if we select three candidates they approve, they are overrepresented. Thus, we must conclude that in some cases overrepresentation is unavoidable for exhaustive committees. However, we only want to permit overrepresentation if it is---in some sense---necessary. We interpret this necessity as there being no other group that could receive a candidate without any of the voters breaching their upper quota. We call such violations of upper quota ``justified'' and introduce the axiom \emph{justified upper quota}.

\begin{definition}\label{def:JUQ}
	Committee $W$ satisfies \emph{justified upper quota} (JUQ) if, whenever there is a $c\in W$ such that
	\[ |\app_i\cap W| > \lceil q_c\rceil\quad\text{for all }i\in \voters(c),\]
	then there exists no $d\in \candidates\setminus W$ and nonempty $\voters^\prime\subseteq\voters(d)$ such that
	\[ |\app_i\cap ((W\cup\{d\})\setminus\{c\})|\leq\lceil q(\voters^\prime)\rceil\quad\text{for all }i\in \voters^\prime.\qedhere\]
\end{definition}
Note that we allow for subsets $N' \subseteq N(d)$ in our definition: for us it is sufficient that there exists some non-empty subgroup of the approvers of $d$ who could be awarded an extra candidate without breaching their upper quota. 
Observe that UQ implies JUQ, and that we require $\voters^\prime\neq\emptyset$ as otherwise the second condition would be vacuously true for any unselected candidate.

Even when every exhaustive committee overrepresents some group according to UQ, JUQ can remain meaningfully satisfiable.

\begin{example}
	Let us revisit the scenario of Example~\ref{ex:why-difficult}, where $\approfile=(\{a,b,c\},\,\{d,e,f\},\,\{g\})$ and $k=6$. First, consider the committee $W_1=\{a,\,b,\,c,\,d,\,e,\,g\}$. This committee satisfies JUQ. Indeed, consider, for example, $c\in W_1$. We have that $\lceil q_c\rceil=2$, but voter~$1$ has satisfaction $3$, that is, she approves $3$ candidates in the committee. However, the only candidate not in $W_1$ is $f$, and we have that, if we were to replace $c$ with $f$ in $W_1$, then voter~$2$ would get a satisfaction of $3>2=\lceil q_f\rceil$. On the other hand, the committee $W_2=\{a,\,b,\,c,\,d,\,e,\,f\}$ violates JUQ. Indeed, again $c$ is a witness for an upper quota violation. However, now we can swap $c$ with $g$, as $\lceil q_g\rceil=2\ge\lvert A_3 \cap ((W_2 \cup \{g\}) \setminus \{c\})\rvert$.
\end{example}

Note that JUQ only disallows overrepresentation in the most egregious cases: if \emph{all} members in a group $\voters(c)$ are above their upper quota, and there is another group $\voters^\prime\subseteq\voters(d)$ where \emph{all} members would be at most at their upper quota after replacing $c$ with $d$, then JUQ is violated. Note the asymmetry: it would perhaps be more natural to require only \emph{some} members of $\voters^\prime$ to be at most their upper quota to warrant the replacement of $c$ by $d$. However, there are reasons to believe that this \emph{symmetric variant} of JUQ is too strong, as we will discuss at the end of Section \ref{sec:char-adams}. We will therefore focus on JUQ and leave a more in-depth analysis of this stronger notion to future work.

Next, similarly to the relationship between EJR+ and apportionment lower quota, we can show that JUQ and apportionment upper quota are equivalent on apportionment instances.
\begin{proposition}\label{prop:JUQ-apportionment}
	For apportionment instances, JUQ is equivalent to apportionment upper quota.
\end{proposition}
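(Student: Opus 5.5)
We prove the two directions separately. Throughout, we use the structure of apportionment instances. Every candidate $c$ belongs to exactly one party $P_\ell$, so $\voters(c)=P_\ell$ and $q_c=q(P_\ell)$. Every voter $i\in P_\ell$ has $|A_i\cap W|=a_\ell$. Moreover, $W$ is an exhaustive committee, since we are comparing with an apportionment rule.

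\emph{Upper quota implies JUQ.} Suppose every party satisfies $a_\ell\le\lceil q(P_\ell)\rceil$. Then for any $c\in W$ lying in party $P_\ell$, no voter $i\in\voters(c)=P_\ell$ has $|A_i\cap W|>\lceil q_c\rceil$. Hence the premise of JUQ never fires, and JUQ holds vacuously. In fact this shows that UQ implies JUQ, as already observed after Definition~\ref{def:JUQ}.

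\emph{JUQ implies upper quota.} We argue by contraposition. Suppose some party $P_i$ has $a_i>\lceil q(P_i)\rceil$; we must exhibit $d\notin W$ and $\emptyset\neq N'\subseteq\voters(d)$ witnessing a JUQ violation. Take any $c\in W$ approved by $P_i$; then every voter in $\voters(c)=P_i$ exceeds $\lceil q_c\rceil$. The key step is a counting argument producing an under-quota party. Since $\sum_\ell a_\ell=k=\sum_\ell q(P_\ell)$ and $a_i>q(P_i)$, some party $P_j$ with $j\neq i$ has $a_j<q(P_j)$. Integrality of $a_j$ then gives $a_j+1\le\lceil q(P_j)\rceil$. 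Because $|A_v|=k$ for voters in $P_j$ and $a_j<k$, there is an unselected candidate $d$ approved by $P_j$. Set $N'=P_j=\voters(d)$. After swapping $c$ out and $d$ in, each voter of $P_j$ has satisfaction exactly $a_j+1$, since $c\notin A_v$ for $v\in P_j$ (the parties are disjoint). This is at most $\lceil q(N')\rceil$, so JUQ is violated.

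\emph{Main obstacle.} The only delicate point is the existence of the strictly under-quota party $P_j$. This needs exhaustiveness ($|W|=k$) together with the fact that the quotas sum to exactly $k$ because the parties partition $\voters$. Strictness matters: it is what makes $a_j+1\le\lceil q(P_j)\rceil$ hold even when $q(P_j)$ is an integer. If $q(P_j)$ is an integer, $a_j<q(P_j)$ gives $a_j+1\le q(P_j)=\lceil q(P_j)\rceil$ directly. If it is not an integer, $a_j\le\lfloor q(P_j)\rfloor$ gives the same bound.
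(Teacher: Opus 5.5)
Your proof is correct and follows the paper's argument: the direction from upper quota to JUQ holds vacuously, and the converse uses the counting identity $\sum_\ell q(P_\ell) = k$ to find a strictly under-quota party, whose unselected candidate and full voter set witness the JUQ violation. One small correction to your closing remark: exhaustiveness is not actually needed, since the argument only uses $\sum_\ell a_\ell \le k$, which is how the paper states it, so the equivalence holds for every committee of size at most $k$.
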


\begin{proof}
	Consider an apportionment instance $(\approfile, k)$ and consider any committee $W$ of size at most $k$. First, assume that $W$ violates JUQ. Then there exists a candidate $c \in W$ (belonging to a party $P_\ell$)  such that $\lceil q(P_\ell)\rceil = \lceil q_c \rceil < \lvert A_i \cap W\rvert = a_\ell $ and hence $W$ also violates apportionment upper quota. 	
	For the other direction, we observe that by definition of the quota it holds that 
	\(
	\sum_{\ell = 1}^t a_{\ell} \le k = \sum_{\ell = 1}^t k \cdot \frac{\lvert P_\ell \rvert}{n} = \sum_{\ell = 1}^t q(P_\ell).	\)
	Hence, if there exists a party $P_\ell$ with $a_{\ell} > q(P_\ell)$ there must exist a second party $P_s$ with $a_s < q(P_s)$ and thus as $a_s$ is an integer $a_s + 1 \le \lceil q(P_s)\rceil$. Hence, if apportionment upper quota is violated, we can take any selected candidate from party $P_\ell$ as $c$, any unselected candidate from party $P_s$ as $d$ and $P_s$ as $N'$ as a witness that JUQ is also violated.
\end{proof}

We remark that JUQ is verifiable in polynomial time (as is the case for EJR+~\citep{BrPe23a}, but not the original EJR notion~\citep{ABC+16a}). 

\begin{restatable}{proposition}{JUQVerifPoly}
	Whether a committee satisfies JUQ is decidable in polynomial time.\label{prop:juq-verif-poly}
\end{restatable}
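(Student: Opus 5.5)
The plan is to check JUQ directly. We enumerate the polynomially many pairs $(c,d)$ with $c\in W$ and $d\in\candidates\setminus W$. For each pair, the only hard part is to decide whether a suitable nonempty group $\voters'\subseteq\voters(d)$ exists, since there are exponentially many subsets.

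First, for each $c\in W$ we compute $|\app_i\cap W|$ for all $i\in\voters(c)$ and compare it with $\lceil q_c\rceil$. This tells us in linear time which $c$ are \emph{overrepresented witnesses}, that is, those with $|\app_i\cap W|>\lceil q_c\rceil$ for every $i\in\voters(c)$. Only these $c$ need further attention.

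Next, fix such a $c$ and some $d\notin W$, and write $W'=(W\cup\{d\})\setminus\{c\}$ and $s_i=|\app_i\cap W'|$ for $i\in\voters(d)$. We must decide whether there is a nonempty $\voters'\subseteq\voters(d)$ with $s_i\le\lceil k|\voters'|/n\rceil$ for all $i\in\voters'$. The key observation is a monotonicity argument. If such a $\voters'$ exists with $|\voters'|=r$, then we may take $\voters'$ to consist of the $r$ voters of $\voters(d)$ with the smallest values $s_i$. Replacing members by voters with smaller $s_i$ keeps the size, and hence the quota, unchanged, while only decreasing the maximum of $s_i$ over the group.

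So we sort $\voters(d)$ by $s_i$ in nondecreasing order, obtaining $s_{(1)}\le\dots\le s_{(|\voters(d)|)}$. A valid group exists if and only if there is some $r\in[|\voters(d)|]$ with $s_{(r)}\le\lceil kr/n\rceil$. This condition can be checked in $O(n)$ time after sorting. Hence JUQ is violated if and only if some witness $c$ and some $d\notin W$ pass this test. The total running time is polynomial, roughly $O(k\cdot m\cdot(nm+n\log n))$.

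The only step needing real care is the exchange argument reducing arbitrary subsets to prefixes of the sorted order. It rests on the fact that the constraint for $\voters'$ depends on $\voters'$ only through $|\voters'|$ on the right-hand side and through each member's own $s_i$ on the left. Once this is stated, the remaining steps are routine bookkeeping.
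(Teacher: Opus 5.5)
Your proposal is correct and follows essentially the same route as the paper: enumerate the pairs $(c,d)$, and use that the quota depends on $\voters'$ only through $|\voters'|$, so a witness of size $r$ exists exactly when at least $r$ voters of $\voters(d)$ satisfy $|\app_i\cap((W\cup\{d\})\setminus\{c\})|\le\lceil kr/n\rceil$. The paper counts such voters for each size $r$, and your sorted-prefix test $s_{(r)}\le\lceil kr/n\rceil$ is an equivalent formulation of that same check.
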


\subsection{Characterization of Adams-AV}\label{sec:char-adams}
In the previous section, we introduced JUQ as our upper quota axiom. It is easy to see that we can always satisfy JUQ via a non-exhaustive committee: for instance, the empty committee trivially satisfies it. However, similar to UQ, it is not obvious that there always exists a committee of size $k$ satisfying JUQ. We show that this is always the case. In particular, we prove that every committee selected by Adams-AV satisfies JUQ.

\begin{theorem}\label{thm:JUQ-adams}
	Any refinement of Adams-AV satisfies justified upper quota.
\end{theorem}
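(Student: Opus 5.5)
The plan is a swap argument: a violation of JUQ yields a committee that is at least as good as $W$ for the first weight function of Adams-AV and strictly better for the second. Let $W$ be selected by Adams-AV and suppose JUQ fails. Then there is $c\in W$ with $s_i:=|A_i\cap W|>\lceil q_c\rceil$ for all $i\in N(c)$. There are also $d\notin W$ and a nonempty $N'\subseteq N(d)$ such that $t_i:=|A_i\cap W'|\le\lceil q(N')\rceil$ for all $i\in N'$, where $W'=(W\cup\{d\})\setminus\{c\}$. Note $|W'|=|W|=k$. Write $a=\lceil q_c\rceil$ and $b=\lceil q(N')\rceil-1$. Since $n_c>0$ we have $q_c>0$, so $a\ge1$ and every supporter of $c$ has $s_i\ge2$.

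\emph{Step 1 (CC stage).} Removing $c$ uncovers no voter, because every $i\in N(c)$ still approves at least $s_i-1\ge1$ members. Adding $d$ can only cover more voters. Hence the $w_1$-score of $W'$ is at least that of $W$. Since $W$ is CC-optimal, the scores are equal and $W'$ also survives the first stage. Moreover, no voter of $N'$ can have $t_i=1$. Such a voter would have approved nothing in $W$, since $d\notin W$ and $c$ is not approved by her, as otherwise $t_i=s_i\ge2$. She would then be newly covered by $W'$, which would make $W'$ strictly better than $W$ for CC. So $t_i\ge2$ for all $i\in N'$. This gives $b\ge1$, and each $i\in N'\setminus N(c)$ gains exactly $w_2(t_i)=1/(t_i-1)\ge 1/b$.

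\emph{Step 2 (second stage accounting).} Let $X=N'\cap N(c)$ and $x=|X|$. Voters in $N(c)\cap N(d)$ have unchanged satisfaction. Every voter in $N(c)\setminus N(d)$ loses $w_2(s_i)=1/(s_i-1)\le 1/a$. Voters in $N(d)\setminus N(c)$ only gain. Therefore
\[
\score_{w_2}(\approfile,W')-\score_{w_2}(\approfile,W)\ \ge\ \frac{|N'|-x}{b}-\frac{n_c-x}{a}.
\]
From $a\ge q_c$ we get $n_c/a\le n/k$. From $b<q(N')$ we get $|N'|/b>n/k$. Hence $|N'|/b>n_c/a$.

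\emph{Step 3 (the overlap, the main obstacle).} The delicate point is the overlap $X$, because voters in $X$ contribute neither gain nor loss. If $x=0$ we are done. If $x>0$, pick $i\in X$. For this voter $s_i=t_i$, so $a+1\le s_i=t_i\le b+1$, which gives $b\ge a$ and hence $x/b\le x/a$. Combining,
\[
\frac{|N'|-x}{b}=\frac{|N'|}{b}-\frac{x}{b}>\frac{n_c}{a}-\frac{x}{a}.
\]
So the difference in $w_2$-score is strictly positive. This contradicts the fact that $W$ maximizes the $w_2$-score among the CC-optimal committees.

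Therefore every committee output by Adams-AV satisfies JUQ. Any refinement of Adams-AV only outputs a subset of these committees, so it satisfies JUQ as well.
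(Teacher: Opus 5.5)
Your proof is correct and follows essentially the same route as the paper. You swap $c$ for $d$, and use CC-optimality to show that the $w_1$-score is unchanged and that every voter in $N'$ is already covered. You then bound the $w_2$-gain of $N'\setminus N(c)$ against the loss of $N(c)\setminus N'$ via the quotas, and handle the overlap using $\lceil q_c\rceil<\lceil q(N')\rceil$. The only real difference is cosmetic: you work with the integer thresholds $a=\lceil q_c\rceil$ and $b=\lceil q(N')\rceil-1$, so strictness comes directly from $b<q(N')$, and you do not need the paper's separate argument that $N'\setminus N(c)\neq\emptyset$.
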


\begin{proof}
	Let $W$ be a committee of size $k$ not satisfying JUQ. Our goal is to show that $W$ cannot be selected by Adams-AV. Let $w_1=(1,0,0,\ldots)$ and $w_2=(1,1,\nicefrac{1}{2},\nicefrac{1}{3},\ldots)$ be the two weight vectors used to define Adams-AV. Assume that there is a JUQ violation witnessed by the two candidates $c \in W$ and $d \in C \setminus W$, as well as by $N' \subseteq N(d)$. We consider the committee $W' = (W \setminus \{c\}) \cup \{d\}$, swapping the two candidates.	
	Hence, our goal becomes to show that either
	 \begin{enumerate}
		\item $\score_{w_1}(\approfile, W')  > \score_{w_1}(\approfile, W)$ or \item $\score_{w_2}(\approfile, W') > \score_{w_2}(\approfile, W)$ and $\score_{w_1}(\approfile, W') =\score_{w_1}(\approfile, W)$. 
	\end{enumerate}  Hence, performing the swap increases the ``score'' associated with Adams-AV and the committee $W$ could not have been selected by it.
	
	 Since $c,d$, and $N'$ witness the JUQ violation it holds that $\lvert A_i \cap W\rvert > \ceil{q_c}$ for all $i \in N(c)$ and $\lvert A_i \cap W' \rvert \le \ceil{q(N')}$ for all $i \in N'$. First, we notice that since $\lvert A_i \cap W\rvert > \ceil{q_c}$ it must particularly hold that $\lvert A_i \cap W\rvert > 1$ for all $i \in N(c)$. After removing $c$ from $W$ every voter in $\voters(c)$ still approves at least one candidate of  $W'$. Thus, if there exists a voter $i \in N(d)$ with $A_i \cap W = \emptyset$ the $\score_{w_1}$ would strictly increase as we cover more voters. If, however, $A_i \cap W \neq \emptyset$ for all $i \in N(d)$, we would get $\score_{w_1}(\approfile, W') = \score_{w_1}(\approfile, W)$.
	
	Thus, for the second part we can assume that  $A_i \cap W \neq \emptyset$ for all $i \in \voters(d)$ (and thus also for all $i \in N'$). Now, our goal is to lower bound the score difference $	\score_{w_2}(\approfile, W') - \score_{w_2}(\approfile, W)$ and to show that this difference is always positive. First, we notice that for any $i \in N(c) \cap N(d)$ or $i \in N \setminus (N(c) \cup N(d))$ the number of approved candidates is the same in $W'$ and $W$. Hence, for such a voter  $	\score_{w_2}(A_i, W') - \score_{w_2}(A_i, W) = 0$. We note that for the $w_2$ score vector, for a voter with $\lvert A_i \cap W\rvert \ge 1$ approved candidates on the committee, the marginal increase of adding another approved candidate is $\frac{1}{\lvert A_i \cap W\rvert}$, while the marginal loss of removing one is $\frac{1}{\lvert A_i \cap W\rvert - 1} $. For any $i \in N(d) \setminus N(c)$ we know by our assumption that $i$ approves at least one candidate in $W$. Hence, the score increases by $\score_{w_2}(A_i, W') - \score_{w_2}(A_i, W) = \frac{1}{\lvert A_i \cap W\rvert}$. Similarly, for any $i \in N(c) \setminus N(d)$ the score must decrease by $	\score_{w_2}(A_i, W') - \score_{w_2}(A_i, W) = - \frac{1}{\lvert A_i \cap W\rvert - 1}$. We get that 
	\begin{align*}
		\score_{w_2}(\approfile, W') - \score_{w_2}(\approfile, W) &= \sum_{i \in N(d) \setminus N(c)} \frac{1}{\lvert A_i \cap W\rvert}-\sum_{i \in N(c) \setminus N(d)} \frac{1}{\lvert A_i \cap W\rvert - 1} \\
		&\ge  \sum_{i \in N' \setminus N(c)} \frac{1}{\lvert A_i \cap W\rvert}-\sum_{i \in N(c) \setminus N'} \frac{1}{\lvert A_i \cap W\rvert - 1}.
	\end{align*}
	For any $i \in N' \setminus N(c)$, since $N'$ and $d$ witness the JUQ violation, we know that $\lvert A_i \cap W \rvert  + 1 \le \lceil q(N')\rceil$ and therefore $\lvert A_i \cap W \rvert < q(N')$. Similarly, for any $i \in N(c) \setminus N'$ it must hold that $\lvert A_i \cap W\rvert \ge q_c + 1$. Further, we note that $N'\setminus N(c)$ must always be non-empty as otherwise $N'\subseteq N(c)$ and thus $q(N') \le q_c$. However, then for any $i \in N'$ we would get $\lceil q(N')\rceil  \le \lceil q_c\rceil < \lvert A_i \cap W\rvert$, a contradiction to the second condition of JUQ.	
	 Thus, we obtain
	\begin{align*}
	\sum_{i \in N' \setminus N(c)} \frac{1}{\lvert A_i \cap W\rvert}-\sum_{i \in N(c) \setminus N'} \frac{1}{\lvert A_i \cap W\rvert - 1} &> \sum_{i \in N'\setminus N(c)} \frac{1}{q(N')}-\sum_{i \in N(c) \setminus N'} \frac{1}{ q_c}\\ = \frac{\lvert N'\setminus N(c)\rvert}{q(N')} - \frac{\lvert N(c) \setminus N'\rvert}{ q_c} & = \frac{\lvert N'\rvert}{q(N')} - \frac{\lvert \voters(c)\rvert}{ q_c} - \frac{\lvert N' \cap N(c)\rvert }{q(N')} + \frac{\lvert N' \cap N(c)\rvert }{q_c} .
	\end{align*}
	Now, by definition of the quota $\frac{\lvert N'\rvert}{q(N')} = \frac{n}{k}$ and $ \frac{\lvert \voters(c)\rvert}{ q_c} = \frac{n}{k}$. Hence, this expression simplifies to 
	\begin{align*}
		\frac{\lvert N'\rvert}{q(N')} - \frac{\lvert \voters(c)\rvert}{ q_c} - \frac{\lvert N' \cap N(c)\rvert }{q(N')} + \frac{\lvert N' \cap N(c)\rvert }{q_c} = \lvert N' \cap N(c)\rvert \left(\frac{1}{q_c}-\frac{1}{q(N')}\right).
	\end{align*}
	If $N' \cap N(c) = \emptyset$ this is $0$ and hence  $\score_{w_2}(\approfile, W') - \score_{w_2}(\approfile, W) > 0$ would hold. Otherwise, there must exist a voter $j \in N' \cap N(c)$. For this voter, we know that $\lceil q_c \rceil < \lvert A_j \cap W\rvert = \lvert A_j \cap W'\rvert \le \lceil q(N') \rceil$. Thus, $q_c < q(N')$ and $(\frac{1}{q_c} - \frac{1}{q(N')})$ must be positive.
	Consequently, $\score_{w_2}(\approfile, W') - \score_{w_2}(\approfile, W) > 0$ also holds and $W$ could not have been selected by Adams-AV.
\end{proof}

We complement this by a characterization of Adams-AV among composite Thiele rules: A composite Thiele rule satisfies apportionment upper quota if and only if it is a refinement of Adams-AV. 
\begin{restatable}{theorem}{adamschar}\label{thm:AUQ-thiele}
	A composite Thiele rule satisfies apportionment upper quota if and only if it is a refinement of Adams-AV.
\end{restatable}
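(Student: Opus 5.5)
The ``if'' direction follows from earlier results. By Theorem~\ref{thm:JUQ-adams}, every refinement of Adams-AV satisfies JUQ. By Proposition~\ref{prop:JUQ-apportionment}, on apportionment instances JUQ coincides with apportionment upper quota. So every refinement of Adams-AV satisfies apportionment upper quota. All the work is in the ``only if'' direction. Fix a composite Thiele rule $F_{\boldsymbol{w}}$ with $\boldsymbol{w}=(w_1,w_2,\ldots)$ that satisfies apportionment upper quota. We must show $F_{\boldsymbol{w}}(\approfile,k)\subseteq$ Adams-AV$(\approfile,k)$ for \emph{every} instance, not only apportionment ones.

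The plan is to view $F_{\boldsymbol{w}}$ as a lexicographic maximizer. Each committee $W$ gets the vector $(\score_{w_1}(\approfile,W),\score_{w_2}(\approfile,W),\ldots)$, and the rule returns the lexicographic maxima. Each coordinate is a sum over voters of $\score_{w_j}$, so the comparison of two committees is linear in the voter counts. The first key step is to show that $\boldsymbol{w}$ must start with the CC criterion: before any weight function with $w_j(2)>0$ appears, the preceding ones already force maximal coverage. Suppose instead that the first weight $w_j$ that distinguishes ``second seat for one party'' from ``first seat for another'' has $w_j(2)>0$. Then I would build the following apportionment instance:
\begin{itemize}
\item one party of size $a$, with $a\,w_j(2)>1$;
\item $p$ singleton parties;
\item committee size $k$ chosen with $ka\le a+p$, so the big party's quota is at most $1$.
\end{itemize}
Comparing a committee that gives the big party two seats with one that moves a seat to an unseated singleton, the big party's gain outweighs the singleton's at stage $j$. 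So some winning committee gives the big party $2>\lceil q\rceil=1$ seats, violating upper quota. It follows that every committee in $F_{\boldsymbol{w}}(\approfile,k)$ is CC-optimal, i.e., $F_{\boldsymbol{w}}$ refines $F_{w_1}$ with $w_1=(1,0,0,\ldots)$, in arbitrary instances too, because this stage is determined by the weight vectors alone.

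The second step is to pin down the tie-breaking among CC-optimal committees. I would show that the remaining weights, restricted to voters already covered, must induce the same preferences as $w_2^{A}=(1,1,\nicefrac12,\nicefrac13,\ldots)$. The idea is to show that every strict comparison made by Adams-AV between two CC-optimal committees is also made, in the same direction, by the lexicographic order of $\boldsymbol{w}$. By linearity in voter multiplicities, a comparison between committees reduces to an inequality between weighted sums of marginal weights $w_j(s)$. I would realise each such basic comparison, namely whether $\alpha$ voters at level $s$ beat $\beta$ voters at level $t$ exactly when $\alpha/(s-1)>\beta/(t-1)$, inside an apportionment instance whose quotas make upper quota fail unless the rule agrees with Adams. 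This yields that the first weight function $w_j$ (after the CC stage) that is not tie-neutral on these comparisons satisfies $w_j(s)\propto \nicefrac{1}{(s-1)}$ for $s\ge2$, and disagreements can only arise where Adams ties. Hence the lexicographic order of $\boldsymbol{w}$ refines that of Adams-AV on all profiles.

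I expect the main obstacle to be this transfer from apportionment instances to general profiles. One must argue that the lexicographic order of $\boldsymbol{w}$ refining Adams-AV's order is a property of the weight sequences alone. The risk is that later weights in an $\infty$-composite sequence could reverse an Adams preference only in non-apportionment profiles. I would handle this by showing that any pair $(s,t)$ of levels and rational ratio $\alpha/\beta$ can be embedded into a two- or three-party apportionment instance with suitable sizes. Then a reversal at any stage produces an upper-quota violation in that instance, which forces the weight comparisons to agree with Adams.
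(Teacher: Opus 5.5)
Your ``if'' direction is exactly the paper's. Your first step (one party with quota at most $1$ against many singletons, forcing the first stage to be CC) is essentially the paper's argument that $w_1=(1,0,0,\ldots)$.

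The gap is in the second step, which carries essentially all of the work, and the mechanism you propose for it cannot succeed. Apportionment instances with a bounded number of parties do not pin down the post-CC weights. Take the $2$-composite rule with $w_1=(1,0,0,\ldots)$ and $w_2=(1,\nicefrac{1}{3},\nicefrac{1}{5},\ldots)$, i.e., CC with SLAV tie-breaking.
\begin{itemize}
\item If $t>k$, no party gets two seats. If $t\le k$, every outcome gives each party a seat.
\item Optimality of the second stage gives $p_\ell/(a_\ell+\frac12)\le p_m/(a_m-\frac12)$ for every $m$ with $a_m\ge 2$ and every $\ell\neq m$.
\item Suppose party $m$ exceeded its upper quota with $a_m=b+1$. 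Summing $p_\ell\le\lambda(a_\ell+\frac12)$ with $\lambda=p_m/(b+\frac12)$ yields $q(P_m)\ge k(b+\frac12)/(k-1+\frac t2)$.
\item Together with $q(P_m)\le b$ and $k\ge b+t$, this forces $t\ge 4$.
\end{itemize}
So this rule satisfies apportionment upper quota on every instance with at most three parties. Yet it is not a refinement of Adams-AV: for parties of sizes $9$ and $5$ with $k=4$ it selects $(3,1)$, whereas Adams-AV selects only $(2,2)$.

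The same computation with $w_2(1)=1$ and $w_2(s)=\nicefrac{1}{(s-1+\delta)}$ for $s\ge 2$ gives a rule that is not a refinement of Adams-AV for any $\delta>0$. It can violate upper quota only on instances with more than $1+\nicefrac{1}{\delta}$ parties. So no fixed number of parties suffices, and your claim that every comparison embeds into a two- or three-party instance where disagreeing with Adams breaks upper quota is false.

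What is missing is a family of instances with unboundedly many parties, together with a limiting argument. The paper normalises $w_2(2)=1$ and uses two constructions:
\begin{itemize}
\item One party of size $h$ against $g=\ceil{\nicefrac{1}{\epsilon}}$ parties of size $h(s-1+\epsilon)$, with $k=g(s-1)+2$. This gives $w_2(s)\ge\nicefrac{1}{(s-1)}$.
\item $g=\ceil{\nicefrac{(s-1)}{\epsilon}}$ parties of size $h$ against one party of size $h(s-1-\epsilon)$, with $k=g+s$. This gives $w_2(s)\le\nicefrac{1}{(s-1)}$ by induction on $s$; the hypothesis is what guarantees that the large party already receives $s-1$ seats.
\end{itemize}
Each instance only yields an inequality with $\epsilon$-slack, and the exact value appears as $\epsilon\to0$. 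This is why exact comparisons of the form ``$\alpha$ voters at level $s$ beat $\beta$ at level $t$ exactly when $\alpha/(s-1)>\beta/(t-1)$'' are never realised directly.

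Your sketch also omits two separate instances the paper needs. Parties $(1,2)$ with $k=3$ show that CC alone fails. Parties $(s^*,s^*-2)$ with $k=2s^*-2$ show that $w_2(s)>0$ for all $s$.

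Finally, the obstacle you single out is not one. Later stages only break ties of earlier ones and cannot reverse a strict preference. Also, $w_2(1)$ is irrelevant on CC-optimal committees, since they cover equally many voters. So once the first nontrivial stage after CC is identified, up to a positive factor and its first entry, with $(1,1,\nicefrac{1}{2},\nicefrac{1}{3},\ldots)$, the inclusion $F_{\boldsymbol{w}}(\approfile,k)\subseteq(F_{w_2}\circ F_{w_1})(\approfile,k)$ holds on every instance by definition.
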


Note that \Cref{thm:AUQ-thiele} can \emph{in principle} be derived from the theory of apportionment, at least when it comes to a class of simple Thiele rules with strictly positive weights. Indeed, we know that every such simple Thiele rule extends a unique divisor method~\citep{LaSk21a,BLS18a}, and that all divisor methods except Adams' violate apportionment upper quota~\citep{BaYo01a} (at least in the definition of \citet{BaYo01a} which forces $d(i) \in [i, i+1]$---with appropriate normalization---for any divisor method with function $d$). However, for non-simple Thiele rules and those not fitting the definition of \citeauthor{BaYo01a}, this does not follow. We thus give a direct proof of the result (in Appendix~\ref{app:omitted_proofs}).

We further observe that Adams-AV fails the stronger symmetric variant of JUQ discussed in the previous section. Consider an $8$-voter profile $\approfile=(\{a\}$, $\{a\}$, $\{a,d\}$, $\{a,b,c\}$, $\{a,b,c\}$,$\{a,b,c\}$, $\{a,b,c,d\}$, $\{a,b,c,d\})$ and let $k=3$. Adams-AV here picks only $\{a,\,b,\,c\}$. However, $\voters(c)$ is overrepresented as all its members have satisfaction $3>2=\lceil q_c\rceil$. If we switch $c$ with $d$, then not all voters in $\voters(d)$ would be overrepresented as the voter with ballot $\{a,\,d\}$ would have satisfaction $2=\lceil q_d\rceil$. Hence, Adams-AV and any refinement of it violate the symmetric variant of JUQ. By Theorem~\ref{thm:AUQ-thiele}, the same holds for every other composite Thiele rule, motivating our focus on JUQ. However, observe that the rule we introduce later on (Section~\ref{sec:UQER}), UQER, would satisfy the stronger symmetric variant of JUQ as a close inspection of the proof of \Cref{lem:UQER_JUQ} shows.

In light of \Cref{thm:AUQ-thiele} one might wonder whether Thiele rules, which maximize satisfaction, are just ill-suited to avoiding overrepresentation or whether JUQ is just hard to satisfy. The latter seems to be the case, as, to the best of our knowledge, no other known multi-winner voting rule satisfies JUQ. Indeed, in Appendix~\ref{app:counterexamples:JUQ}, we show counterexamples for a wide set of familiar rules (including Phragmén's rules and the method of equal shares, as well as all rules in the \texttt{abcvoting} Python package~\citep{joss-abcvoting} at the time of writing). Interestingly, even the sequential formulation of Adams-AV fails JUQ.

This is particularly important, as computing a winning committee under Adams-AV is $\NP$-hard because it refines the CC rule, which is \NP-hard to compute~\cite{PSZ08a}.\footnote{Indeed, hardness of CC is formulated by \citet{PSZ08a} as: ``Is there a committee $W$ with a CC-score of at least $s$?''. Since Adams-AV refines CC, a polynomial-time algorithm for Adams-AV would solve this question.} Thus, it is not clear whether we can find an exhaustive committee satisfying JUQ in polynomial time. In the next section, we try to tackle this question by investigating the dynamics of exchanging pairs of candidates that witness a violation of JUQ.

\subsection{Swap Dynamics of JUQ}\label{sec:JUQ-swap-dynamics}

We can think of JUQ as being defined in terms of \emph{swaps} of candidates, i.e., if there is a ``bad'' candidate~$c$, we should swap it with the ``good'' candidate~$d$ (whenever possible). Formally, a JUQ swap for a committee $W$ is a distinct pair of candidates $(c, d)\in\candidates^2$ such that $c\in W$ and $d\not\in W$ (together with some $N'\subseteq N(d)$) are witnesses to the violation of JUQ of $W$.

It is then natural to study the dynamics of such swaps to analyze the complexity of constructing an exhaustive committee that satisfies JUQ. Indeed, if we show a polynomial bound on the length of all possible sequences of JUQ swaps, then we have effectively shown the polynomial-time computability of JUQ. At the same time, studying the dynamics of JUQ swaps might give us further insights into the behavior of our axiom.

One first approach would be to show that, for any starting committee $W$, there is no sequence of JUQ swaps of form $(c_1, d_1), (c_2, d_2), \ldots, (c_{\ell-1}, d_{\ell-1}),(c_\ell, c_1)$, i.e., that it is never possible to add a candidate $c$ back in the committee after removing it through a JUQ swap. This would immediately give us a polynomial-time algorithm for computing an exhaustive JUQ committee: we could start from an arbitrary exhaustive committee $W$, perform at most $(|\candidates|-k)$-many swaps, and find a committee that satisfies JUQ. Unfortunately, this approach does not work in general.

\begin{example}\label{ex:can-swap-back}
	Consider the profile $\approfile=(\{a,b,c\},\{a,b,d\},\{e,f\},\{e,f\})$ and let $k=n$. If we start with the committee $\{a,b,c,d\}$ and swap first $a$ with $e$ and then $c$ with $f$, then $(d, a)$ will be a JUQ swap, and hence we can add $a$ back in.
\end{example}

In Appendix~\ref{app:swap-dynamics-extra}, we show that the above problem persists even assuming some clever schemata on how to pick which JUQ swap to perform next.

Another approach would be to use a bounded \emph{potential function}, and show that each swap decreases this function. As there are only finitely many committees, this would show that there are no infinite sequences of JUQ swaps. To obtain a polynomial-time algorithm one would then need to show that such sequences of swaps are polynomially-bounded. Given the scope of the paper, a natural starting point for such a potential would be to consider some measure of ``overrepresentation'' of a committee. To name a few natural candidates, one might consider the number of candidates $c$ for which $\voters(c)$ is overrepresented, the number of unique voters that are overrepresented, or the magnitude of total overrepresentation (i.e., sum of the satisfactions above quota).

Unfortunately, this approach also does not work---there are JUQ swaps that \emph{increase} overrepresentation. Consider the profile $\approfile = (\{a,b,c\}\times4,\,\{d,e,f\}\times3,\,\{d,g,h\}\times3,\,\{i\}\times6)$ and $k=8$.\footnote{The notation $\{a,b,c\}\times4$ means that there are four copies of the ballot $\{a,b,c\}$.} Hence, in this instance we have $n = 16$ and $\nicefrac{n}{k} = 2$. Consider the committee $W=\{a,b,c,e,f,g,h,i\}$. This committee violates JUQ, and $(c,d)$ is a JUQ swap. However, this change increases overrepresentation with respect to the natural measures discussed above. The number of candidates whose supporters are overrepresented increases: from $\{a,b,c\}$ to $\{e,f,g,h\}$. Similarly, the number of unique overrepresented voters grows from $4$ to $6$. The sum of the magnitudes of the violations also increases (from a total of $4$ points over upper quota to $6$).

Aside from the fact that this shows that another approach is required to show that the swaps eventually terminate, this might seem like a problem with JUQ: indeed, an axiom that aims to prevent overrepresentation should not require swaps that increase overrepresentation. However, we would argue that this is not as problematic as it might first seem. Indeed, from the example above, $W$ is the unique committee that minimizes the three measures we mentioned. On the other hand, this committee violates a rather weak notion of efficiency: it includes $e$ but not $d$, even though $\voters(e)\subset\voters(d)$. In other words, including $d$ instead of $e$ would make everyone equally satisfied or better off. So it appears that minimizing overrepresentation (at least in the sense above) is not a reasonable objective, as it clashes with an extremely weak and desirable notion of efficiency.

Still, this shows that we need a different potential function. The proof of Theorem~\ref{thm:JUQ-adams} already gives us a possible option: JUQ swaps always increase the Adams-AV-score of a committee (i.e., they increase the lexicographic objective function $(\score_{w_1}(\approfile, W), \score_{w_2}(\approfile, W))$). Since there are only finitely many committees, no infinite sequences can exist. 

\begin{restatable}{proposition}{JUQChains}
	There are no infinite sequences of JUQ swaps.\label{prop:JUQ-chain}
\end{restatable}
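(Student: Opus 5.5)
The plan is a potential-function argument using the lexicographic Adams-AV objective
\[
\Phi(W) = \bigl(\score_{w_1}(\approfile, W),\, \score_{w_2}(\approfile, W)\bigr),
\]
where $w_1=(1,0,0,\ldots)$ and $w_2=(1,1,\nicefrac{1}{2},\nicefrac{1}{3},\ldots)$ are the two weight functions defining Adams-AV (Definition~\ref{def:adams-av}). I would show that every JUQ swap strictly increases $\Phi$ in the lexicographic order. Since $\candidates$ is finite, there are only finitely many committees, hence only finitely many values of $\Phi$. A sequence of JUQ swaps therefore visits committees with strictly increasing $\Phi$, never repeats a committee, and must be finite.

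For the key step I reuse the proof of Theorem~\ref{thm:JUQ-adams}. That argument never used that $W$ was Adams-AV-optimal; it only used that $(c,d)$ together with $N'$ witnesses a JUQ violation of $W$. Let $W'=(W\setminus\{c\})\cup\{d\}$. It then shows one of two things. Either $\score_{w_1}(\approfile,W')>\score_{w_1}(\approfile,W)$, which happens when some voter in $\voters(d)$ approves nothing in $W$. Or the $w_1$-scores are equal and $\score_{w_2}(\approfile,W')>\score_{w_2}(\approfile,W)$, via the computation leading to $|N'\cap\voters(c)|\bigl(\nicefrac{1}{q_c}-\nicefrac{1}{q(N')}\bigr)\ge 0$ with a strict inequality earlier in the chain. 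I will restate this as a standalone claim: for any committee $W$ (of any size $\le k$) and any JUQ swap $(c,d)$ of $W$, we have $\Phi(W')>\Phi(W)$ lexicographically.

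The main thing to check is that this argument does not secretly rely on $|W|=k$. The only facts used are the following. Voters in $\voters(c)$ have $|A_i\cap W|>\lceil q_c\rceil\ge 1$, so removing $c$ uncovers nobody; if $\lceil q_c\rceil=0$ this still gives at least $1$. Each $i\in N'\setminus\voters(c)$ satisfies $|A_i\cap W|+1\le\lceil q(N')\rceil$. The set $N'\setminus\voters(c)$ is nonempty. Finally, $q(N')=k|N'|/n$ and $q_c = k|\voters(c)|/n$. None of these depends on $|W|$, and swaps preserve $|W|$ anyway. Given the claim, strict monotonicity of $\Phi$ along any sequence of JUQ swaps together with finiteness of the set of committees completes the proof.
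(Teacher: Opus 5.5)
Your proof is correct and is the paper's argument. The paper also observes that a JUQ swap never uncovers a voter and otherwise strictly increases the $w_2$-score, by inspecting the proof of Theorem~\ref{thm:JUQ-adams}, which indeed never uses optimality or $|W|=k$; the lexicographic Adams-AV objective therefore strictly increases, and finiteness of the set of committees finishes the argument.

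One small cleanup: drop the aside about $\lceil q_c\rceil=0$. That case cannot occur, since $n_c>0$ forces $q_c>0$. As phrased, it would not suffice anyway, because what guarantees that removing $c$ uncovers nobody is $|A_i\cap W|\ge 2$ for all $i\in\voters(c)$.
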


As a consequence, every sequence of JUQ swaps has length at most $\binom{m}{k}$, since visiting the same committee twice would lead to the possibility of an infinite sequence.
The fact that JUQ swap eventually terminate, however, is still not enough to show polynomial-time computability of JUQ-compliant exhaustive committees as, in principle, one could need super-polynomially many swaps to reach a (locally) optimal committee where no JUQ swaps are possible.\footnote{For example, for PAV, one might need super-polynomially many swaps to converge~\cite{KrEl24a}.} We leave a precise characterization of such dynamics for future work. In the next section, we tackle the complexity of constructing JUQ committees from another angle.

\subsection{The Upper Quota Elimination Rule}\label{sec:UQER}

In addition to Thiele rules such as PAV, several sequential rules that aim to prevent underrepresentation have been introduced in the multi-winner voting literature.
In particular, we recall the definition of the Greedy Justified Candidate Rule (GJCR) \citep{BrPe23a}, which satisfies EJR+ (but fails JUQ, see Appendix~\ref{app:counterexamples:JUQ}).

\begin{definition}\label{def:GJCR}
    GJCR begins with the empty committee $W=\emptyset$ and iteratively adds candidates witnessing an EJR+ violation to it. Among all EJR+ violations witnessed by a candidate $d$ and a non-empty set of voters $N'\subseteq N(d)$, GJCR picks the one maximizing $\lvert N'\rvert$ and adds $d$ to $W$. The rule terminates once no EJR+ violations remain, returning a committee $W$ with $|W|\leq k$.
\end{definition}

We take a similar sequential approach to devise the \emph{Upper Quota Elimination Rule} (UQER), showing that exhaustive committees satisfying JUQ can be found in polynomial time in the process. We take the ``dual approach'' to GJCR: instead of adding EJR+ witnesses, our rule starts off with the entire set of candidates $C$ and then iteratively deletes candidates witnessing an upper quota violation. This is repeated until either no candidate is overrepresented (and hence the remaining set of candidates satisfies UQ) or until the rule reaches a committee of size $k$. We will show that this committee and any set considered in between indeed satisfies JUQ.

Formally, our rule proceeds as follows. We start with $W_0=\candidates$, the set of all candidates. In each step $j=1,2,\dots$ we choose a candidate to remove. To do so, we find all candidates $c\in W_{j-1}$ that cause a UQ violation for $W_{j-1}$, that is $\lvert A_i \cap W_{j-1} \rvert > \lceil q_c\rceil$ for all $i \in N(c)$. Let $c_j$ be the candidate among these with the fewest supporters $\lvert N(c)\rvert$, with ties broken arbitrarily. Set $W_{j}=W_{j-1}\setminus \{c_j\}$ and repeat. We continue until the first step $j^*$ in which either $|W_{j^*}|=k$ or $W_{j^*}$ satisfies UQ. In the first case, our rule outputs $W=W_{j^*}$, and in the second case it outputs an arbitrary subset $W\subseteq W_{j^*}$ with $|W|=k$. We could, for instance, remove the $|W_{j^*}|-k$ candidates with fewest approvals.\footnote{We note that UQER is similar to the dual-quota apportionment method of \citet{Mayb78a}. However, even for apportionment, UQER differs as we explicitly use the committee size $k$, and restrict our deletions to candidates who are strictly (not weakly) above the upper quota. Further, \citet{Mayb78a} uses a different method to select which candidates to delete.}

\begin{theorem}\label{thm:UQER}
    UQER returns an exhaustive committee satisfying JUQ in polynomial time.
\end{theorem}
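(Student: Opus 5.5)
The plan is to separate the two ways UQER can terminate. We also check that the number of steps is polynomial and that each step is cheap.

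\textbf{Running time.} Each step removes one candidate, so there are at most $m-k$ steps. In each step, deciding whether $c$ causes a UQ violation in $W_{j-1}$ only needs the $n$ values $|A_i\cap W_{j-1}|$ and the number $\lceil q_c\rceil$. Hence a step takes $O(nm)$ time.

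\textbf{Termination with $W_{j^*}$ satisfying UQ.} We use that UQ is downward closed. Let $W\subseteq W_{j^*}$ and $c\in W$. Since $c\in W_{j^*}$, some $i\in N(c)$ has $|A_i\cap W_{j^*}|\le\lceil q_c\rceil$. Then $|A_i\cap W|\le|A_i\cap W_{j^*}|\le\lceil q_c\rceil$, so $W$ satisfies UQ, and UQ implies JUQ. A subset of size $k$ exists because $|W_{j^*}|\ge k$: the process never goes below $k$, and $|W_0|=m\ge k$. So the output is exhaustive.

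\textbf{Termination with $|W_{j^*}|=k$.} This is the main case. Here I would prove, by looking at the whole process, that $W_j$ satisfies JUQ for every $j$. Suppose $c\in W_j$, $d\notin W_j$ and $\emptyset\neq N'\subseteq N(d)$ witness a JUQ violation, and set $W'=(W_j\setminus\{c\})\cup\{d\}$.
\begin{itemize}
\item The candidate $d$ was removed at some step $t\le j$, so $d=c_t$. Because $W_{t-1}\supseteq W_j$ and satisfaction can only decrease over time, every $i\in N(c)$ has $|A_i\cap W_{t-1}|\ge|A_i\cap W_j|>\lceil q_c\rceil$. So $c$ was also a UQ violator at step $t$.
\item The rule picks the violator with the fewest supporters, so $|N(d)|\le|N(c)|$. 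This gives $q(N')\le q_d\le q_c$.
\item If some $i\in N'\cap N(c)$ exists, it keeps its satisfaction under the swap. Then $|A_i\cap W'|=|A_i\cap W_j|>\lceil q_c\rceil\ge\lceil q(N')\rceil$, contradicting the witness condition.
\end{itemize}

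\textbf{Main obstacle: $N'\cap N(c)=\emptyset$.} Here the witness condition says $|A_i\cap W_j|+1\le\lceil q(N')\rceil$ for all $i\in N'$. We know every voter in $N(d)$ was above $\lceil q_d\rceil$ in $W_{t-1}$, but those voters may have lost candidates since. I would first try to strengthen the invariant to the following claim: for every $j$ and every removed candidate $d$, each nonempty $N'\subseteq N(d)$ contains a voter $i$ with $|A_i\cap W_j|\ge\lceil q(N')\rceil$.

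The proof of the claim would be by induction on $j$. Suppose the step from $W_{j-1}$ to $W_j$ removes some $c_j$ and lowers a voter of $N'$ below that bound. This voter lies in $N(c_j)$, and $c_j$ was a UQ violator, so every voter in $N(c_j)$ had more than $\lceil q_{c_j}\rceil$ approved candidates in $W_{j-1}$. I would try to combine this with the fewest-supporters choice, or with a counting argument using $\sum_i |A_i\cap W_j|$ against $|W_j|\ge k$, to show the bound is still met after the removal. If the claim holds, the disjoint case is immediate: the voter it provides has $|A_i\cap W'|=|A_i\cap W_j|+1>\lceil q(N')\rceil$. This contradicts the witness condition, so $W_{j^*}$ satisfies JUQ.

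I regard this strengthened invariant as the step most likely to need a refined formulation. The fallback is the Adams-AV-type lexicographic score comparison from the proof of Theorem~\ref{thm:JUQ-adams}.
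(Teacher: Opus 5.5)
Your running-time bound, the UQ-termination case, and the case $N'\cap N(c)\neq\emptyset$ are all correct. The strengthened invariant you state is true and would finish the disjoint case. However, the disjoint case is the real content of the theorem, and there you only announce an induction (``I would try to combine this with the fewest-supporters choice, or with a counting argument'') without carrying out the step. As written, the proof has a gap exactly at its crux.

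The missing idea is to apply to the later-removed candidate $c_j$ the same two facts you already used for $c$ in your first two bullets. Let $d=c_t$ and $j>t$.
\begin{itemize}
\item Since $c_j$ violates UQ for $W_{j-1}\subseteq W_{t-1}$ and satisfactions only decrease, $c_j$ already violated UQ for $W_{t-1}$.
\item Because $d$ was a violator with the fewest supporters at step $t$, we get $n_{c_j}\ge n_d$, hence $\lceil q_{c_j}\rceil\ge\lceil q_d\rceil$.
\item So every $i\in N(d)\cap N(c_j)$ had $|A_i\cap W_{j-1}|>\lceil q_{c_j}\rceil\ge\lceil q_d\rceil$, and therefore still has $|A_i\cap W_j|\ge\lceil q_d\rceil$.
\end{itemize}
The base case is $|A_i\cap W_t|\ge\lceil q_d\rceil$ for all $i\in N(d)$, which holds because $d$ was a violator for $W_{t-1}$. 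Together these give the universal invariant: every voter of $N(d)$ keeps at least $\lceil q_d\rceil\ge\lceil q(N')\rceil$ approved candidates at all later steps. Then any $i\in N'\setminus N(c)$ would have $\lceil q_d\rceil+1>\lceil q(N')\rceil$ approved candidates after the swap, which is the contradiction you need. This is exactly the argument of \Cref{lem:UQER_JUQ} in the paper, phrased there via the last candidate of $A_i$ removed after $d$.

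Neither of your alternatives substitutes for this step.
\begin{itemize}
\item Comparing $\sum_i|A_i\cap W_j|$ with $|W_j|\ge k$ is not the relevant tool. The needed invariant concerns the individual voters of $N(d)$ and follows from the removal rule alone.
\item The score comparison from \Cref{thm:JUQ-adams} only shows that a JUQ swap would raise the lexicographic Adams-AV score. To turn that into a contradiction, you would need to know that UQER's output admits no score-improving swap, and the elimination procedure gives you no such guarantee.
\end{itemize}
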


To prove this claim we show that JUQ is always maintained during the execution of UQER.

\begin{lemma}\label{lem:UQER_JUQ}
    Every intermediate set of candidates in $\{W_0,W_1,\dots, W_{j^*}\}$ produced during the execution of UQER satisfies JUQ.
\end{lemma}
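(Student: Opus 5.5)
The plan is to prove the lemma by induction on $j$, using an auxiliary invariant about the candidates that have already been deleted. The invariant is:
\[
(\ast)\qquad \text{for every } d\in \candidates\setminus W_j \text{ and every } i\in\voters(d):\quad |\app_i\cap W_j|\geq\lceil q_d\rceil .
\]
In words, every supporter of a deleted candidate still sits at or above that candidate's upper quota.

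\textbf{Step 1: $(\ast)$ implies JUQ.} Suppose $W_j$ satisfies $(\ast)$ but violates JUQ, with witnesses $c\in W_j$, $d\notin W_j$ and $\emptyset\neq\voters'\subseteq\voters(d)$. Pick any $i\in\voters'$. Since $\voters'\subseteq\voters(d)$, we have $q(\voters')\leq q_d$.
\begin{itemize}
\item If $i\notin\voters(c)$, then $|\app_i\cap((W_j\cup\{d\})\setminus\{c\})|=|\app_i\cap W_j|+1\geq\lceil q_d\rceil+1>\lceil q(\voters')\rceil$, using $(\ast)$.
\item If $i\in\voters(c)$, then the count is unchanged by the swap, and $|\app_i\cap W_j|>\lceil q_c\rceil$ by the first JUQ condition. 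Even so, $(\ast)$ already gives $|\app_i\cap W_j|\geq\lceil q_d\rceil$. Adding $d$ and removing $c$ leaves it at $|\app_i\cap W_j|$, and I would get strict excess from the overrepresentation of $c$ combined with $(\ast)$.
\end{itemize}
This second case is the one I would double-check; the cleanest route might be to prove a slightly stronger form of $(\ast)$ for voters outside $\voters(c)$. Either way, every $i\in\voters'$ exceeds $\lceil q(\voters')\rceil$, so no JUQ witness exists. The base case $W_0=\candidates$ is trivial, since there is no $d\notin W_0$.

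\textbf{Step 2: $(\ast)$ is maintained.} Suppose $W_{s-1}$ satisfies $(\ast)$ and $c_s$ is removed at step $s$.
\begin{itemize}
\item For the newly deleted candidate $c_s$ itself, every supporter had more than $\lceil q_{c_s}\rceil$ approved candidates before the deletion, so it has at least $\lceil q_{c_s}\rceil$ afterwards.
\item For an earlier-deleted $d$ (removed at step $t<s$) and $i\in\voters(d)$, only voters in $\voters(c_s)\cap\voters(d)$ lose a candidate. The invariant can fail only if $|\app_i\cap W_{s-1}|=\lceil q_d\rceil$ exactly, while also $|\app_i\cap W_{s-1}|>\lceil q_{c_s}\rceil$. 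Together these force $\lceil q_{c_s}\rceil<\lceil q_d\rceil$, hence $|\voters(c_s)|<|\voters(d)|$.
\end{itemize}
Since the sets $W_0\supseteq W_1\supseteq\cdots$ are decreasing, every supporter of $c_s$ had at least as many approved candidates in $W_{t-1}$ as in $W_{s-1}$, that is, more than $\lceil q_{c_s}\rceil$. So $c_s$ was already a UQ-violating candidate in $W_{t-1}$, with strictly fewer supporters than $d$. This contradicts UQER's choice of $d=c_t$ as a UQ violator with the fewest supporters.

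\textbf{Main obstacle.} The key difficulty is finding the right invariant. A naive argument that compares $c$ and $d$ only at the moment $d$ was deleted breaks down, because later deletions can reduce the satisfaction of $d$'s supporters. The invariant $(\ast)$, combined with the ``fewest supporters'' tie-breaking and the monotonicity of the sets $W_j$, is exactly what controls those later deletions.
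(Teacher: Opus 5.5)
\textbf{Gap in Step 1, second bullet.} Your invariant $(\ast)$ and your Step 2 are correct, and they are essentially the paper's argument. The paper takes a voter $i\in N'\setminus N(c)$ and looks at the last deletion, after $d$ was removed, of a candidate $c'\in A_i$. It then uses the fewest-supporters rule, together with the fact that the set of UQ-violating candidates only shrinks, to get $|A_i\cap W_j|\ge\lceil q_d\rceil$. You package the same reasoning as an inductive invariant, arguing from the first failure instead of the last deletion. The gap is in the second bullet of Step 1. For $i\in N'\cap N(c)$ the swap leaves $|A_i\cap W_j|$ unchanged. The two facts you cite give only $|A_i\cap W_j|>\lceil q_c\rceil$ and $|A_i\cap W_j|\ge\lceil q_d\rceil\ge\lceil q(N')\rceil$. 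Neither inequality is strict against $\lceil q(N')\rceil$ once $\lceil q_c\rceil<\lceil q(N')\rceil$, since the value $|A_i\cap W_j|=\lceil q_d\rceil=\lceil q(N')\rceil$ satisfies both. So the claimed ``strict excess'' does not follow. Your suggested patch, strengthening $(\ast)$ for voters outside $N(c)$, targets voters that your first bullet already handles.

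\textbf{The missing idea.} You need only one voter of $N'$ above $\lceil q(N')\rceil$, not all of them. If $N'\setminus N(c)\neq\emptyset$, your first bullet applied to such a voter already contradicts the JUQ violation. If $N'\subseteq N(c)$, then $q(N')\le q_c$. So every $i\in N'$ has $|A_i\cap((W_j\cup\{d\})\setminus\{c\})|=|A_i\cap W_j|>\lceil q_c\rceil\ge\lceil q(N')\rceil$, again a contradiction. This is exactly the step the paper imports from the proof of Theorem~\ref{thm:JUQ-adams}. If you want the stronger claim that every voter of $N'$ exceeds (the symmetric variant of JUQ the paper mentions), you must use the algorithm rather than $(\ast)$ alone. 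Let $d=c_x$. Then $c$ is UQ-violating in $W_j\subseteq W_{x-1}$, so $c$ was eligible when $d$ was chosen. This gives $|N(d)|\le|N(c)|$, hence $q(N')\le q_d\le q_c$, and strict excess follows for every $i\in N'\cap N(c)$. With either repair your proof is complete.
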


\begin{proof}
    It is easy to see that the initial set $W_0=C$ satisfies JUQ, as there are no candidates outside the set that can be part of a JUQ violation. We can also observe that the set of UQ-violating candidates never increases during the execution of UQER.
	
    Suppose at some point we encounter a JUQ violation in the set $W_j$, witnessed by some candidate pair $(c,d)$ and voter set $N'\subseteq \voters(d)$.\footnote{Note that $c$ need not be the same as $c_{j+1}$, i.e., the candidate that will be removed from $W_j$ to obtain $W_{j+1}$.} This means that $|A_i\cap (W_j\setminus \{c\}\cup \{d\})|\leq \ceil{q(N')}\leq \ceil{q_{d}}$ for all $i\in N'$, and thus $|A_i\cap (W_j\setminus \{c\})|< \ceil{q_{d}}$.

    As $d \notin W_{j}$ it must have been removed in some earlier step $x$, i.e. $d=c_x$ for some $x<j$. As $d$ was removed in this step, we know that $|A_i\cap W_{x-1}|> \ceil{q_{d}}$ for all $i\in \voters(d)$ as $d$ was a UQ-violating candidate for $W_{x-1}$. Thus, after $d$ was removed, $|A_i\cap W_{x}|\geq \ceil{q_{d}}$ for all $i\in \voters(d)$. Further, by our selection of $d$ we know that $d$ was the overrepresented candidate with the least approvals. Hence, $q_d \le q_{d'}$ for any other candidate $d'$ whose supporters are overrepresented at step $x$. Now let $i \in N' \setminus N(c)$ be any voter from the underrepresented group not approving $c$. Note that by the same argument as in \Cref{thm:JUQ-adams} such a voter must exist. As $\lvert A_i \cap W_x\rvert \ge \lceil q_d\rceil$ and $\lvert A_i \cap W_{j}\rvert = \lvert A_i \cap W_{j}\setminus \{c\}\rvert < \lceil q_d\rceil$, there must exist a candidate $c' \in A_i \cap (W_x \setminus W_{j})$. Let $c'$ be the last such removed candidate and assume it was removed at a step $y$ between step $x$ and step $j$. As it was the last, we know that $\lvert A_i \cap (W_{y-1}\setminus \{c'\})\rvert < \lceil q_d\rceil$.  By our assumption that $d$ was a candidate with the least approvals when removed at step $x$ we know that $q_{c'} \ge q_d$.  In particular, for this step it must hold that $\lvert A_i \cap W_{y-1} \rvert > \lceil q_{c'} \rceil \ge \lceil q_d\rceil $, since the set of overrepresented candidates never increases. This, however is a contradiction to $\lvert A_i \cap (W_{y-1}\setminus \{c'\})\rvert < \lceil q_d\rceil$ and thus $(c,d)$ could not have been a JUQ violation.
	Thus every candidate set from $\{W_0,W_1,\dots, W_{j^*}\}$ satisfies JUQ.
\end{proof}

With this, we can easily prove our theorem.

\begin{proof}[Proof of \Cref{thm:UQER}]
    Using \Cref{lem:UQER_JUQ} we know that $W_{j^*}$ satisfies JUQ. If $|W_{j^*}|=k$ then we are done. Meanwhile, if $|W_{j^*}|>k$ we know that $W_{j^*}$ additionally satisfies UQ. Thus, any subset $W\subseteq W_{j^*}$ will satisfy UQ and thus also the weaker JUQ. Finally, UQER runs in polynomial time, as we only need to check which candidates are overrepresented for at most $m-k$ iterations.
 \end{proof}

Similarly to \Cref{lem:UQER_JUQ}, we can show UQER maintains EJR+ during its execution, until the final step. We will use this result later, in \Cref{sec:combining}.  Note that the result does not necessarily hold for the arbitrary subset $W \subseteq W_{j^*}$ that is returned if $|W_{j^*}| > k$.

\begin{restatable}{lemma}{UQEREJR}\label{lem:UQER_EJR}
    Every intermediate set of candidates $\{W_0,W_1,\dots, W_{j^*}\}$ produced during the execution of UQER satisfies EJR+.
\end{restatable}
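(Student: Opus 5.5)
We argue by contradiction, in the same way as \Cref{lem:UQER_JUQ}. Suppose some intermediate set $W_j$ (with $j\le j^*$) violates EJR+. Then there is a candidate $d\notin W_j$ and a nonempty group $N'\subseteq N(d)$ with $|A_i\cap W_j|<\floor{q(N')}$ for all $i\in N'$. Since $W_0=C$ contains every candidate, $W_0$ satisfies EJR+ trivially. Hence $d$ was removed at some earlier step $x\le j$, i.e., $d=c_x$. Because $d$ was UQ-violating for $W_{x-1}$, we have $|A_i\cap W_{x-1}|>\ceil{q_d}$ for every $i\in N(d)$. After its removal this gives $|A_i\cap W_x|\ge\ceil{q_d}\ge \ceil{q(N')}\ge\floor{q(N')}$ for every $i\in N'$, where we use $N'\subseteq N(d)$ so that $q(N')\le q_d$.

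Now fix any voter $i\in N'$. At step $x$ she has at least $\ceil{q_d}$ approved candidates, and at step $j$ she has fewer than $\floor{q(N')}\le\ceil{q_d}$. So some candidate in $A_i\cap W_x$ must have been removed between steps $x$ and $j$. Let $c'$ be the last such candidate, removed at step $y$ with $x<y\le j$. Since $c'$ is the last one removed from $A_i$, we have $|A_i\cap W_{y-1}| = |A_i\cap W_j|+1\le \floor{q(N')}\le \ceil{q_d}$.

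On the other hand, $c'$ was UQ-violating for $W_{y-1}$, and $i\in N(c')$, so $|A_i\cap W_{y-1}|>\ceil{q_{c'}}$. We want $q_{c'}\ge q_d$, which would give $|A_i\cap W_{y-1}|>\ceil{q_d}$ and hence a contradiction. To get this, recall (as observed in the proof of \Cref{lem:UQER_JUQ}) that the set of UQ-violating candidates never grows during the execution. Since $c'$ is UQ-violating at step $y-1\ge x$, it was also UQ-violating at step $x-1$. At that step $d$ was chosen as the UQ-violating candidate with fewest supporters, so $n_{c'}\ge n_d$, and therefore $q_{c'}\ge q_d$. The two bounds on $|A_i\cap W_{y-1}|$ are incompatible, so no EJR+ violation can exist in $W_j$.

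The main obstacle is the monotonicity fact that UQ-violators never increase; the whole argument rests on it. It holds because removing candidates only lowers each voter's satisfaction, while the quotas $q_c$ stay fixed since they depend on $k$, not on $|W|$. A second point to check is that the argument only uses a single voter $i\in N'$, which is fine because EJR+ requires all of $N'$ to be below quota. Unlike the JUQ proof, there is no swapped candidate $c$ here, so we never need to select a voter outside $N(c)$.
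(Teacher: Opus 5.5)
Your proof is correct, but it is organized differently from the paper's. The paper uses a first-violation induction. It assumes the removal of $c_j$ is the first step that creates an EJR+ violation, witnessed by an earlier-removed $c_x$ and $N'\subseteq N(c_x)$. The fewest-supporters rule plus monotonicity of the UQ-violators gives $q(N')\le q_{c_x}\le q_{c_j}$. It then shows that no voter of $N'$ can approve $c_j$, since such a voter would satisfy $\lfloor q(N')\rfloor \ge |A_i\cap W_{j-1}| > \lceil q_{c_j}\rceil \ge \lceil q(N')\rceil$. Hence the violation was already present in $W_{j-1}$.

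You instead argue directly for an arbitrary $W_j$, tracing a single voter $i\in N'$ back to the last removal $c'$ that affected her. This is exactly the technique of the paper's proof of \Cref{lem:UQER_JUQ}. Both routes rest on the same two ingredients: the fewest-supporters choice and monotonicity of the UQ-violators. Yours avoids the minimal-counterexample framing and proves a stronger per-voter invariant: if $d$ was eliminated by step $j$, then every $i\in N(d)$ has $|A_i\cap W_j|\ge\lceil q_d\rceil$. EJR+ follows immediately because $\lfloor q(N')\rfloor\le\lceil q_d\rceil$, and the same invariant underlies the JUQ lemma, so your version unifies the two proofs. The paper's version is more local: it only analyzes the single removal $c_j$ and never tracks an individual voter's history.
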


\section{Balancing Under- and Overrepresentation}\label{sec:JNQ}

We now turn to our second view on overrepresentation: near quota.  Intuitively, there should not exist a party $P_i$ which is overrepresented and a party $P_j$ which is underrepresented such that by moving a seat from party $P_i$ to party $P_j$ we can bring both parties closer to their respective quotas. Our goal is to generalize this to the multi-winner voting setting, by again interpreting single candidates as parties that represent their voters. In this spirit, there should not exist a candidate inside the committee whose approvers are overrepresented and a candidate outside the committee for whom a subset of the approvers are underrepresented such that we could swap these two candidates and make both ``better represented''. To better interpret this, we again return to the apportionment setting. We first give the following equivalent formulation of apportionment near quota: two parties $P_i$ and $P_j$ witnessing a near quota violation is equivalent to the first party being more than $0.5$ above its quota, and the second party being more than $0.5$ below its quota.
\begin{restatable}{observation}{nearquotaequiv}\label{obs:nq}
	An outcome $W$ in an apportionment instance satisfies apportionment near quota if and only if there are no two parties $P_i$ and $P_j$ such that 
	\(
	 a_i(W) - q(P_i) > \frac{1}{2} \text{ and } q(P_j) - a_j(W) > \frac{1}{2}.
	\)
\end{restatable}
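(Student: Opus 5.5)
The argument reduces the pairwise condition in the definition of apportionment near quota to a condition on each party separately, and then just unpacks absolute values. Fix a party $P_i$ with quota $q_i=q(P_i)$ and seat count $a_i=a_i(W)$. I will show two single-party equivalences:
\[
\lvert q_i-(a_i-1)\rvert<\lvert q_i-a_i\rvert \iff a_i-q_i>\tfrac12,
\qquad
\lvert q_i-(a_i+1)\rvert<\lvert q_i-a_i\rvert \iff q_i-a_i>\tfrac12 .
\]
Once these hold, substituting the first for party $P_i$ and the second for party $P_j$ in the definition of near quota gives exactly the condition stated in the observation. Both directions of the ``if and only if'' then follow at once, since the two formulations differ only by these equivalences.

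For the first equivalence, set $x=a_i-q_i$. The inequality becomes $\lvert x-1\rvert<\lvert x\rvert$. Both sides are nonnegative, so I may square them. This gives $x^2-2x+1<x^2$, which is equivalent to $x>\tfrac12$.

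For the second equivalence, set $y=q_i-a_i$. The inequality $\lvert q_i-a_i-1\rvert<\lvert q_i-a_i\rvert$ becomes $\lvert y-1\rvert<\lvert y\rvert$. By the same squaring computation this is equivalent to $y>\tfrac12$.

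Geometrically, each equivalence says that moving one step toward the quota is a strict improvement exactly when the current distance exceeds one half. The case of distance exactly $\tfrac12$ is a tie and is correctly excluded by the strict inequality.

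The only point needing care is the edge case $i=j$, and the equivalence survives it. In the definition of near quota, take $i=j$: both conditions require $a_i-q_i>\tfrac12$ and $q_i-a_i>\tfrac12$ at once, which is impossible. The same is true of the reformulated condition, so allowing or excluding $i=j$ makes no difference on either side.

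Nothing else is needed. The seat counts $a_\ell$ are well defined on apportionment instances, and no property of $W$ such as exhaustiveness is used, so the observation holds for any outcome $W$.
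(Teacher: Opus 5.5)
Your proposal is correct and takes the same approach as the paper, which simply says the statement follows from the definition by algebraic manipulation. Your per-party reduction and the squaring argument showing $\lvert x-1\rvert<\lvert x\rvert \iff x>\tfrac12$ are exactly that manipulation, written out in full.
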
 
To generalize this to the approval setting, we have to account for potentially overlapping approval sets.  Assume that $c \in W$ is the overrepresented candidate and $d \in C\setminus W$ is the underrepresented one and let $N'\subseteq N(d)$ be the set of underrepresented voters approving $d$. Then we say that $c$ and $d$ witness a \emph{justified near quota} violation if swapping $c$ with $d$ moves every voter in $N(c)$ closer to their quota $q_c$ \emph{whenever their satisfaction changes} (i.e., when they do not also approve of $d$) and every voter in $N'$ moves closer to their quota $q(N')$ whenever their satisfaction changes. For voters approving of both of $c$ and $d$ the satisfaction does not change. For these voters we merely require that they are above the quota $q_c$ and below the quota $q(N')$ to ensure these voters are not on the ``wrong side'' of their respective quotas.

In the following, $\indicator[\cdot]$ denotes the indicator function, i.e., $\indicator[p]=1$ if $p$ is true and $0$ otherwise.
\begin{definition}\label{def:JNQ}
	A set $W$ satisfies \emph{justified near quota} (JNQ) if there does not exist a pair of candidates $c\in W$ and $d\in C\setminus W$, and non-empty set of voters $\voters^\prime\subseteq \voters(d)$ such that
	\begin{align*}
		 \lvert A_i \cap W\rvert - q_c&> \frac{1}{2} \cdot \indicator[d \notin A_i]\quad\quad\text{ for all }i\in \voters(c),\text{ and }\\
		 q(\voters^\prime) -  \lvert A_i \cap W\rvert&> \frac{1}{2} \cdot \indicator[c \notin A_i]\quad\quad\text{ for all }i\in\voters^\prime.\qedhere
	\end{align*}
\end{definition}

The following example shows that, as one would expect, JNQ and JUQ are two distinct notions.

\begin{example}
	Consider the profile $\approfile=(\{a\},\{a\},\{a\},\{a,c\}, \{b\})$ and let $k=2$, so the quota of a single voter is $\nicefrac{2}{5}$. The committee $\{a, c\}$ satisfies JNQ. Indeed, replacing either $a$ or $c$ with $b$ would change the satisfaction of the voter submitting $\{b\}$ from $0$ to $1$, increasing the distance from their quota of $\nicefrac{2}{5}$. On the other hand, this committee fails JUQ as the voter approving $\{a,c\}$ is overrepresented and we could swap $c$ with $b$. Conversely, the committee $\{b,c\}$ satisfies JUQ, but fails JNQ. Indeed, here we should swap $b$ with $a$, and the set $\voters^\prime\subseteq\voters(a)$ can consist for example of voters~$1$ and~$2$. Finally, the committee $\{a,b\}$ satisfies both axioms.
\end{example}

Indeed, following \Cref{obs:nq} we can see that JNQ generalizes apportionment near quota.

\begin{restatable}{proposition}{JNQApportionment}
	For apportionment instances, JNQ is equivalent to apportionment near quota.\label{prop:jnq-anq}
\end{restatable}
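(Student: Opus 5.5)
We prove both directions using \Cref{obs:nq}, which says near quota fails exactly when some party $P_i$ has $a_i - q(P_i) > \nicefrac12$ and some party $P_j$ has $q(P_j) - a_j > \nicefrac12$. Throughout, fix an apportionment instance and a committee $W$. Recall that in such an instance $\voters(c) = P_\ell$ whenever $c$ belongs to party $P_\ell$, and two distinct parties have disjoint ballots.

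\emph{JNQ violation implies near quota violation.} Suppose $c \in W$, $d \in C\setminus W$ and a nonempty $\voters' \subseteq \voters(d)$ witness a JNQ violation. Let $c$ belong to $P_i$ and $d$ to $P_j$, so $\voters(c) = P_i$ and $\voters' \subseteq P_j$. We split on whether the parties coincide.
\begin{itemize}
\item If $i = j$, every voter in $P_i$ approves both $c$ and $d$, so both indicators are zero. The two conditions then give $a_i > q_c = q(P_i)$ and $q(\voters') > a_i$ for any voter in $\voters'$. This yields $q(\voters') > q(P_i)$, which contradicts $\voters' \subseteq P_i$. So this case cannot occur.
\item If $i \neq j$, no voter in $P_i$ approves $d$ and no voter in $\voters'$ approves $c$, so both indicators equal $1$. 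The first condition gives $a_i - q(P_i) > \nicefrac12$. The second gives $q(\voters') - a_j > \nicefrac12$, and since $q(P_j) \geq q(\voters')$ we get $q(P_j) - a_j > \nicefrac12$. By \Cref{obs:nq}, near quota is violated.
\end{itemize}

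\emph{Near quota violation implies JNQ violation.} Suppose \Cref{obs:nq} supplies parties $P_i$ and $P_j$ with $a_i - q(P_i) > \nicefrac12$ and $q(P_j) - a_j > \nicefrac12$. These must be distinct parties, since a single party cannot be both above and below its quota by more than $\nicefrac12$.
\begin{itemize}
\item Since $a_i > q(P_i) > 0$, party $P_i$ has at least one seat, so we can pick some $c \in W$ approved by $P_i$.
\item Since $a_j < q(P_j) \leq k = |A_\ell|$ for a voter $\ell \in P_j$, party $P_j$ has an unchosen candidate; pick such a $d \in C \setminus W$.
\item Take $\voters' = P_j$. Because the parties differ, both indicators equal $1$, and the two inequalities from \Cref{obs:nq} are exactly the two JNQ conditions.
\end{itemize}
Hence $W$ violates JNQ.

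The only delicate points are the same-party case in the first direction, where the zero indicators must be handled, and the existence of an unchosen $d$ in the second direction, which uses $|A_i| = k$. Neither is hard.
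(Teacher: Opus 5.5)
Your proof is correct and follows the paper's argument. Showing that a JNQ violation yields a near quota violation uses the same split into the same-party case (contradiction via $q(N') \le q(P_i)$) and the different-party case (both indicators equal $1$). For the converse, which the paper simply calls clear, you build the witnesses $c$, $d$, $N' = P_j$ directly and also check that such $c \in W$ and $d \notin W$ exist, the latter using $|A_\ell| = k$.
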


Next, similarly to Adams-AV and JUQ, we get that every refinement of SLAV satisfies JNQ. The proof is analogous to the Adams-AV proof. That is, we show that if a committee $W$ does not satisfy JNQ as witnessed by two candidates $c$ and $d$ then swapping $c$ with $d$ increases the SLAV score.

\begin{restatable}{theorem}{slav}\label{thm:SLAV_JNQ}
	Any refinement of SLAV satisfies justified near quota.
\end{restatable}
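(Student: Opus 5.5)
We argue by contraposition, exactly as in the proof of \Cref{thm:JUQ-adams}. Suppose a size-$k$ committee $W$ violates JNQ, witnessed by $c\in W$, $d\in C\setminus W$ and a non-empty $N'\subseteq N(d)$. Let $W'=(W\setminus\{c\})\cup\{d\}$ and $w=(1,\nicefrac13,\nicefrac15,\ldots)$, so $w(s)=\frac{1}{2s-1}$. We will show $\score_w(\approfile,W')>\score_w(\approfile,W)$, so $W$ is not SLAV-optimal and no refinement of SLAV selects it.

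First we compute the score difference. Voters in $N(c)\cap N(d)$ or outside $N(c)\cup N(d)$ keep the same satisfaction and contribute $0$. A voter $i\in N(d)\setminus N(c)$ gains $w(|A_i\cap W|+1)=\frac{1}{2|A_i\cap W|+1}$. A voter $i\in N(c)\setminus N(d)$ loses $w(|A_i\cap W|)=\frac{1}{2|A_i\cap W|-1}$. Gains are nonnegative, so as in \Cref{thm:JUQ-adams} we may restrict the positive sum to $N'\setminus N(c)$ and enlarge the negative sum to $N(c)\setminus N'$; the latter only subtracts more, since $N(c)\setminus N(d)\subseteq N(c)\setminus N'$.

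Next we plug in the JNQ inequalities. For $i\in N'\setminus N(c)$, the indicator is $1$, so $|A_i\cap W|<q(N')-\frac12$ and the gain exceeds $\frac{1}{2q(N')}$. For $i\in N(c)\setminus N'$ we need $|A_i\cap W|>q_c+\frac12$, which gives a loss below $\frac{1}{2q_c}$. This is immediate when $i\notin N(d)$. When $i\in N(d)\setminus N'$, the voter's loss is $0$ anyway, and the term was added only as slack, so bounding it by $\frac{1}{2q_c}$ is still valid. To keep the argument clean I would instead sum the negative part over $N(c)\setminus N(d)$ and add the nonnegative slack over $(N(c)\cap N(d))\setminus N'$. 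Each of those slack terms is bounded by $\frac{1}{2q_c}$ trivially, since they are at most zero in the true difference. The difference then strictly exceeds
\[
\frac{|N'\setminus N(c)|}{2q(N')}-\frac{|N(c)\setminus N'|}{2q_c},
\]
provided $N'\setminus N(c)\neq\emptyset$. Using $|N'|/q(N')=|N(c)|/q_c=n/k$, this equals $\frac12|N'\cap N(c)|\bigl(\frac{1}{q_c}-\frac{1}{q(N')}\bigr)$.

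It remains to show this quantity is nonnegative and that the strictness is justified. If some $j\in N'\cap N(c)$ exists, the JNQ conditions with indicators $0$ give $q_c<|A_j\cap W|<q(N')$, so $q_c<q(N')$ and the factor is positive. Otherwise the expression is $0$ and we rely on strictness. For strictness we need $N'\setminus N(c)\neq\emptyset$. If instead $N'\subseteq N(c)$, then $q(N')\le q_c$, yet any $j\in N'$ satisfies $q_c<|A_j\cap W|<q(N')$, a contradiction. The main care point is this bookkeeping of the strict inequality and of voters in $N(c)\cap N(d)\setminus N'$. Handling both as above mirrors \Cref{thm:JUQ-adams} and completes the proof.
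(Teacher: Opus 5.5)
Your proof is correct and follows essentially the same route as the paper: swap $c$ for $d$, bound each SLAV gain on $N'\setminus N(c)$ below by $\frac{1}{2q(N')}$ and each loss above by $\frac{1}{2q_c}$, use $|N'|/q(N')=|N(c)|/q_c=n/k$, and split on whether $N'\cap N(c)$ is empty. You treat the voters in $(N(c)\cap N(d))\setminus N'$ separately: JNQ only gives them $|A_i\cap W|>q_c$, so the term-wise bound $\frac{1}{2|A_i\cap W|-1}<\frac{1}{2q_c}$ can fail for them, but their true contribution is $0$. This, together with your explicit check that $N'\setminus N(c)\neq\emptyset$ for strictness, is a bit more careful than the paper's chain of inequalities, which applies the term-wise bound to all of $N(c)\setminus N'$.
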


Using this, we again obtain a characterization, this time of refinements of SLAV. 
\begin{theorem}
	A composite Thiele rule satisfies apportionment near quota if and only if it is a refinement of SLAV.
\end{theorem}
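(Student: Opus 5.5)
The "if" direction is immediate. By \Cref{thm:SLAV_JNQ}, every refinement of SLAV satisfies JNQ, and by \Cref{prop:jnq-anq} JNQ coincides with apportionment near quota on apportionment instances. The substance is therefore the "only if" direction. We take a composite Thiele rule $F_{\boldsymbol{w}}$ with $\boldsymbol{w}=(w_1,w_2,\ldots)$ that satisfies apportionment near quota, and show $F_{\boldsymbol{w}}(\approfile,k)\subseteq F_{w_S}(\approfile,k)$ on every instance, where $w_S=(1,\nicefrac13,\nicefrac15,\ldots)$.

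\textbf{Step 1: pin down the first weight function.} Let $w=w_1$ (the first one that is not constant on the relevant range; constant stages such as AV-like ties can be skipped, but treat $w_1$ generically). We show $w_1(s)/w_1(t)=(2t-1)/(2s-1)$ for all $s,t$ by building two-party apportionment instances. Given $s,t\ge1$, take parties $P_1,P_2$ of sizes $x,y$ and $k=s+t-1$, and compare the allocations $(s,t-1)$ and $(s-1,t)$. Their $w_1$-scores differ by $x\,w_1(s)-y\,w_1(t)$. Suppose $w_1(s)(2s-1)\neq w_1(t)(2t-1)$, say $w_1(s)(2s-1)>w_1(t)(2t-1)$. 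Choose $x,y$ with $x/y$ slightly below $w_1(t)/w_1(s)$, which is still above $(2s-1)/(2t-1)$. Then $(s,t-1)$ is strictly $w_1$-better than $(s-1,t)$. We will arrange that the $w_1$-optimal allocations all give $P_1$ at least $s$ seats, while near quota requires $a_1\le s-1$: with $x/y>(2s-1)/(2t-1)$ tuned suitably, $q(P_1)$ lies just above $s-\nicefrac12$, and we want $q(P_1)<s-\nicefrac12$, giving a violation when $P_1$ gets $s$ seats. A cleaner way to organize this is through the Sainte-Laguë threshold. In two-party instances, near quota forces exactly the Sainte-Laguë allocation (up to ties at exact half-quotas), so the unique allocation is the one determined by comparing $x/(2a_1+1)$ with $y/(2a_2+1)$. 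Any Thiele rule picks seats greedily by comparing $x\,w(a_1+1)$ with $y\,w(a_2+1)$, since concavity makes the optimum the greedy one. Agreement for all $x,y$ forces $w(a+1)\propto 1/(2a+1)$, with zero weights excluded because a zero weight would let a surplus party receive arbitrarily many or few seats. Rational $x,y$ can be scaled to integers.

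\textbf{Step 2: handle ties and later stages.} Once $w_1=w_S$, the first stage is exactly SLAV, and all subsequent stages only refine, so $F_{\boldsymbol{w}}$ refines SLAV. The subtle case is when $w_1$ is a degenerate function, such as AV or a function with zeros (e.g.\ CC-like), which keeps large ties for later stages. Here the argument is inductive: if the stages $w_1,\dots,w_j$ are all of the form that produces, on two-party apportionment instances, every allocation in some range, then we apply Step~1 to the first stage that actually discriminates between $(s,t-1)$ and $(s-1,t)$. Concretely, we show that any stage $w_j$ with $w_j(s)(2s-1)\ne w_j(t)(2t-1)$ for some $s,t$ that is reached while those allocations are still tied leads to a violation.

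\textbf{Main obstacle.} The hard part is the general non-simple case. We must show that if the first discriminating weight is not proportional to $w_S$, then it separates two allocations in the wrong direction on some instance where earlier stages are tied. We also need to rule out an earlier stage (e.g.\ $w_1$ with $w_1(s)=0$ for large $s$, as in CC) that already breaks SLAV by excluding the near-quota allocation. For the latter, a three-party instance should work, where covering all parties is forced although one small party has quota below $\nicefrac12$; this mirrors the reason Adams' method violates near quota. I would first handle $w_1$ with zeros via such instances, then reduce to the positive-weight two-party calculation above.
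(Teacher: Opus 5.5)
Your ``if'' direction is the paper's. Your Step~1 idea (two-party instances, comparing marginal seat values against the Sainte-Laguë thresholds) is also the core of the paper's argument. As written, however, the ``only if'' direction is not proved.

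\textbf{The Step~1 construction is inverted.} From $w_1(s)(2s-1)>w_1(t)(2t-1)$ you get $w_1(t)/w_1(s)<(2s-1)/(2t-1)$, not ``above''. Taking $x/y$ just below $w_1(t)/w_1(s)$ therefore makes $(s-1,t)$, not $(s,t-1)$, the better allocation. As you notice yourself, it also gives $q(P_1)>s-\nicefrac{1}{2}$, so nothing is violated. The ``cleaner way'' is then only asserted. The working version is as follows. Take $k=s+t-1$ and pick $x/y$ strictly between $w_1(t)/w_1(s)$ and $(2s-1)/(2t-1)$; if $w_1(t)=0$, any ratio below $(2s-1)/(2t-1)$ works. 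The increments $x\,w_1(a)-y\,w_1(k-a+1)$ of the $w_1$-score are non-increasing in $a$ and positive at $a=s$. Hence every $w_1$-optimal allocation has $a_1\ge s>q(P_1)+\nicefrac{1}{2}$. Since $a_1-q(P_1)=q(P_2)-a_2$ for two parties, every $F_{w_1}$-winner violates near quota.

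\textbf{You place the difficulty in the wrong spot and leave it unresolved.} You think degenerate first stages (``AV-like'', CC-like) leave ties that later stages might break favorably. You defer them to an unexecuted induction over stages and an unspecified three-party instance. The missing observation is that $F_{\boldsymbol w}$ always refines $F_{w_1}$. So a single instance on which all $F_{w_1}$-winners violate near quota rules out every composite rule with that first stage, whatever comes later.

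The corrected two-party argument already handles zero weights: a zero $w_1(t)$ only makes $x\,w_1(s)>y\,w_1(t)$ easier. For example, CC with $k=2$ uniquely gives each of two parties one seat, even when one has quota below $\nicefrac{1}{2}$. Applying the argument to the pairs $(s,1)$ and $(1,s)$ yields $w_1(s)(2s-1)=w_1(1)$ for all $s$. Under the normalization $w_1(1)=1$ this means $w_1=(1,\nicefrac{1}{3},\nicefrac{1}{5},\ldots)$. AV does not produce ties on apportionment instances at all; it gives every seat to the largest party. The only weight function that ties everything is the zero function, which the normalization excludes.

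This is exactly the paper's proof. It takes the first coordinate $t$ where $w_1$ differs from SLAV and uses two parties with $k=t$ and size ratio strictly between $1/w_1(t)$ and $2t-1$. It then shows the unique $w_1$-winner is $(t,0)$ or $(t-1,1)$, both of which violate near quota. Your Step~2 and ``main obstacle'' are unnecessary.
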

\begin{proof}
	The first direction of the theorem follows from \Cref{thm:SLAV_JNQ} and the fact that JNQ and apportionment near quota are equivalent on apportionment instances (\Cref{prop:jnq-anq}).
	
	We show the second direction by proving that for any simple Thiele rule except for SLAV there exists an apportionment instance on which it selects only a single outcome which violates apportionment near quota. As it only selects a single outcome, every refinement of this rule must also select this single outcome and therefore violate apportionment near quota.
	
	Consider any other Thiele vector $w = (1, w(2), w(3), \dots)$ and let $w(t)$ be the first coordinate in which $w$ differs from SLAV. We consider two cases, either $w(t) > \frac{1}{2t-1}$ or $w(t) < \frac{1}{2t-1}$.
	
	\emph{Case 1: } $w(t) > \frac{1}{2t-1}$ or equivalently $\frac{1}{w(t)} < 2t-1$. We let $\frac{p}{q} \in \rationals$ such that $\frac{1}{w(t)} < \frac{p}{q} < 2t-1$ and consider an apportionment instance with two parties and votes $(p,q)$ and $k = t$. We note that $p \cdot w(t) > q$. Hence, the Thiele rule would give all $t$ seats to the first party. However, the quota of the first party is $t \cdot \frac{p}{p + q} = t \cdot \frac{1}{1 + \frac{q}{p}} <  t \cdot \frac{1}{1 + \frac{1}{(2t-1)}} = t - \frac{1}{2},$ while the quota of the second party is  $t \cdot \frac{q}{p + q} = t \cdot \frac{1}{\frac{p}{q} + 1} >t \cdot \frac{1}{2t - 1 + 1} = \frac{1}{2}.$
	Hence, as a consequence, these two parties violate near quota.
	
	\emph{Case 2: } $w(t) < \frac{1}{2t-1}$ or, equivalently, $\frac{1}{w(t)} > 2t-1$. We again let $\frac{p}{q} \in \rationals$ such that $\frac{1}{w(t)} > \frac{p}{q} > 2t-1$ and again consider an apportionment instance with two parties and votes $(p,q)$ and committee size $t$. For this, we claim that the seat distribution is $(t - 1, 1)$. Indeed, it holds that $q > w(t) \cdot p$ and one can verify that (since $t$ is the first instance for which the vector differs from SLAV) $\frac{q}{3}  < \frac{p}{2t-3}$ and thus $(t-2, 2)$ would not be a valid outcome. 
	
	For the quotas we compute
		$t \cdot \frac{p}{p + q} = t \cdot \frac{1}{1 + \frac{q}{p}} > t \cdot \frac{1}{1 + \frac{1}{(2t-1)}} = t - \frac{1}{2}$ and $t \cdot \frac{q}{p + q} = t \cdot \frac{1}{\frac{p}{q} + 1} < t \cdot \frac{1}{2t - 1 + 1} = \frac{1}{2}.$
	Thus, this Thiele method also violates near quota. As a consequence, any composite Thiele rule satisfying apportionment near quota must be a refinement of SLAV.
\end{proof}

Going beyond Thiele rules, in Appendix~\ref{app:counterexamples:JNQ} we detail counterexamples that show that essentially all familiar voting rules fail JNQ (e.g., the Method of Equal Shares, Phragmén's rule, and all the rules in the \texttt{abcvoting}~\citep{joss-abcvoting} Python package at the time of writing). In particular, the variance-Phragmén voting rule (which is equivalent to Sainte-Laguë's method on apportionment instances) also fails JNQ. 
As an immediate consequence of Theorem~\ref{thm:SLAV_JNQ}, analogously to JUQ swaps, we get that there are no infinite sequences of JNQ swaps, as any swap increases the SLAV-score.

\begin{restatable}{corollary}{JNQChain}
	There are no infinite sequences of JNQ swaps.\label{cor:JNQ-chain}
\end{restatable}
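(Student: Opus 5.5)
The plan is to mirror the argument used for \Cref{prop:JUQ-chain}, with the SLAV score playing the role of the potential function. A JNQ swap for a committee $W$ is a pair $(c,d)$ with $c\in W$, $d\in C\setminus W$ and some non-empty $N'\subseteq N(d)$ witnessing a JNQ violation. Performing it produces $W'=(W\setminus\{c\})\cup\{d\}$, which has the same size as $W$.

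The key step is to extract from the proof of \Cref{thm:SLAV_JNQ} the statement actually shown there. That proof establishes that whenever $(c,d,N')$ witnesses a JNQ violation of $W$, we have
\[
\score_{w}(\approfile,W') > \score_{w}(\approfile,W),
\]
where $w=(1,\nicefrac{1}{3},\nicefrac{1}{5},\ldots)$. In other words, it does not merely show that $W$ fails to be a SLAV winner; it shows that the specific swapped committee strictly improves the score. This inequality is local: it depends only on $W$, $c$, $d$ and $N'$. It does not use the fact that $W$ was SLAV-optimal, nor does it use $|W|=k$ beyond quotas being computed with respect to the fixed target size $k$. So it applies at every step of an arbitrary sequence of swaps, not only at a first step away from an optimum.

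Given this, take any sequence of JNQ swaps $W_0\to W_1\to W_2\to\cdots$. The scores $\score_w(\approfile,W_t)$ are then strictly increasing. Hence no committee appears twice in the sequence. Since there are only finitely many subsets of $C$ (at most $2^m$, or $\binom{m}{|W_0|}$ of the relevant size), every such sequence is finite.

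The only potential obstacle is making sure the proof of \Cref{thm:SLAV_JNQ} really yields strict improvement for every swap, including the edge cases. These are $N'\cap N(c)\neq\emptyset$ and voters whose satisfaction is unchanged, just as in the Adams-AV argument. If the theorem were phrased only as ``$W$ is not selected'', I would re-derive the strict inequality directly. For each $i\in N(d)\setminus N(c)$ the score gain is $\frac{1}{2|A_i\cap W|+1}$. For each $i\in N(c)\setminus N(d)$ the score loss is $\frac{1}{2|A_i\cap W|-1}$. I would then bound these using $|A_i\cap W|<q(N')-\tfrac12$ and $|A_i\cap W|>q_c+\tfrac12$ respectively. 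After that, I would apply the identity $|N'|/q(N')=|N(c)|/q_c=n/k$, exactly as in \Cref{thm:JUQ-adams}, and handle the overlap term using $q_c<|A_j\cap W|<q(N')$ for any $j\in N'\cap N(c)$.
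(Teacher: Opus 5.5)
Your proposal is correct and matches the paper's proof, which also reuses the argument of \Cref{thm:SLAV_JNQ} (mirroring \Cref{prop:JUQ-chain}): every JNQ swap strictly increases the SLAV score, so no committee can recur, and there are only finitely many committees. Your remark that this score inequality is local, depending neither on SLAV-optimality nor on $|W|=k$, is exactly what justifies applying it at every step of an arbitrary swap sequence.
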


Finally, similarly to JUQ and EJR+, we show that it can be verified in polynomial-time whether a given committee satisfies JNQ.

\begin{restatable}{proposition}{JNQVerifPoly}
   Whether a committee satisfies JNQ is decidable in polynomial time.
\end{restatable}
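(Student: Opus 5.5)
Verifying JNQ amounts to a direct search over swap pairs, with the main work in handling the existential quantifier over $\voters^\prime$. We enumerate all pairs $(c,d)$ with $c\in W$ and $d\in \candidates\setminus W$; there are at most $m^2$ of them. For each pair we first check the condition on $\voters(c)$. It involves no choice, since $q_c$ is fixed, so we simply test
\[ |A_i\cap W| - q_c > \tfrac12\,\indicator[d\notin A_i] \quad\text{for every } i\in\voters(c), \]
which takes polynomial time. The remaining task, and the only step that needs an idea, is to decide whether some nonempty $\voters^\prime\subseteq\voters(d)$ satisfies
\[ q(\voters^\prime) - |A_i\cap W| > \tfrac12\,\indicator[c\notin A_i] \quad\text{for all } i\in\voters^\prime. \]

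For this step we use the same trick as in the verification of EJR+ and of JUQ (\Cref{prop:juq-verif-poly}). For $i\in\voters(d)$, define the threshold $t_i = |A_i\cap W| + \tfrac12\indicator[c\notin A_i]$. The condition then reads $q(\voters^\prime) > t_i$ for all $i\in\voters^\prime$, that is, $k|\voters^\prime|/n > \max_{i\in\voters^\prime} t_i$. The left-hand side depends only on $|\voters^\prime|$ and is increasing in it. Now fix the value $\tau$ of the maximum threshold. Adding to $\voters^\prime$ any voter with $t_i\le\tau$ keeps the maximum unchanged and only increases the quota. So if some set witnesses a violation, then the set $\voters_\tau=\{i\in\voters(d): t_i\le\tau\}$ also witnesses one, where $\tau$ is the maximum threshold of the original set.

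It therefore suffices to sort the voters of $\voters(d)$ by $t_i$ and, for each distinct threshold value $\tau$ (at most $n$ of them), test whether $\voters_\tau$ is nonempty and satisfies $k|\voters_\tau|/n>\tau$. The conditions are monotone in exactly the way the argument requires, so this prefix search is exact.

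Overall the algorithm runs in $O(m^2\cdot n\log n)$ time, up to the cost of computing the values $|A_i\cap W|$, which are precomputed once. The committee satisfies JNQ if and only if no pair $(c,d)$ passes both tests. The only subtlety I expect is confirming that the per-voter requirement on $\voters^\prime$ depends on $\voters^\prime$ solely through $q(\voters^\prime)$, which is what justifies the closure under adding low-threshold voters. This holds by the definition of JNQ.
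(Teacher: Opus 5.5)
Your proof is correct and follows essentially the same route as the paper. Like the paper, you enumerate the pairs $(c,d)$, check the condition on $\voters(c)$ directly, and use the fact that the condition on $\voters^\prime$ depends on $\voters^\prime$ only through $|\voters^\prime|$; the only difference is that the paper enumerates the size $s$ of $\voters^\prime$ (checking whether at least $s$ voters of $\voters(d)$ satisfy the inequality with quota $ks/n$), whereas you enumerate the maximal threshold $\tau$, and both searches are exact and polynomial.
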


It is open whether exhaustive committees satisfying JNQ can be computed in polynomial time.

\section{Combining Lower, Near, and Upper Quota Axioms}\label{sec:combining}

In this section, we investigate whether our upper, near, and lower quota axioms can be simultaneously satisfied in the multi-winner voting setting.

First, we observe that, as a refinement of CC, Adams-AV satisfies the rather weak underrepresentation axiom of \emph{justified representation} \citep{ABC+16a}, in addition to satisfying JUQ and being exhaustive. Can we do better? In particular, is it always possible to find an (exhaustive) rule that satisfies EJR+ and JUQ, or maybe even EJR+, JUQ and JNQ?

Unfortunately, in the class of Thiele rules, this is not the case. We have seen a ``tripartition'' with regard to representation axioms. PAV, and its refinements, are the only composite Thiele rules satisfying EJR+, SLAV refinements are the only ones satisfying JNQ, and Adams-AV refinements are the only ones satisfying JUQ, with these (im)possibilities mirroring the apportionment setting.

For the apportionment setting, however, \citet[Proposition~6.5]{BaYo01a} prove a curious fact about divisor methods: any committee selected by a divisor method either satisfies lower quota or upper quota. Interestingly enough, we can generalize this result to the approval-based multi-winner voting setting and show that for a large class of Thiele rules (corresponding to the divisor methods as defined by \citeauthor{BaYo01a}) including PAV and SLAV, every selected committee satisfies either UQ or EJR+.

\begin{theorem}
	Let $w = (w(1), w(2), \dots)$ be a weight function with strictly positive weights for which there exists an $\alpha \in \posReals$ satisfying $(i - 1) < \frac{\alpha}{w(i)} \le i$ for all $i \in \posNats$ or $(i - 1) \le \frac{\alpha}{w(i)} < i$ for all $i \in \posNats$. Then any committee selected by the simple $w$-Thiele rule satisfies either UQ or EJR+.
\end{theorem}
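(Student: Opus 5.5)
Suppose $W$ is selected by the simple $w$-Thiele rule and violates both UQ and EJR+; we derive a contradiction by showing that a single swap strictly increases $\score_w$. From the UQ violation we get $c\in W$ with $|A_i\cap W|>\ceil{q_c}$ for all $i\in\voters(c)$. From the EJR+ violation we get $d\in\candidates\setminus W$ and a nonempty $\voters'\subseteq\voters(d)$ with $|A_i\cap W|<\floor{q(\voters')}$ for all $i\in\voters'$. Set $W'=(W\setminus\{c\})\cup\{d\}$. As in the proof of \Cref{thm:JUQ-adams}, voters in $\voters(c)\cap\voters(d)$ or outside $\voters(c)\cup\voters(d)$ do not change their score. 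Hence
\[
\score_w(\approfile,W')-\score_w(\approfile,W)\ \ge \sum_{i\in\voters'\setminus\voters(c)} w(|A_i\cap W|+1)\;-\sum_{i\in\voters(c)\setminus\voters'} w(|A_i\cap W|),
\]
where we discard the nonnegative gains of voters in $\voters(d)\setminus(\voters'\cup\voters(c))$.

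Next we use the divisor-type bounds on $w$. Write $s_i=|A_i\cap W|$. For $i\in\voters'$ we have $s_i+1\le\floor{q(\voters')}$, so $s_i+1\le q(\voters')$. For $i\in\voters(c)$ we have $s_i\ge\ceil{q_c}+1$, so $s_i-1\ge q_c$. In the first regime, $(i-1)<\alpha/w(i)\le i$, this gives:
\begin{itemize}
\item $w(s_i+1)\ge \alpha/(s_i+1)\ge \alpha/q(\voters')$ for $i\in\voters'$;
\item $w(s_i)<\alpha/(s_i-1)\le\alpha/q_c$ for $i\in\voters(c)$, which is strict.
\end{itemize}
In the second regime the inequalities flip their strictness: $w(s_i+1)>\alpha/(s_i+1)$ and $w(s_i)\le\alpha/(s_i-1)$. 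Either way, one of the two estimates is strict. That strict estimate is used provided the corresponding index set is nonempty, and the nonemptiness checks below confirm this.

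Plugging in gives a difference strictly greater than (or, if the relevant set is empty, at least)
\[
\alpha\Bigl(\frac{|\voters'\setminus\voters(c)|}{q(\voters')}-\frac{|\voters(c)\setminus\voters'|}{q_c}\Bigr)=\alpha\,|\voters'\cap\voters(c)|\Bigl(\frac1{q_c}-\frac1{q(\voters')}\Bigr).
\]
This equality is the same algebra as in \Cref{thm:JUQ-adams}, using $|\voters'|/q(\voters')=|\voters(c)|/q_c=n/k$. If $\voters'\cap\voters(c)\neq\emptyset$, pick $j$ in it. Then $\ceil{q_c}<s_j<\floor{q(\voters')}$, which forces $q_c<q(\voters')$, so the expression is positive. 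Otherwise it is $0$.

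The main obstacle is making strictness hold robustly, which comes down to two nonemptiness checks.
\begin{itemize}
\item $\voters'\setminus\voters(c)\neq\emptyset$. Otherwise $\voters'\subseteq\voters(c)$, so $q(\voters')\le q_c$. But any $i\in\voters'$ would satisfy $\ceil{q_c}<s_i<\floor{q(\voters')}\le q_c$, a contradiction.
\item $\voters(c)\setminus\voters'\neq\emptyset$. Otherwise $\voters(c)\subseteq\voters'$, and the same chain gives $\ceil{q(\voters')}<s_i<\floor{q_c}$ for $i\in\voters(c)$, which is impossible.
\end{itemize}
So both sums are nonempty and the strict inequality from whichever regime holds applies. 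This yields $\score_w(\approfile,W')>\score_w(\approfile,W)$, contradicting the optimality of $W$, so every selected committee satisfies UQ or EJR+. I would also verify that the hypotheses on $w$ apply at every index used: we only need $s_i+1\ge1$ and $s_i\ge2$, and $w$ has strictly positive weights, so the argument is fine.
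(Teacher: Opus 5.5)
Your argument is the paper's swap argument and reaches the right conclusion, but your second ``nonemptiness check'' is false. The claim $N(c)\setminus N'\neq\emptyset$ does not follow from the two violations. If $N(c)\subseteq N'$, then for $i\in N(c)$ you only get $\lceil q_c\rceil<s_i<\lfloor q(N')\rfloor$; the chain $\lceil q(N')\rceil<s_i<\lfloor q_c\rfloor$ you wrote has the roles swapped. That chain is perfectly consistent with $q_c\le q(N')$.

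Concretely, take $n=k=10$, with voter $1$ approving $\{c,e,d\}$, voters $2$--$5$ approving $\{d\}$, voters $6$--$10$ approving $\{f_1,\dots,f_8\}$, and $W=\{c,e,f_1,\dots,f_8\}$. Then $c$ violates UQ, since $s_1=2>1=\lceil q_c\rceil$. Also $d$ with $N'=\{1,\dots,5\}$ violates EJR+, since $q(N')=5$. Yet $N(c)\subseteq N'$, so ``both sums are nonempty'' is not true in general.

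The claim is not load-bearing, so the proof survives once you drop it. Strictness of the termwise estimates is only needed when $N'\cap N(c)=\emptyset$. In that case the two index sets are just $N'$ and $N(c)$, which are nonempty by the definition of EJR+ and by $n_c>0$. When $N'\cap N(c)\neq\emptyset$, which includes $N(c)\subseteq N'$, your expression $\alpha|N'\cap N(c)|(1/q_c-1/q(N'))$ is already strictly positive, so the non-strict bound suffices. Replace both checks by this case-wise remark and the proof is correct.

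A smaller point: your first displayed inequality also charges a loss $w(s_i)$ to voters in $(N(c)\cap N(d))\setminus N'$, whose score does not change. That is a valid loosening since $w>0$, but it is not the one you cite.

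The paper avoids the case split altogether. It enlarges both sums to all of $N'$ and all of $N(c)$, which only lowers the bound because each voter in $N'\cap N(c)$ then contributes $w(s_i+1)-w(s_i)\le 0$. It bounds the terms by $w(\lfloor q(N')\rfloor)$ and $w(\lceil q_c\rceil+1)$ and gets $>|N'|\alpha/q(N')-n_c\alpha/q_c=0$ directly. Strictness is automatic there because both full sets are nonempty.
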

\begin{proof}
	Assume that $W$ is selected by $w$-Thiele and that it fails both UQ and EJR+. Let $c$ and $d$ be the respective witnesses of this.
	Thus, it holds that $\lvert A_i \cap W\rvert > \lceil q_c\rceil$ for every $i \in \voters(c)$ as well as $\lvert A_i \cap W\rvert < \lfloor q(N')\rfloor$ for some non-empty $N' \subseteq \voters(d)$ and every $i \in N'$.
	
	Now consider the score change by exchanging $c$ with $d$. The score of every voter $ i \in N' \setminus N(c)$ increases by $w(\lvert A_i \cap W\rvert + 1)$, while the score of every $ i \in N(c) \setminus N'$ decreases by  $w(\lvert A_i \cap W\rvert)$. With this we obtain
	\begin{align*}
		\sum_{i \in N' \setminus \voters(c)} w(\lvert A_i \cap W\rvert + 1) - \sum_{i \in \voters(c) \setminus N'} w(\lvert A_i \cap W\rvert)
		&\ge 	\sum_{i \in N'} w(\lvert A_i \cap W\rvert + 1) - \sum_{i \in \voters(c)} w(\lvert A_i \cap W\rvert) \\ 
		\ge \sum_{i \in N'} w(\lfloor q(N') \rfloor) - \sum_{i \in \voters(c)} w(\lceil q_c\rceil + 1)
		& > \lvert N'\rvert \frac{\alpha}{q(N')} - n_c \frac{\alpha}{q_c} = \alpha(\frac{n}{k} - \frac{n}{k}) = 0.
	\end{align*}
	Hence, by swapping $c$ with $d$ the score would increase and thus $W$ was not an optimal committee according to $w$-Thiele.
\end{proof}
As a corollary, we immediately obtain that this is true for PAV, SLAV, but also Adams-AV.
\begin{restatable}{corollary}{corthielerules}
	Any committee selected by PAV, SLAV, or Adams-AV satisfies either UQ or EJR+. 
\end{restatable}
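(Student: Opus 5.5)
The plan is to obtain the statements for PAV and SLAV directly from the preceding theorem by exhibiting a suitable $\alpha$. For Adams-AV, which is a composite rule and so not covered by the theorem, I would re-run the same swap argument, using the CC stage to absorb the problematic first weight.

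\emph{PAV and SLAV.} For PAV, $w(i)=\nicefrac{1}{i}$. Taking $\alpha=1$ gives $\nicefrac{\alpha}{w(i)}=i$, so $(i-1)<\nicefrac{\alpha}{w(i)}\le i$ holds for all $i\in\posNats$. For SLAV, $w(i)=\nicefrac{1}{2i-1}$. Taking $\alpha=\nicefrac12$ gives $\nicefrac{\alpha}{w(i)}=i-\nicefrac12$, which lies strictly between $i-1$ and $i$, so either condition holds. Both weight functions are strictly positive, so the theorem applies and both claims follow immediately.

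\emph{Adams-AV.} Suppose $W$ is selected by Adams-AV but fails both UQ and EJR+. Let $c\in W$ witness the UQ failure, and let $d\notin W$ together with a nonempty $N'\subseteq\voters(d)$ witness the EJR+ failure. Consider $W'=(W\setminus\{c\})\cup\{d\}$. Every $i\in\voters(c)$ has $|A_i\cap W|>\ceil{q_c}\ge1$, so such a voter is still covered after $c$ is removed. Hence the $w_1$ (CC) score does not decrease. If some voter in $\voters(d)$ is uncovered by $W$, the CC score strictly increases, contradicting the choice of $W$. So, exactly as in the proof of \Cref{thm:JUQ-adams}, we may assume every $i\in\voters(d)$ has $s_i:=|A_i\cap W|\ge1$ and that the $w_1$-scores are equal.

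It then remains to show that the $w_2$-score strictly increases. Recall $w_2(1)=1$ and $w_2(j)=\nicefrac{1}{j-1}$ for $j\ge2$. The change in $w_2$-score is
\[
\sum_{i\in N'\setminus\voters(c)}w_2(s_i+1)-\sum_{i\in\voters(c)\setminus N'}w_2(s_i).
\]
Since $w_2$ is nonincreasing, adding back the voters in the intersection gives a lower bound by the same sum taken over all of $N'$ and all of $\voters(c)$. This is the same step as in the preceding theorem.

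For $i\in N'$ we have $1\le s_i<\floor{q(N')}\le q(N')$, hence $w_2(s_i+1)=\nicefrac{1}{s_i}>\nicefrac{1}{q(N')}$. For $i\in\voters(c)$ we have $s_i-1>\ceil{q_c}-1\ge q_c-1$. What is actually needed is $s_i-1\ge\ceil{q_c}\ge q_c$, which holds because $s_i\ge\ceil{q_c}+1$ by integrality. Therefore $w_2(s_i)=\nicefrac{1}{s_i-1}\le\nicefrac{1}{q_c}$. Summing, the score change is strictly greater than
\[
\frac{|N'|}{q(N')}-\frac{n_c}{q_c}=\frac nk-\frac nk=0,
\]
with strictness coming from $N'\neq\emptyset$. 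This contradicts the optimality of $W$ at the second stage.

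The only delicate point is this Adams-AV case. The first weight $w_2(1)=1$ breaks the pattern $\nicefrac{\alpha}{w_2(i)}\in[i-1,i)$ required by the theorem, so the theorem cannot be invoked directly. The reduction through the CC stage is what guarantees that every relevant voter already has $s_i\ge1$, so that only the well-behaved weights $w_2(j)$ with $j\ge2$ ever enter the computation.
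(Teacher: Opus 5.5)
Your proof is correct. For PAV and SLAV it coincides with the paper's ($\alpha=1$ and $\alpha=\nicefrac{1}{2}$). For Adams-AV you take a different route.

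The paper does not redo the swap argument. It uses the observation in Appendix~\ref{app:adams:thiele}: for a fixed instance with $n$ voters and committee size $k$, Adams-AV coincides with the simple Thiele rule with weights $(nk,1,\nicefrac{1}{2},\nicefrac{1}{3},\ldots)$. This vector satisfies the theorem's hypothesis with $\alpha=1$ in the form $(i-1)\le \alpha/w(i)<i$: for $i=1$ one has $\alpha/w(1)=1/(nk)<1$ whenever $nk>1$, and for $i\ge 2$ one has $\alpha/w(i)=i-1$. The theorem is then applied instance by instance.

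You instead re-run the argument of Theorem~\ref{thm:JUQ-adams} directly. The CC stage guarantees that every voter in $N(d)$ is already covered, so only the weights $w_2(j)=1/(j-1)$ with $j\ge 2$ enter the estimate. The paper's route is shorter and reuses the theorem, but it depends on the lexicographic-to-additive equivalence with an instance-dependent weight vector. Your route is self-contained. It also makes explicit that the CC stage is what neutralises the anomalous first weight $w_2(1)=1$ (which you rightly note breaks the theorem's hypothesis). The paper's route instead absorbs that weight by inflating it to $nk$.

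One small imprecision: the expression summing over $N'\setminus N(c)$ and $N(c)\setminus N'$ is not the change in $w_2$-score but a lower bound on it. The true change sums over $N(d)\setminus N(c)$ and $N(c)\setminus N(d)$. Since $N'\subseteq N(d)$ and the weights are nonnegative, restricting to $N'$ can only shrink the positive sum and enlarge the subtracted one. So your conclusion is unaffected.
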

\begin{proof}
	For PAV we observe that it satisfies the definition with $\alpha = 1$ and SLAV with $\alpha = \frac{1}{2}$. For Adams-AV we observe that, for a fixed instance with $n$ voters and target committee size $k$, Adams-AV is equivalent to the simple Thiele rule with vector $(n\cdot k, 1, \frac{1}{2}, \dots)$ (see Appendix~\ref{app:adams:thiele}). As this Thiele rule satisfies the condition with $\alpha = 1$, this implies the result for Adams-AV. 
\end{proof}
As another consequence, these rules satisfy EJR+ on instances in which no UQ committee exists.
The aforementioned classical result of \citet{BaYo01a} was partially generalized for a restricted class of divisor methods by \citet{cembrano2025new}, who showed that for any fixed apportionment instance there exists a divisor method satisfying both upper and lower quota in that instance. An interesting open question is whether this also generalizes to multi-winner voting: for any instance does there exist a Thiele method that satisfies EJR+ and JUQ in that instance?

If we consider arbitrary multi-winner voting rules, we can show that UQ (and thus JUQ), JNQ and EJR+ are jointly satisfiable, as long as we do not require exhaustiveness. To this end, we define a voting rule, the \emph{Proportional Elimination Rule}, in a similar way to UQER, but with a different initial committee and a modified method of selecting candidates for removal.

We start from any committee $W_0$ output by GJCR (\Cref{def:GJCR}). This committee has size $|W_0|\leq k$. In each step $j\in \{1,2,\dots\}$, we find all candidates $c\in W_{j-1}$ for which all supporters $i\in N(c)$ are strictly above their lower quota, i.e. $|A_i\cap W_{j-1}|>\floor{q_c}$.\footnote{Note that this is different to UQER, which removed candidates violating UQ, i.e. those with all supporters above their upper quota.} 
If there are no such candidates then we are done and our voting rule outputs $W^*=W_{j-1}$. Otherwise, let $c_j$ be the candidate from among these with the least approvals (with ties broken arbitrarily), set $W_{j}=W_{j-1}\setminus \{c_j\}$, and repeat.

\begin{proposition}
    The (possibly not exhaustive) committee $W$ produced by the Proportional Elimination Rule satisfies UQ, JNQ and EJR+. 
\end{proposition}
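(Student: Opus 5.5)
The plan is to verify the three properties separately for the output $W=W^*$, using two facts: the stopping condition of the Proportional Elimination Rule, and the fact that $W_0$ is produced by GJCR and hence satisfies EJR+ (by \Cref{def:GJCR} it terminates only when no EJR+ violation remains). UQ and JNQ should follow almost immediately from the stopping condition. EJR+ needs an invariant argument along the elimination sequence, in the spirit of \Cref{lem:UQER_JUQ} and \Cref{lem:UQER_EJR}.

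\emph{UQ.} When the rule stops, no $c\in W$ has all of its supporters strictly above $\floor{q_c}$. So every $c\in W$ has a supporter $i\in\voters(c)$ with $|A_i\cap W|\le\floor{q_c}\le\ceil{q_c}$, which is exactly UQ.

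\emph{JNQ.} Suppose $c\in W$, $d\notin W$ and $\voters'$ witness a JNQ violation. The first condition of \Cref{def:JNQ} gives $|A_i\cap W|-q_c>\frac12\indicator[d\notin A_i]\ge 0$ for every $i\in\voters(c)$. Hence $|A_i\cap W|>q_c\ge\floor{q_c}$ for all supporters of $c$. Then $c$ would still be eligible for removal, contradicting termination. So JNQ holds, and only the first half of the JNQ condition is needed.

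\emph{EJR+.} I will show by induction on $j$ that every $W_j$ satisfies EJR+; the base case is $W_0$. Assume $W_{j-1}$ satisfies EJR+ and that $W_j=W_{j-1}\setminus\{c_j\}$ is violated by some $d\notin W_j$ and $\voters'\subseteq\voters(d)$, so $|A_i\cap W_j|<\floor{q(\voters')}$ for all $i\in\voters'$.

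If $d=c_j$: every supporter of $d$ had $|A_i\cap W_{j-1}|>\floor{q_d}\ge\floor{q(\voters')}$, using $q(\voters')\le q_d$. After the removal each still has at least $\floor{q(\voters')}$, a contradiction.

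Otherwise $d\notin W_{j-1}$. Since $W_{j-1}$ satisfies EJR+, some $i\in\voters'$ has $|A_i\cap W_{j-1}|\ge\floor{q(\voters')}$. This voter must have lost $c_j$, so $i\in\voters(c_j)$ and $|A_i\cap W_{j-1}|=\floor{q(\voters')}$. Because $c_j$ was removable, $\floor{q_{c_j}}<|A_i\cap W_{j-1}|=\floor{q(\voters')}$, hence $q_{c_j}<q(\voters')\le q_d$ and $n_{c_j}<n_d$.

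This last case is the main obstacle. Local information at step $j$ alone does not yield a contradiction, so I must exploit the least-approvals selection rule as in \Cref{lem:UQER_JUQ}. My first attempt is to trace back to the last time some candidate of $A_i$ was removed and compare its approval count with that of $d$. If $d$ was itself removed at some earlier step $x$, its supporters were then all strictly above $\floor{q_d}$. Together with the fact that at step $x$ it was chosen with the fewest approvals, and that $c_j$ (with $n_{c_j}<n_d$) was not yet removable then, this should force a contradiction analogous to the argument in \Cref{lem:UQER_JUQ}. If $d$ was never in $W_0$, I would instead use the EJR+ property of $W_0$ together with GJCR's choice of maximal $|\voters'|$ to produce an analogous contradiction. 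If this tracing argument gets stuck, I would try replacing the witness group by $\voters'\cup\voters(c_j)$, or strengthening the invariant. One option is the invariant that every voter who ever had at least $\floor{q(\voters')}$ approved members keeps at least that many while the rule removes only candidates with $q$-value below $q(\voters')$.
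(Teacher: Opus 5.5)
Your UQ and JNQ arguments are correct and match the paper's. The setup of your EJR+ part is also the paper's: at the step $j$ where $(d,\voters')$ first becomes a violation, some $i\in\voters'\cap\voters(c_j)$ falls below $\floor{q(\voters')}$. This gives $\floor{q_{c_j}}<\floor{q(\voters')}$ and hence $n_{c_j}<|\voters'|\le n_d$. Your separate treatment of $d=c_j$ is more careful than the paper's. However, the proposal stops where the real work begins: you close neither remaining subcase, and the fallback ideas at the end are not developed into an argument. So the EJR+ claim is not established.

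\textbf{Subcase: $d$ was removed at some step $x<j$.} The missing fact is that removability is monotone. Since $c_j\in W_{j-1}\subseteq W_{x-1}$, every supporter of $c_j$ approves at least as many members of $W_{x-1}$ as of $W_{j-1}$, so $c_j$ was already removable at step $x$. As $n_{c_j}<n_d$, the fewest-approvals rule could not have picked $d$. Your remark that $c_j$ ``was not yet removable then'' is the right intermediate consequence; monotonicity is what turns it into a contradiction. No tracing back through $A_i$, as in the UQER lemma, is needed.

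\textbf{Subcase: $d\notin W_0$.} Here EJR+ of $W_0$ alone is useless, because $W_j$ is a proper subset of $W_0$. The missing idea is to compare with an intermediate GJCR committee.
\begin{itemize}
\item By monotonicity, every candidate removed up to step $j$ has at most $n_{c_j}$ supporters.
\item Let $W''$ be GJCR's partial committee just before it first adds a candidate via a witness group of size at most $n_{c_j}$, or $W''=W_0$ if this never happens.
\item Every member of $W''$ was added via a group larger than $n_{c_j}$, so it has more than $n_{c_j}$ supporters. It therefore survives until step $j$, giving $W''\subseteq W_j$.
\item Hence $(d,\voters')$ is an EJR+ violation for $W''$ with $|\voters'|>n_{c_j}$.
\item GJCR maximizes group size, so its next pick from $W''$ would have size at least $|\voters'|>n_{c_j}$. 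This contradicts the choice of $W''$, or GJCR's termination if $W''=W_0$.
\end{itemize}
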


\begin{proof}
    
    Our rule ensures that for each $c\in W$ there exists $i\in N(c)$ with $|A_i\cap W|\leq \floor{q_c}\leq \ceil{q_c}$, and thus satisfies UQ by construction.

    Suppose that $W$ violates JNQ, witnessed by $c\in W$, $d\in C\setminus W$ and non-empty $N'\subseteq \voters(d)$. Thus, for all $i\in N(c)$, we have $|A_i\cap W|>q_c$. However, the rule ensures $|A_i\cap W|\leq\floor{q_c}$ for some $i \in N(c)$ and thus $|A_i\cap W|\leq q_c$, hence $c$ cannot be part of a JNQ violation.
     
    Suppose now that $W$ violates EJR+, as witnessed by $d\in C\setminus W$ and $N'\subseteq \voters(d)$. Thus, we know that $\lvert A_i \cap W\rvert < \lfloor q(N') \rfloor$ for every voter $i\in \voters'$. Note that the committee $W_0$ output by GJCR satisfies EJR+. We let $c^*\in W_0$ be the first candidate after whose deletion the EJR+ violation with $d$ occurs. Call the committee after $c^*$ was deleted $W'$. As $W'$ is the first committee witnessing the EJR+ violation, there must exist a voter $i \in N'$ such that $\lvert A_i \cap W' \rvert < \lfloor q(N') \rfloor$ and $\lvert A_i \cap (W' \cup \{c^*\}) \rvert \ge \lfloor q(N') \rfloor$ as otherwise the violation would have already occurred before. This, in particular, implies that $i \in N(c^*)$. 
    Hence, by definition of the Proportional Elimination Rule for this same voter we get that  $\lfloor q_{c^*}\rfloor \leq \lvert A_i \cap W'\rvert < \lfloor q(N') \rfloor \leq \lfloor q_d \rfloor$. Note that this implies that $\lvert N(c^*)\rvert <\lvert N'\rvert \leq \lvert N(d) \rvert$.
    
    First, suppose that that $d\in W_0$. As $d \notin W'$, we know that $d$ must have been removed before $c^*$. However, at the step in which $d$ was removed, $c^*$ would also have been eligible to be removed. As $\lvert N(c^*)\rvert < \lvert N(d)\rvert $ our rule would have removed $c^*$ over $d$, leading to a contradiction. 
    
    Now, instead assume that $d\notin W_0$, that is $d$ was not selected by GJCR. Let $W''$ be the set of candidates selected by GJCR before choosing the first candidate with weakly fewer than $\lvert N(c^*)\rvert$ supporters witnessing its EJR+ violation. To obtain $W'$ from $W_0$, the Proportional Elimination Rule only removed candidates with weakly fewer than $\lvert N(c^*)\rvert$ supporters, which means $W''\subseteq W'$. Therefore, $\lvert A_i \cap W''\rvert \leq \lvert A_i \cap W'\rvert < \lfloor q(N') \rfloor$ for all voters $i\in N'$, and hence, $d$ and $N'$ would have been an EJR+ violation for $W''$. This is a contradiction to our construction that the next candidate chosen by GJCR has at most $\lvert N(c^*)\rvert< \lvert N'\rvert$ supporters witnessing the violation.
\end{proof}

With exhaustiveness, the picture gets more complicated. 
Indeed, it remains an open problem whether any of the three pairs (JUQ, JNQ), (JUQ, EJR+) and (JNQ, EJR+) are jointly satisfiable for exhaustive committees.
In the following, we explore the interplay of JUQ, EJR+ and exhaustiveness in more detail.
For other notions of proportionality, however, we can give a clear answer. We show in Appendix~\ref{app:additional-more-lower-quota} that JUQ is incompatible with several other popular axioms for multi-winner voting, namely priceability, fully justified representation, and perfect representation. The latter two impossibilities hold even without requiring exhaustiveness.

Our first result shows that JUQ, EJR+ and exhaustiveness are ``almost'' compatible, i.e., we can either find an exhaustive committee satisfying EJR+ and JUQ or we can always find two sets of candidates satisfying both EJR+ and UQ, one with $\leq k$ candidates and one with $\geq k$ many.

\begin{corollary}\label{cor:ejr_strong_uq}
	There always exists a committee $W_\leq$ satisfying EJR+ and UQ with $|W_\leq |\leq k$. Additionally, one of the following is always true:
    \begin{itemize}
        \item There exists a committee $W_=$ satisfying EJR+ and JUQ with size exactly $|W_=|=k$.
        \item There exists a set of candidates $W_\geq$ satisfying EJR+ and UQ with size $|W_\geq|\geq k$. 
    \end{itemize}
\end{corollary}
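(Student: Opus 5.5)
The corollary follows from results already established: the Proportional Elimination Rule handles the first claim, and UQER together with \Cref{lem:UQER_JUQ} and \Cref{lem:UQER_EJR} handles the dichotomy.

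\emph{First claim.} The Proportional Elimination Rule outputs a committee $W$ with $|W|\le k$ that satisfies UQ, JNQ and EJR+, by the preceding proposition. The size bound holds because it starts from a GJCR committee $W_0$ with $|W_0|\le k$ and only deletes candidates. Setting $W_\le=W$ gives the first claim.

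\emph{Dichotomy.} Run UQER, stopping at step $j^*$. By \Cref{lem:UQER_JUQ}, every intermediate set $W_0,\dots,W_{j^*}$ satisfies JUQ. By \Cref{lem:UQER_EJR}, every such set also satisfies EJR+. We split on the stopping condition.

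\emph{Case $|W_{j^*}|=k$.} Then $W_= = W_{j^*}$ is exhaustive and satisfies JUQ and EJR+, which is the first alternative. (If both stopping conditions hold simultaneously, we are still in this case.)

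\emph{Case $|W_{j^*}|>k$.} The rule stopped because $W_{j^*}$ satisfies UQ. By the lemma, $W_{j^*}$ also satisfies EJR+. Setting $W_\ge=W_{j^*}$ gives the second alternative.

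\emph{Edge case and main obstacle.} We must make sure UQER's sequence is well defined, i.e.\ that $|W_0|=|C|=m\ge k$. This holds since $k\in[m]$. The only subtle point is that we must use the intermediate set $W_{j^*}$ itself rather than the arbitrary $k$-subset UQER finally returns, since EJR+ is only guaranteed for the intermediate sets. Beyond this, the argument is just bookkeeping on top of the cited lemmas.
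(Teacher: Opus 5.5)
Your proposal is correct and follows the paper's proof essentially verbatim. $W_\leq$ is the output of the Proportional Elimination Rule, and $W_=$ or $W_\geq$ is UQER's intermediate set $W_{j^*}$, which satisfies JUQ and EJR+ by \Cref{lem:UQER_JUQ} and \Cref{lem:UQER_EJR}, and additionally satisfies UQ whenever $|W_{j^*}|>k$; your caveat about using $W_{j^*}$ rather than UQER's final $k$-subset matches the one the paper makes.
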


\begin{proof}
    $W_\leq$ is output by the Proportional Elimination Rule.
To find either $W_=$ or $W_\geq$, recall from \Cref{thm:UQER} and \Cref{lem:UQER_EJR} that during the execution of UQER we find a candidate set $W_{j^*}$ that satisfies EJR+ and JUQ and, additionally, either (1) has $|W_{j^*}|=k$, giving us $W_==W_{j^*}$, or (2) satisfies UQ and has $|W_{j^*}|>k$, giving us $W_\geq=W_{j^*}$.
\end{proof}

A voting rule satisfying JUQ, EJR+ and exhaustiveness is unlikely to be as straightforward as some of the voting rules we proposed earlier. One might be tempted to modify the outcome of a rule that satisfies two of EJR+, JUQ, and exhaustiveness to also satisfy the third. Below we provide negative results for three of the most natural modifications. First, we show that the Proportional Elimination Rule cannot be made exhaustive (while satisfying JUQ) by adding additional candidates.

\begin{wraptable}[13]{R}{0.4\textwidth}
	\centering
	\small
	\setlength{\tabcolsep}{3pt}
	\begin{tabular}{ccccccccccccc}
			\toprule
		Candidate & $3$ & $3$ & $3$ & $14$ & $7$ & $7$ & $14$ & $12$ \\
		\midrule
		$c_1,c_2$& & & & \cmark & \cmark & & & \cmark  \\
		$c_3, c_4$& & & & & & \cmark & \cmark \\
		$c_5$& \cmark & \cmark & \cmark \\
		$c_6$&  \cmark & \cmark & & \cmark \\
		$c_7$&  \cmark & & \cmark & & \cmark & \cmark \\
		$c_8$&  & \cmark & \cmark & & & & \cmark  \\
			\bottomrule
	\end{tabular}
	\captionsetup{justification=centering}
	\caption{Example instance for \Cref{ex:PER_Completion} with $k=7$. The column headers refer to the number of voters submitting that ballot (e.g., there are $3$ voters approving $\{c_5,c_6,c_7\}$).}
	\label{tab:PER_Completion}
\end{wraptable}

\begin{proposition}\label{ex:PER_Completion}
    There exists an instance where every exhaustive superset of the outcome of the Proportional Elimination Rule fails JUQ.
\end{proposition}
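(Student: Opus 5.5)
The plan is to verify the claim directly on the instance of \Cref{tab:PER_Completion}. There are $n=63$ voters and $k=7$, so $\nicefrac{n}{k}=9$ and $q(N')=\nicefrac{|N'|}{9}$. First I will tabulate the supporter counts and quotas. Candidates $c_1,c_2$ have $33$ supporters ($q\approx 3.67$), $c_3,c_4$ have $21$ ($q\approx 2.33$), $c_5$ has $9$ ($q=1$), $c_6$ has $17$ ($q\approx 1.89$), and $c_7,c_8$ have $20$ each ($q\approx 2.22$). The argument then has three stages: compute the GJCR outcome $W_0$, run the elimination phase of the Proportional Elimination Rule to obtain $W$, and check every exhaustive superset of $W$ for a JUQ violation.

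\textbf{Computing $W_0$ and $W$.} I will simulate GJCR (\Cref{def:GJCR}), always taking the EJR+ violation with the largest $|N'|$.
\begin{itemize}
\item $c_1$ and then $c_2$ are added with $N'=N(c_1)$ (33 voters, lower quota $3$).
\item $c_3$ and then $c_4$ are added with $N'=N(c_3)$ (21 voters, lower quota $2$).
\item $c_5$ is added with $N'=N(c_5)$ (9 voters at satisfaction $0$, lower quota $1$).
\end{itemize}
After this, I check that no EJR+ violation remains. For $c_6$, a lower quota of $2$ would need $18$ supporters, and only $17$ exist. For $c_7$ and $c_8$, the only supporters at satisfaction $\le 1$ are the $3+3$ voters of the small columns, whose lower quota is $0$. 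I must also rule out tie-breaking alternatives, which I expect to be routine here. The result is $W_0=\{c_1,\dots,c_5\}$.

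In the elimination phase, no candidate has all supporters strictly above the lower quota: $c_1,c_2$ supporters sit at $2<3$, $c_3,c_4$ supporters at exactly $2$, and $c_5$ supporters at exactly $1$. Hence $W=W_0$.

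\textbf{Checking the supersets.} The exhaustive supersets are $W\cup S$ with $S$ any two of $\{c_6,c_7,c_8\}$, so there are three cases. In each, every one of the three $3$-voter ballots contains at least one candidate of $S$. So every supporter of $c_5$ has satisfaction at least $2>1=\ceil{q_{c_5}}$, and $c_5$ is the overrepresented candidate. The natural JUQ witness is the swap of $c_5$ with the omitted candidate $d$, taking $N'=N(d)$:
\begin{itemize}
\item \emph{$d=c_8$:} after the swap, the $c_8$-supporters have satisfactions $2,2,3\le 3=\ceil{q_{c_8}}$.
\item \emph{$d=c_7$:} after the swap, the $c_7$-supporters have satisfactions $2,2,3,3\le 3$.
\end{itemize}

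\textbf{Main obstacle.} The hard case is $d=c_6$. After swapping $c_5$ for $c_6$, the $14$ voters with ballot $\{c_1,c_2,c_6\}$ reach satisfaction $3>\ceil{\nicefrac{17}{9}}=2$. So $N'=N(c_6)$ fails, and the subgroup of the two small columns ($6$ voters, upper quota $1$) fails as well. This is where I must be careful.

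I would first recheck the table entries and the GJCR tie-breaking, since the intended $W_0$ might already contain an extra candidate. Next I would search for a different overrepresented candidate $c\in W\cup\{c_7,c_8\}$, for instance $c_7$ or $c_8$ under a modified reading of the table, paired with $d=c_6$.

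If the instance as I read it does not yield a violation in this case, the fallback is to adjust the column multiplicities. The goal would be that the $\{c_1,c_2,c_6\}$ group stays within its upper quota after the swap (for example, by making $\ceil{q(N')}\ge 3$ for a suitable $N'\subseteq N(c_6)$), while the GJCR and elimination computations above are unchanged. After such an adjustment, I would re-verify all three cases.
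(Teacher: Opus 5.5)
\textbf{Gap: the case $d=c_6$ is left open because of a miscount.} Your route is the paper's. You compute $W=\{c_1,\dots,c_5\}$, observe that the elimination phase removes nothing, and in each completion swap the overrepresented $c_5$ for the omitted $d\in\{c_6,c_7,c_8\}$ with $N'=\voters(d)$. Your cases $d=c_8$ and $d=c_7$ are correct. However, the proposal is not yet a proof: the case $d=c_6$ is unresolved, and re-tuning the column multiplicities is only a plan, not a construction (and it is unnecessary).

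By the table, $c_6$ is approved by the two $3$-voter columns with ballots $\{c_5,c_6,c_7\}$ and $\{c_5,c_6,c_8\}$ and by the $14$-voter column $\{c_1,c_2,c_6\}$. So $n_{c_6}=3+3+14=20$, not $17$, which gives $q_{c_6}=\nicefrac{20}{9}$ and $\lceil q_{c_6}\rceil=3$. In the completion $W\cup\{c_7,c_8\}$, after swapping $c_5$ for $c_6$ the supporters of $c_6$ have satisfactions $2$, $2$ and $3$, all at most $3$. Hence $N'=\voters(c_6)$ witnesses the JUQ violation exactly as in your other two cases.

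In fact $c_6,c_7,c_8$ play symmetric roles. Each has $20$ supporters: two of the three small columns, which end at satisfaction $2$ after the swap, plus $14$ further voters, which end at exactly $3$. This is why the paper treats only $W'=\{c_1,\dots,c_7\}$ ``without loss of generality''.

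\textbf{Consequence for your GJCR step.} The same miscount invalidates your justification (though not your conclusion) there. At $W_0$, $c_6$ gives no EJR+ violation not because it has fewer than $18$ supporters. Rather, its $14$ supporters with ballot $\{c_1,c_2,c_6\}$ already have satisfaction $2=\lfloor\nicefrac{20}{9}\rfloor$, so no subgroup can contain them. Any group avoiding them has at most $6$ voters and lower quota $0$, which is the same argument you gave for $c_7$ and $c_8$.

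You should also record that $c_6,c_7,c_8$ do witness EJR+ violations (with at most $20$ voters) up to the step where $c_4$ is added. They are passed over only because $c_1,c_2$ ($33$ voters) and then $c_3,c_4$ ($21$ voters) have larger witnessing groups. The only remaining ties ($c_1$ vs.\ $c_2$, $c_3$ vs.\ $c_4$) are symmetric, so $W_0=\{c_1,\dots,c_5\}$ under any tie-breaking.
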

\begin{proof}
    Consider the instance with $n=63$ and $k=7$ depicted in Table~\ref{tab:PER_Completion}.
    Here, GJCR selects only the non-exhaustive committee $W=\{c_1,c_2,c_3,c_4,c_5\}$. Notice that, for every $c\in W$ and $i\in\voters(c)$, we have that $|A_i\cap W|\leq\floor{q_c}$. Thus the Proportional Elimination Rule also selects $W$.

Now consider any possible completion of this committee to 7 candidates. Without loss of generality, let us look at $W^\prime=\{c_1,\dots, c_7\}$. Then $c_5$ and $c_8$ witness a JUQ violation with $\voters^\prime=\voters(c_8)$. Indeed, $q_{c_5}=1$ and every voter $i\in N(c_5)$ has $|A_i\cap W^\prime|=2$. Moreover, $\ceil{q_{c_8}}=3$ and every voter $i\in N(c_8)$ has $|A_i\cap (W^\prime\setminus\{c_5\}\cup \{c_8\})|\leq 3$.
\end{proof}

In other words, we cannot always start with $W_\leq$ from \Cref{cor:ejr_strong_uq} and add $k-|W_\leq|$ candidates to this committee while continuing to satisfy EJR+. 
Analogously, we cannot always remove $|W_\geq|-k$ candidates from $W_\geq$ from \Cref{cor:ejr_strong_uq} while continuing to satisfy JUQ.

\begin{restatable}{proposition}{uqerelimination}\label{prop:UQER_elimination}
    The partial outcome $W_{j^*}$ of UQER (\Cref{sec:UQER}) need not contain a size $k$ subset that satisfies EJR+.
\end{restatable}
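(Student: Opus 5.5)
This is an existence statement, so I will prove it with an explicit counterexample instance. The construction is meant to make UQER stop early at a set $W_{j^*}$ with $|W_{j^*}|>k$ that satisfies UQ. The instance must also ensure that removing any $|W_{j^*}|-k$ candidates from $W_{j^*}$ creates an EJR+ violation. The natural tension to exploit is this: UQER only deletes candidates whose \emph{entire} supporter set is strictly above upper quota. So a large cohesive group can keep many of its candidates as long as one supporter sits at or below $\lceil q_c\rceil$. Meanwhile, several smaller groups each sit exactly at their lower quota, so every one of their candidates is needed for EJR+.

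Concretely, I would try an instance with disjoint blocks. One block is a group $G_1$ whose candidates each have a ``weak'' supporter, for example a voter approving only a single candidate of the block. Such a candidate is never UQ-violating, so it survives UQER. The other blocks are apportionment-like parties $P_2,\dots,P_t$, each approving exactly $\lfloor q(P_\ell)\rfloor$ candidates. Those candidates are never above upper quota, so they survive UQER too, and all of them are required for EJR+. The first step is to choose sizes so that the surviving set has more than $k$ candidates and satisfies UQ. The second step is to show that every size-$k$ subset must drop some candidate. Dropping a candidate $c$ from a party $P_\ell$ leaves that party with $\lfloor q(P_\ell)\rfloor-1$ seats, and $c$ then witnesses an EJR+ violation with $N'=P_\ell$. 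Dropping from $G_1$ instead has to break EJR+ for some subgroup of $G_1$ (or of its weak voters) whose lower quota is then missed.

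The hard part is making the removal of \emph{every} choice of $|W_{j^*}|-k$ candidates fatal. I would make $G_1$ consist of several overlapping cohesive subgroups. Each subgroup $N'$ has exactly $\lfloor q(N')\rfloor$ approved candidates in $W_{j^*}$ when the whole surviving set is counted, and all candidates overlap enough that every candidate lies in some subgroup's tight set. Yet the total count exceeds $k$, which is possible because approval sets overlap. UQ still holds because each candidate has a supporter who lies in only a few subgroups.

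I would first search small parameters, taking $k$ around $3$ or $4$ and $n/k$ an integer. I would hand-simulate UQER, checking (a) which candidates are deleted and (b) that the stopping set satisfies UQ with size above $k$. I would then brute-force all size-$k$ subsets to confirm that each one has an EJR+ witness. The main obstacle is getting both conditions at once: enough candidates survive, and every candidate is tight for EJR+. I expect the final instance to be presented as a table like Table~\ref{tab:PER_Completion} together with a short case check over the subsets.
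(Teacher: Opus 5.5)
\textbf{There is a genuine gap.} You never produce an instance, and for an existence claim the instance is the entire proof. More seriously, the blueprint you describe cannot succeed. In your design every candidate is protected from deletion: each candidate of $G_1$ has a weak supporter, and each party approves only $\lfloor q(P_\ell)\rfloor$ candidates. So $W_0=C$ already satisfies UQ, UQER deletes nothing, and $W_{j^*}=C$. But $C$ always contains a size-$k$ subset satisfying EJR+, for example any PAV winner. Alternatively, take GJCR's output (size at most $k$) and pad it arbitrarily, since EJR+ is preserved under adding candidates. For the same reason, your target of a surviving set in which every candidate is EJR+-tight while the total exceeds $k$ is unattainable when nothing is deleted. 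Your intuition that overlapping approval sets push the count above $k$ also points the wrong way: overlap lets one seat serve several voters, which lowers what EJR+ demands.

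The missing idea is that UQER must \emph{delete} candidates that are efficient for EJR+, and these deleted candidates then act as the unselected witnesses $d$ in EJR+ violations. The paper's instance has $n=15$ and $k=10$.
\begin{itemize}
\item Three disjoint triples of voters (quota $2$) each commonly approve two candidates $\hat c$.
\item The candidates $c_1,\dots,c_{18}$ come in pairs. Each pair is approved by a single member of one triple and by voters $10$--$12$, giving $4$ supporters and upper quota $3$.
\item Initially voters $1$--$9$ have satisfaction $4$ and voters $10$--$12$ have $18$, so both the $\hat c$'s and the $c$'s violate UQ.
\item UQER removes the $3$-supporter $\hat c$'s first. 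Afterwards every $c$ has a supporter at satisfaction $2\le 3$, and UQER stops at $20>k$ candidates satisfying UQ.
\end{itemize}
Each triple can now only be served by candidates approved by exactly one of its members. With a removed $\hat c$ as witness, the triple (lower quota $2$) needs a member at satisfaction at least $2$. Each pair within it (lower quota $1$) needs a member at satisfaction at least $1$. Together these force at least $3$ surviving candidates per triple. Voters $13$--$15$ need both $c^*_1$ and $c^*_2$, which is exactly the role of your tight parties. This gives $3\cdot 3+2=11>k$ by a simple counting bound; no candidate-by-candidate tightness is needed.
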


Finally, we consider whether we can start from an exhaustive committee satisfying EJR+, and modify it by \emph{cleverly} choosing JUQ swaps, such that EJR+ is maintained. Unfortunately, this is not always possible. Consider a multi-winner election with $k=5$ and $\approfile=(\{a,b\}\times 3,\,\{a,b,c\},\,\{c,d,e\},\,\{d,e\}\times 3,\,\{f\},\,\{g\})$. Committee $W=\{a,c,d,f,g\}$ satisfies EJR+. However $c$ witnesses a JUQ violation with either $b$ and voter set $\voters(b)$ or $e$ and voter set $\voters(e)$. These are the only JUQ violations, and satisfying either would lead to a committee that fails EJR+. This claim holds even if we select a particularly well chosen EJR+ committee, such as the output of the Method of Equal Shares. In the interest of space, we refer the definition of this rule to Appendix~\ref{app:omitted_proofs}.

\begin{restatable}{proposition}{mesjuqswaps}\label{prop:MES_JUQ_SWAPS}
    A sequence of JUQ swaps that starts from a committee satisfying EJR+ might terminate with a committee that violates EJR+. This holds even if the initial committee was selected by the Method of Equal Shares (with any completion method), and even if we use the profile of a candidate's supporter satisfactions\footnote{For some candidate $c\in W$ this profile would consist of the multiset $\{|A_i\cap W|\}_{i\in N(c)}$.} to select which JUQ swap to perform next.
\end{restatable}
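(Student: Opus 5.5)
The statement is existential, so I would prove it with an explicit instance: the election with $k=5$ and $\approfile=(\{a,b\}\times 3,\,\{a,b,c\},\,\{c,d,e\},\,\{d,e\}\times 3,\,\{f\},\,\{g\})$ given just before the proposition. Here $n=10$ and $n/k=2$, so a group of $s$ voters has quota $s/2$. The relevant quotas are $q_a=q_b=q_d=q_e=2$, $q_c=1$ and $q_f=q_g=\nicefrac12$. If this instance does not handle the Method of Equal Shares cleanly, I would fall back to a symmetric padded variant built from the same gadget.

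\textbf{Step 1: the starting committee is MES-compatible.} I would run MES with budget $k/n=\nicefrac12$ per voter and price $n/k=2$ per candidate.
\begin{itemize}
\item $a$ and $b$ each have $4$ supporters with total budget $2$, so each is affordable at $\rho=\nicefrac12$. The same holds for $d$ and $e$.
\item $c$ has $2$ supporters with total budget $1<2$, so it is unaffordable from the start.
\end{itemize}
Breaking ties toward $a$ exhausts the budgets of $N(a)$, which makes $b$ unaffordable. Symmetrically, buying $d$ makes $e$ unaffordable, since voter $\{c,d,e\}$ spends on $d$. After these two purchases MES stops with $\{a,d\}$. I then need a completion (and tie-breaking within MES) that yields $\{a,c,d,f,g\}$. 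Because the claim quantifies over completion methods, I would handle the standard ones: add-one, utilitarian AV completion, and Phragmén. The main obstacle is here. AV completion prefers $b$ and $e$ ($4$ approvals each) over $c$, $f$, $g$. To fix this, I would modify the instance, e.g.\ by duplicating the $\{f\}$ and $\{g\}$ blocks or adding more $c$-supporters, so that every completion necessarily yields a committee containing $c$ that suffers the same JUQ trap. Alternatively, I would show that every MES outcome ($\{a,d\}$ plus any completion) leads by JUQ swaps to an EJR+ violation.

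\textbf{Step 2: JUQ analysis of $W=\{a,c,d,f,g\}$.}
\begin{itemize}
\item $W$ satisfies EJR+. Every unselected candidate's supporter group has some member with satisfaction at least its lower quota. For example, voters in $\{a,b\}$ have satisfaction $1$, and $\lfloor q(N')\rfloor\le 1$ unless $|N'|\ge 4$, in which case $N'=N(b)$ contains voter $\{a,b,c\}$ with satisfaction $2$.
\item The only overrepresented candidate is $c$: both its supporters have satisfaction $2>\lceil q_c\rceil=1$.
\item Swapping $c$ for $b$ with $N'=N(b)$ leaves satisfactions at most $2=\lceil q_b\rceil$, so this is a JUQ swap. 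The swap to $e$ is symmetric.
\item No other pair is a JUQ violation, since the other candidates have a supporter at or below upper quota.
\end{itemize}

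\textbf{Step 3: the swap outcomes.} After swapping in $b$, we get $\{a,b,d,f,g\}$. Voter $\{c,d,e\}$ together with the $\{d,e\}$ voters approves $e$. I would check that a suitable group is below its lower quota, or more simply the pair $c$ with the group $\{a,b,c\}$, $\{c,d,e\}$. If that check fails, I would adjust multiplicities. I would then verify that the new committee has no JUQ violation, so the sequence terminates. Symmetry covers the $e$ swap.

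Finally, since both available swaps give supporter-satisfaction profiles with identical structure, any selection rule based on these profiles cannot avoid the bad outcome. This covers the last clause of the proposition.
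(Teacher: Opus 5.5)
\textbf{There is a genuine gap in the MES part, which is the substance of the proposition.} Your Step~1 mixes two normalizations: a per-voter budget of $k/n=1/2$ together with a price of $n/k=2$. Under the paper's definition the budget is $k/n$ and the price is $1$. With that normalization the computation changes as follows.
\begin{itemize}
\item Buying $a$ costs each of its four supporters only $1/4$. They keep $1/4$ each, so $b$ is still affordable at $\alpha=1/4$. The same holds for $d$ and $e$.
\item Candidate $c$ is affordable at the start (total budget exactly $1$, $\alpha(c)=1/2$). It always loses to the $1/4$ options and becomes unaffordable after the first purchase.
\end{itemize}
Hence MES returns $\{a,b,d,e\}$ under every tie-breaking, and every completion yields $\{a,b,d,e,x\}$ with $x\in\{c,f,g\}$.
\begin{itemize}
\item If $x\in\{f,g\}$, no candidate is overrepresented, so no swap is possible.
\item If $x=c$, the only JUQ swaps are $(c,f)$ and $(c,g)$. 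These end in $\{a,b,d,e,f\}$ or $\{a,b,d,e,g\}$, and both committees satisfy JUQ and EJR+.
\end{itemize}
So $\{a,c,d,f,g\}$ is not an MES outcome. This instance only witnesses the first sentence of the proposition, which is exactly how the paper uses it. The worry you raise about AV completion is moot, since MES itself already buys $b$ and $e$. The announced fallback (``pad the instance'') is not a proof: building that instance is the whole difficulty.

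What is missing is a construction with two properties. First, MES must fix almost the entire committee, so that the freedom of any completion method is neutralized by symmetry. Second, the JUQ-swap dynamics must be driven to the bad outcome even when the selector sees supporter-satisfaction profiles. In your small instance the last clause is vacuous, because only one candidate is ever overrepresented. In an MES-compatible instance it is precisely what needs an argument.

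The paper achieves this with $k=140$ and $x=31$:
\begin{itemize}
\item It uses fifteen disjoint single gadgets plus an enhanced double gadget with Type A--D candidates and padding voters. MES buys $138$ candidates and leaves only the $24$ symmetric Type C candidates for completion.
\item All overrepresented candidates have identical profiles: the single-gadget $c_1$--$c_3$ and the Type A candidates each have $18x$ supporters at satisfaction $4$. So a profile-based selector cannot avoid first removing $c'_1$ and $c'_4$.
\item Later swaps add the remaining Type C candidates until the last Type B candidate becomes overrepresented. Swapping it for $c'_1$ or $c'_4$ then leaves the other one as an EJR+ witness, with no UQ violation remaining, so the sequence stops.
\end{itemize}
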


\section{Conclusion and Research Directions}\label{sec:conclusion}

In this paper, we studied overrepresentation in approval-based multi-winner voting. We defined an axiom aimed at avoiding overrepresentation whenever possible, called justified upper quota (JUQ). We further introduced the class of composite Thiele rules and showed that, among this class, Adams-AV and its refinements are the only rules satisfying JUQ. Moreover, we showed that JUQ is satisfiable in polynomial time via the Upper Quota Elimination Rule. Additionally, we introduced an axiom that aims at balancing under- and overrepresentation, justified near quota (JNQ), and showed that, among the class of composite Thiele rules, only SLAV and its refinements satisfy it. Finally, we discussed the relationship between our notions and established lower quota notions, such as EJR+. We note that some of our results, such as the (partial) characterization of Adams-AV and SLAV through JUQ and JNQ (respectively), mirror and extend those obtained in the classical apportionment literature \citep{BaYo01a}. 

We identify several interesting open questions for future work. A first task would be to characterize the relationship between JUQ and JNQ for exhaustive committees and, similarly, the relationship between our axioms and EJR+. In other words, does there exist an exhaustive rule that avoids under- and overrepresentation at the same time? We know from our results that this is not possible within the composite Thiele class and that such a rule must fail the commonly studied axiom of priceability. Next, we note that all other established voting rules we have examined fail both our axioms. This suggests that our axioms embody a rather different approach to multi-winner voting rules compared to the existing literature. Therefore, we believe that finding natural and normatively-appealing rules that satisfy our axioms could lead to fundamentally new insights in multi-winner voting. Finally, we point out that approval voting is only one possible ballot type in multi-winner voting. It would be interesting to explore overrepresentation for ordinal or cardinal preferences or even for subset selection problems beyond multi-winner voting, such as recently done for clustering by \citet{jerrett2025low}.


\begin{acks}
	\includegraphics[height=\fontcharht\font`\B]{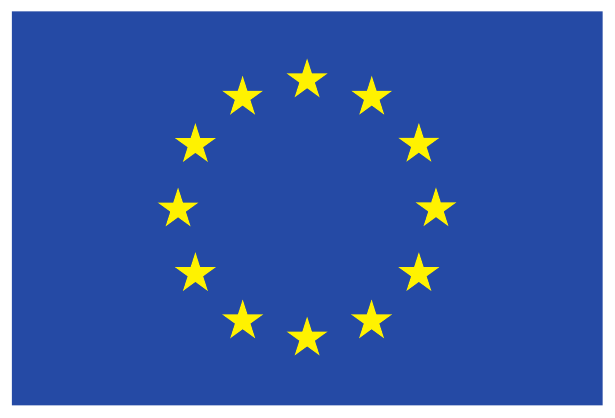} Oliviero Nardi has received funding from the \grantsponsor{101034440}{European Union's Horizon 2020 research and innovation programme}{} under grant agreement No~\grantnum{101034440}{101034440}.
This research was funded in whole or in part by the \grantsponsor{10.55776}{Austrian Science Fund (FWF)}{} (grants \grantnum{10.55776/PAT7221724}{10.55776/PAT7221724} and \grantnum{10.55776/COE12}{10.55776/COE12}), by netidee Förderungen, Austria (\url{https://www.netidee.at}) and by the \grantsponsor{10.47379}{Vienna Science and Technology Fund (WWTF)}{} (grant \grantnum{10.47379/ICT23025}{10.47379/ICT23025}). Jannik Peters received funding by the Singapore Ministry of Education under grant number MOE-T2EP20221-0001.
We thank Markus Brill for helpful discussions.
\end{acks}


\bibliographystyle{ACM-Reference-Format}
\bibliography{abb,algo,references}

\newpage
\appendix

\section{Omitted Proofs and Definitions} \label{app:omitted_proofs}
\obsejrpeq*
\begin{proof}
	Let $(\approfile,k)$ be an apportionment instance with parties $P_1,\dots,P_t$, and let $W$ be a committee.
	First, assume that $W$ violates apportionment lower quota. Then there exists a party $P_\ell$ such that
	$a_\ell(W) < \lfloor q(P_\ell)\rfloor$.
	Since $|A_i|=k$ for $i\in P_\ell$ and $a_\ell<k$, there exists a candidate $c\in A_i\setminus W$.
	Since $P_\ell = \voters(c)$ and $|A_i\cap W| = a_\ell < \lfloor q_c\rfloor$, this also witnesses a violation of EJR+.
	
	On the other hand, assume that $W$ violates EJR+. Then there exist $c\in C\setminus W$ and a non-empty set $N'\subseteq N(c)$ such that $|A_i\cap W| < \lfloor q(N')\rfloor$ for all $i\in N'$.
	Let $P_\ell$ be the party of voters approving $c$. Then $N(c)=P_\ell$, and by definition of apportionment instances $a_\ell < \lfloor q(N')\rfloor \le \lfloor q(P_\ell)\rfloor$. Hence, $W$ violates apportionment lower quota for $P_\ell$.
\end{proof}

\adamsextends*

\begin{proof}
	In Appendix~\ref{app:adams:thiele} we observe that, for a fixed instance with $n$ voters and target committee size $k$, Adams-AV is equivalent to the simple Thiele rule with weight vector $w=(n\cdot k, 1, \nicefrac{1}{2}, \nicefrac{1}{3}, \dots)$. Hence, by Theorem~1 of \citet{BLS18a}, we know that Adams-AV extends the divisor method $F$ defined by $d=(\nicefrac{1}{nk}, 1, 2, 3, \ldots)$. It remains to be shown that $F$ is equivalent to Adams' method.

	Recall that Adams' method starts by assigning a seat to every party (in order of party size) and continues until all seats are assigned or until all parties have received at least one seat. Then, during the execution of $F$, let $P_i$ be the largest party such that $a_i=0$. For every party $P_j$ with $a_j=0$,
	\[
		\nicefrac{|P_i|}{d(0)}=n\cdot k\cdot|P_i|\geq n\cdot k\cdot|P_j|=\nicefrac{|P_j|}{d(0)}.
	\]
	On the other hand, for every party $P_j$ with $a_j>0$,
	\[
		\nicefrac{|P_i|}{d(0)}=n\cdot k\cdot|P_i|>n-1\geq |P_j|\geq |P_j|\cdot\nicefrac{1}{d(a_j)}.
	\]
	Thus, as long as there are parties that have received no seats, $F$ will assign the next seat to the largest party among them. This coincides with the behavior of Adams' method. Finally, once $a_\ell>0$ for all parties $P_\ell$, it is straightforward to see that $F$ behaves like Adams' method, since their $d$-sequences are identical from the second entry onward.
\end{proof}

\JUQVerifPoly*

\begin{proof}
	Fix a committee $W$. Clearly, there are only polynomially-many pairs of candidates $c$ and $d$ to check. Furthermore, there are polynomially-many possible sizes of $\voters^\prime$ to check, namely $n_d\leq n$. We show how to check, for each $(c,d)\in\candidates^2$ and $s\in[n_d]$, whether a counterexample for JUQ exists with candidate pair $(c, d)$ and with $|\voters^\prime|=s$.

	First, we check whether the condition on $\voters(c)$ holds (i.e., all voters in this set are above their upper quota). If not, then we are done. If yes, then we compute the quota associated to $s$ as $\nicefrac{(k\cdot s)}{n}$ and check, for every $i\in\voters(d)$, whether $|\app_i\cap ((W\cup\{d\})\setminus\{c\})|\leq\ceil{\nicefrac{(k\cdot s)}{n}}$ holds. If there exists at least $s$ voters in $\voters(d)$ such that the previous conditions holds, then we can use such voters to form $\voters^\prime$ and JUQ is failed. Otherwise, no counterexample with $\voters^\prime$ of size $s$ exists.
\end{proof}

\adamschar*
\begin{proof}
	First, observe that any refinement of Adams-AV satisfies apportionment upper quota (\Cref{prop:JUQ-apportionment} and \Cref{thm:JUQ-adams}). Fix $\boldsymbol{w}$ and suppose $F_{\boldsymbol{w}}$ satisfies apportionment upper quota. We will show that $F_{\boldsymbol{w}}$ refines Adams-AV. We will work in five steps, using five apportionment instances.
	\begin{enumerate}
		\item First, we show that $w_1=(1,0,\ldots)$.
		\item Second, we argue that $\boldsymbol{w}$ has at least two entries.
		\item Third, we prove that $w_2(s)>0$ for all $s\in\posNats$.
		\item Fourth, we demonstrate that $w_2(s)\geq \nicefrac{1}{(s-1)}$ for all $s>1$.
		\item Finally, we conclude that $w_2=(1,1,\nicefrac{1}{2},\nicefrac{1}{3},\ldots)$. 
	\end{enumerate}
	
	The above is sufficient to show that $F_{\boldsymbol{w}}$ refines Adams-AV. In the following, we will only argue about apportionment instances. Observe that any two $W$ and $W^\prime$ with $\boldsymbol{a}(W)=\boldsymbol{a}(W^\prime)$ yield the same $w$-score for any weight function $w$. Therefore, we will only speak about apportionment vectors, and ignore the underlying committee.
	
	First, fix $k=2$ and consider $n\in 2\posNats$ with $n>2$. Consider profile
	\[\approfile=(\underbrace{1, \ldots, 1}_{\nicefrac{n}{2}\text{ many}}, \nicefrac{n}{2}).\]
	Since $w_1(1)=1\geq w_1(2)$, we cannot have $a_{1+\nicefrac{n}{2}}=0$, as we could increase $\score_{w_1}(\approfile, k)$ by removing one seat from some party $\ell$ with $\ell\in[\nicefrac{n}{2}]$ and giving it to party $1+\nicefrac{n}{2}$. Moreover, $\lceil q(P_\ell)\rceil=1$ for all parties $\ell\in[1+\nicefrac{n}{2}]$. By apportionment upper quota, we must then have $a_{1+\nicefrac{n}{2}}=1$. Therefore,
	\[\nicefrac{n}{2}\cdot(w_1(1)+w_1(2)) \leq (\nicefrac{n}{2}+1)\cdot w_1(1) \implies \nicefrac{n}{2}\cdot w_1(2) \leq w_1(1).\]
	Since $w_1$ does not depend on $n$ (which can be arbitrarily large), we obtain $w_1(2)=0$.
	
	Next, assume $\approfile=(1,\,2)$ and $k=3$. If $\boldsymbol{w}=(w_1)$, then some $W$ with $\boldsymbol{a}(W)=(2,\,1)$ is among the winning committees, which violates apportionment upper quota. Thus, $\boldsymbol{w}$ has at least two entries. Note that we can assume $w_1\neq w_2$ without loss of generality, which implies $w_2(2)>0$.
	
	Now, assume that $w_2(s)=0$ for some $s\in\posNats$. Pick $s^*$ to be the smallest such $s$, that is, $s^*=\min\{s\in\posNats\colon w_2(s)=0\}$. Observe that $s^* > 2$. Let $k=2s^*-2$. Construct profile $\approfile=(s^*,\,s^*-2)$ with $n=2s^*-2$ voters. Note that $\lceil q(P_1)\rceil=s^*$ and $\lceil q(P_2)\rceil=s^*-2$. However, since $w_1(2)=w_2(s^*)=0<w_2(s^*-1)$, any winning committee must have $a_1=a_2=s^*-1$, violating apportionment upper quota. Therefore, $w_2(s)>0$ for every $s\in\posNats$. 
	
	From now on, we can safely assume that $w_2(1)=w_2(2)=1$. Fix $s > 2$. Pick an $\epsilon\in\rationals\cap(0, 1)$ where $\epsilon$ is arbitrarily small and some $h\in\posNats$ such that $h\cdot \epsilon\in\posNats$. Let $g=\lceil \nicefrac{1}{\epsilon} \rceil$ and observe that $1\leq g\epsilon$. Let $k=g(s-1)+2$ and $n=h+gh(s-1+\epsilon)$. Consider profile
	\[\approfile=(h,\,\underbrace{h(s-1+\epsilon),\,\ldots,\,h(s-1+\epsilon)}_{g\text{ many}}).\]
	Then, we have
	\[q(P_1) = \frac{h}{h+gh(s-1+\epsilon)}\left[g(s-1)+2\right] \leq 1\quad \impliedby \quad 1 \leq g\epsilon\]
	which holds by definition of $g$. Similarly, for $1<\ell\leq g+1$,
	\[q(P_\ell) = \frac{h(s-1+\epsilon)}{h+gh(s-1+\epsilon)}\left[g(s-1)+2\right] > s-1 \quad \impliedby \quad s-1+2\epsilon > 0,\]
	which holds because $s>2$ and $\epsilon<1$. Since any winning $W$ must be $w_1$-optimal, given $k\geq g+1$, we must have $a_\ell(W)>0$ for all $\ell\in[g+1]$. Therefore, $a_1=1$, as otherwise we would violate apportionment upper quota. Now, consider two committees $W$ and $W^\prime$ such that $\boldsymbol{a}(W)=(1,s,s-1,\ldots,s-1)$ and $\boldsymbol{a}(W^\prime)=(2,s-1,\ldots,s-1)$. First, observe that $w_2$-score of $W$ must be higher or equal than the $w_2$-score of any $W^\dagger$ with $a_1(W^\dagger)=1$. This holds because $w_2$ is (by definition) weakly decreasing and every two parties $\ell,\ell^\prime\in[g+1]\setminus\{1\}$ have $|P_\ell|=|P_{\ell^\prime}|$. Hence, we can achieve maximal $w_2$-score (among committees $W^\dagger$ with $a_1(W^\dagger)=1$) by spreading the seats over the parties in $[g+1]\setminus\{1\}$ as evenly as possible, which is what $W$ achieves. Hence, since $W^\prime$ violates apportionment upper quota, its $w_2$-score must be strictly smaller than the $w_2$-score of $W$, yielding
	\begin{multline*}
		h(w_2(1)+w_2(2))+gh(s-1+\epsilon)\sum_{i=1}^{s-1}w_2(i)<
		\\ h\cdot w_2(1) + h(s-1+\epsilon)\cdot w_2(s) + gh(s-1+\epsilon)\sum_{i=1}^{s-1}w_2(i),
	\end{multline*}
	which in turn implies
	\[\frac{w_2(2)}{s-1+\epsilon} < w_2(s).\]
	Since $w_2(2)=1$ and this holds for any $s>2$ and arbitrarily small $\epsilon$ we get that $w_2(s)\geq \nicefrac{1}{(s-1)}$ for all $s>1$. 
	
	Finally, we will show that $w_2(s) = \nicefrac{1}{(s-1)}$ for all $s>1$ by induction on $s$. The base case $(s=2)$ is trivial, as we already know $w_2(2)=1$. Hence, fix $s>2$ and consider the induction hypothesis
	\begin{center} ``$w_2(i)=\nicefrac{1}{(i-1)}$ for $i\in[s-1]\setminus\{1\}$.''\end{center}
	Pick an $\epsilon\in\rationals\cap(0, 1)$ where $\epsilon$ is arbitrarily small and some $h\in\posNats$ such that $h\cdot \epsilon\in\posNats$. Let
	\[g = \left\lceil \frac{s-1}{\epsilon} \right\rceil.\]
	Fix $k=g+s$ and consider the profile with $n=h(g+s-1-\epsilon)$ voters
	\[\approfile=(\underbrace{h,\,\ldots,\,h}_{g\text{ many}},\,h(s-1-\epsilon)).\]
	Observe that, for any $\ell\in[g]$, we have
	\[q(P_\ell) = \frac{h}{h(g+s-1-\epsilon)}(g+s) > 1 \impliedby 1+\epsilon > 0, \]
	which holds by definition of $\epsilon$. Next,
	\[q(P_{g+1}) = \frac{h(s-1-\epsilon)}{h(g+s-1-\epsilon)}(g+s) \leq s-1\impliedby g+1\geq \nicefrac{(s-1)}{\epsilon},\]
	which again by definition of $g$. Moreover, observe that since any winning committee must be $w_1$-optimal and $k>g+1$, every party gets assigned at least one seat. Consequently, we have $k-(g+1)=s-1$ seats left to assign. Let $p=s-1-a_{g+1}$. We now show that $p\leq 0$, i.e. that the $(g+1)$-th party receives at least $s-1$ seats. Indeed, suppose $p>0$. Since
	\[ p\leq s-1\leq \nicefrac{s-1}{\epsilon}\leq g,\]
	we can achieve maximal $w_2$-score by any apportionment of form
	\[(\underbrace{2,\,\ldots,\,2}_{(p+1)\text{ many}},\,\underbrace{1,\,\ldots,\,1}_{(g-p-1)\text{ many}},\,s-1-p).\]
	However, the $w_2$-score of the above is strictly smaller than the $w_2$-score we would obtain with apportionment $(2,\,1,\,\ldots,\,1,\,s-1)$. Indeed,
	\begin{multline*}
		gh\cdot w_2(1) + h(p+1)w_2(2) + h(s-1-\epsilon)\sum_{i=1}^{s-1-p} w_2(i) < \\
		gh\cdot w_2(1) + h\cdot w_2(2) + h(s-1-\epsilon)\sum_{i=1}^{s-1} w_2(i),
	\end{multline*}
	which for positive $p$ holds if
	\begin{equation}
		p\cdot w_2(1)=p < (s-1-\epsilon) \sum_{i=s-p}^{s-1} w_2(i).\label{eq:proof_step}
	\end{equation}
	Now, by the induction hypothesis, the smallest addend involved in the summation on the right-hand side is
	\[\frac{s-1-\epsilon}{s-2}>1,\]
	because $\epsilon<1$. Therefore
	\[p = \sum_{i=s-p}^{s-1} 1 < (s-1-\epsilon) \sum_{i=s-p}^{s-1} w_2(i),\]
	which shows that Equation~\eqref{eq:proof_step} holds. Hence, $a_{g+1}\geq s-1$. However, since $a_{g+1} = s$ violates apportionment upper quota, we must have
	\[gh\cdot w_2(1)+h\cdot w_2(2) + h(s-1-\epsilon)\sum_{i=1}^{s-1}w_2(i) > gh\cdot w_2(1) + h(s-1-\epsilon)\sum_{i=1}^sw_2(i),\]
	which in turn implies
	\[\frac{w_2(2)}{s-1-\epsilon} > w_2(s).\]
	Since $w_2(2)=1$ and this holds for arbitrarily small $\epsilon$, we have that $w_2(s)\leq\nicefrac{1}{(s-1)}$, and thus, by the previous arguments in this proof, $w_2(s)=\nicefrac{1}{(s-1)}$. This closes the induction step, and we are done.
\end{proof}

\JUQChains*

\begin{proof}
	Clearly, no JUQ swap can decrease the number of covered voters in a committee, as any overrepresented voter must have satisfaction at least $2$. Moreover, given a committee $W$ that covers the maximal number of voters, by inspecting the proof of Theorem~\ref{thm:JUQ-adams}, we can observe that JUQ swaps always increase the $w_2$-score of $W$, where $w_2=(1,1,\nicefrac{1}{2},\nicefrac{1}{3},\ldots)$. Since there are finitely many committees, it follows that sequences of JUQ swaps must always terminate. 
\end{proof}

\UQEREJR*

\begin{proof}
    It is easy to see that the initial committee $W_0=C$ satisfies EJR+, as there are no candidates outside the committee that can witness an EJR+ violation. We show by induction that UQER maintains EJR+.
    
	Suppose removing a candidate $c_j$ is the first time we encounter an EJR+ violation. As we started from the set of all candidates $C$, this violation must involve some previously removed candidate $c_x$, $x<j$ and voter group $N'\subseteq N(c_x)$. This means that $|N'|\leq |N(c_x)|\leq |N(c_j)|$ from the definition of our rule, and thus $q(N')\leq q_{c_x}\leq q_{c_j}$. 
		
	We know that $|A_i\cap W_{j-1}|>\ceil{q_{c_j}}$ for all $i\in N(c_j)$, as $c_j$ violated UQ in $W_{j-1}$. We also know that $|A_i\cap (W_{j-1}\setminus \{c_j\})|<\floor{q(N')}$ for all $i\in N'$, because of our EJR+ violation assumption.
		
	Suppose $N(c_j)\cap N'\neq \emptyset$ and let $i^* \in N(c_j)\cap N'$. From the above, $\floor{q(N')}\geq |A_{i^*}\cap W_{j-1}|>\ceil{q_{c_j}}\geq \ceil{q(N')}$, which is a contradiction. Thus, $N(c_j)\cap N'=\emptyset$. This means that $|A_i\cap W_{j-1}|=|A_i\cap (W_{j-1}\setminus \{c_j\})|<\floor{q(N')}$ for all $i\in N'$, and thus $c_x$ with voter group $N'$ would have violated EJR+ in $W_{j-1}$, which contradicts our assumption that $W_j$ is the first time we encounter an EJR+ violation.
		
    Thus every candidate set from $\{W_0,W_1,\dots, W_{j^*}\}$ satisfies EJR+.
\end{proof}

\nearquotaequiv*

\begin{proof}
	The statement follows immediately from the definition via algebraic manipulation.
\end{proof}

\JNQApportionment*

\begin{proof}
If a committee satisfies JNQ, then it clearly satisfies apportionment near quota. Let us prove the inverse direction. 

Suppose that $W$ satisfies apportionment near quota. Assume towards a contradiction that $c$, $d$ and $\voters^\prime$ are witnesses to a JNQ violation. Let $i$ be the party of voters approving $c$ and $j$ the party of voters approving $j$. 

First, assume $i=j$. This implies $\voters^\prime\subseteq\voters(c)$ and hence $q(\voters^\prime)\leq q_c$. However, the inequalities that define JNQ reduce to $a_i>q_c$ and $q(\voters^\prime)>a_i$: contradiction.

Hence, assume $i\neq j$. In this case, the first condition of JNQ reduces to $a_i-q_c>\nicefrac{1}{2}$ and the second to $q(\voters^\prime)-a_j>\nicefrac{1}{2}$. Since $\voters^\prime\subseteq\voters(d)$, we have $q(\voters^\prime)\leq q_d$ and hence $q_d-a_j<\nicefrac{1}{2}$. This is a violation of apportionment near quota, completing the proof.
\end{proof}

\slav*

\begin{proof}
    Observe that the JNQ violation implies $2q_c <2|A_i\cap W|-1$ for all voters $i\in N(c)\setminus N(d)$
    and
    $2q(N')>1+2|A_i\cap W|$ for all voters $i\in N'\setminus N(c)$.
    
    Assume for a contradiction that we have a JNQ violation witnessed by $c\in W$, $d\in C\setminus W$ and $N'\subseteq \voters(d)$ that changes the SLAV score by:
    \begin{multline*}
        \score_w(\approfile, W \cup \{d\} \setminus \{c\}) - \score_w(\approfile, W)
        \ge \sum_{i \in N(c)\setminus N'} - \frac{1}{2 \lvert A_i \cap W \rvert - 1} + 
        \sum_{i \in N'\setminus N(c)}  \frac{1}{2 \lvert A_i \cap W \rvert + 1}> \\
        -\frac{|N(c)\setminus N'|}{2q_c}+\frac{|N'\setminus N(c)|}{2q(N')}
    \end{multline*}

    We consider two cases. If $N'\cap \voters(c)=\emptyset$, then:
    \begin{multline*}
	    \score_w(\approfile,W \cup \{d\} \setminus \{c\}) - \score_w(\approfile,W)
	    >-\frac{|N(c)\setminus N'|}{2q_c}+\frac{|N'\setminus N(c)|}{2q(N')}=\\
	    -\frac{|N(c)|}{2q_c}+\frac{|N'|}{2q(N')}
	    =-\frac{n}{2k}+\frac{n}{2k}
	    =0
    \end{multline*}
    
    Suppose instead that there exists some voter $i\in N'\cap \voters(c)$. This implies $q_c <q(N')$. Then:

    \begin{multline*}
        \score_w(\approfile,W \cup \{d\} \setminus \{c\}) - \score_w(\approfile,W)
        >-\frac{|N(c)\setminus N'|}{2q_c}+\frac{|N'\setminus N(c)|}{2q(N')}=\\
        -\frac{|N(c)|}{2q_c}+\frac{|\voters(c)\cap N'|}{2q_c}+\frac{|N'|}{2q(N')}-\frac{|\voters(c)\cap N'|}{2q(N')}
        =|\voters(c)\cap N'|\bigg(\frac{1}{2q_c}-\frac{1}{2q(N')}\bigg)
        >0
    \end{multline*}

    Hence, the SLAV score would strictly increase by swapping from candidate $c$ to $d$ and therefore SLAV satisfies JNQ.
\end{proof}

\JNQChain*

\begin{proof}
    The proof is analogous to the proof of Proposition~\ref{prop:JUQ-chain} by referencing the proof of Theorem~\ref{thm:SLAV_JNQ} instead of Theorem~\ref{thm:JUQ-adams}.
\end{proof}

\JNQVerifPoly*

\begin{proof}
 This proof is essentially analogous to the proof of \Cref{prop:juq-verif-poly}. The only difference is that, for every voter in $i\in\voters(d)$, we check whether $\nicefrac{(k\cdot s)}{n}-  \lvert A_i \cap W\rvert> \nicefrac{1}{2} \cdot \indicator[c \notin A_i]$ holds instead.
\end{proof}

\uqerelimination*

\begin{proof}
    Consider the multi-winner instance with $n=15$, $k=10$ depicted in Table~\ref{tab:UQER_depletion}.

\begin{table}[h!]
    \small
    \setlength{\tabcolsep}{3pt}
    \centering
    \begin{tabular}{|c|ccc:ccc:ccc:c:c|}
         \hline
         Candidate  & $A_1$ & $A_2$ & $A_3$ & $A_4$ & $A_5$ & $A_6$ & $A_7$ & $A_8$ & $A_9$ & $A_{10}-A_{12}$ & $A_{13}-A_{15}$ \\
         \hline
         $\hat{c}_1,\hat{c}_2$ &  \cmark & \cmark & \cmark &  & & & & & & & \\
         $c_1,c_2$ & \cmark & & &  & &  & & & & \cmark & \\
         $c_3,c_4$ & & \cmark & &  & &  & & & & \cmark & \\
         $c_5,c_6$ & & & \cmark &  & &  & & & & \cmark & \\
         \hdashline
         $\hat{c}_3,\hat{c}_4$ & & & & \cmark & \cmark & \cmark & & & & & \\
         $c_7,c_8$ & & & & \cmark & &   & & & & \cmark & \\
         $c_9,c_{10}$ & & & & & \cmark &   & & & & \cmark & \\
         $c_{11},c_{12}$ & & & & & & \cmark   & & & & \cmark & \\
         \hdashline
         $\hat{c}_5,\hat{c}_6$ & & & & & & & \cmark & \cmark & \cmark & & \\
         $c_{13},c_{14}$ & & & & & & &  \cmark & & &  \cmark & \\
         $c_{15},c_{16}$ & & & & & & & & \cmark & &  \cmark & \\
         $c_{17},c_{18}$ & & & & & & & & &  \cmark &   \cmark & \\
         \hdashline
         $c^*_1,c^*_{2}$ &  & & & & & & & & & & \cmark  \\
         \hline
    \end{tabular}
    \caption{Example instance for \Cref{prop:UQER_elimination}.}
    \label{tab:UQER_depletion}
\end{table}

In order to obtain $W_{j^*}$, UQER removes $\hat{c}_1-\hat{c}_{6}$, as these are the UQ-violating candidates with fewest supporters, and terminates.

From the remaining candidates, in order to satisfy EJR+\footnote{Or even the weaker proportionality notion of EJR.} we must select:

\begin{itemize}
    \item At least 3 candidates from each of $\{c_1,\dots, c_6\}$, $\{c_7,\dots, c_{12}\}$ and $\{c_{13},\dots, c_{18}\}$. This holds as each group of three voters among the first nine must have one voter who approves of at least two candidates in the outcome, and another who approves of at least one.
    \item Both $c^*_1$ and $c^*_2$, as the last three voters together deserve at least $2$ candidates according to EJR+.
\end{itemize}

This is $11>k$ candidates. Thus $W_{j^*}$ need not contain a subset of size $k$ that satisfies EJR+.
\end{proof}

Before proving the next theorem, we define the Method of Equal Shares~\citep{PeSk20b}.

\begin{definition}
	The Method of Equal Shares (MES) works sequentially in rounds. We begin with the empty committee, $W_0=\emptyset$, and at each round $r$ we add a candidate to $W_{r-1}$ to obtain $W_r$. Initially, all voters are assigned a budget of $\nicefrac{k}{n}$. Let $b_r(i)$ be the budget of voter $i$ before round $r$ (hence, $b_1(i)=\nicefrac{k}{n}$ for all $i\in\voters$). In round $r$, we consider the set $\candidates_r\subseteq\candidates$ of all candidates in $\candidates\setminus W_r$ such that $\sum_{i\in\voters(c)}b_r(i)\geq1$. If $\candidates_r$ is empty, the rule terminates and returns $W_r$. Otherwise, we compute $\alpha(c)$ for each $c\in\candidates_r$ as
\[ \alpha(c) = \min\left\{\alpha\in\reals\mid \sum_{i\in\voters(c)}\min(\alpha,\,b_r(i)) = 1 \right\}. \]
We select a candidate $c^*$ with minimal $\alpha(c^*)$ and set $b_{r+1}(i)=\max(0, b_r(i)-\alpha(c^*))$ if $i\in\voters(c^*)$ and $b_{r+1}(i)=b_r(i)$ otherwise.

Note that MES might terminate in less than $k$ rounds, and thus return a committee with fewer than $k$ candidates. To enforce output committees to have size exactly $k$, one needs to extend MES with a so-called \emph{completion method}~\cite[Chapter 2]{LaSk22a}.
\end{definition}

\mesjuqswaps*

\begin{proof}

We structure our counterexample in 3 parts. First, we construct a multi-winner instance, then derive the outcome of MES (with arbitrary completion), and finally, consider sequences of JUQ swaps starting from that outcome.

\paragraph{Constructing an instance} We construct a counterexample instance as follows. For notational convenience, let $x=31$. Moreover, let $k=140$. Set the number of voters to be $n=6xk$, which notably means that the quota of a group of $6x$ voters will be exactly $1$.

To assist with our argument, consider two multi-winner instances, the ``single gadget'' in \Cref{tab:destabilizing_gadget} and the ``double gadget'' in \Cref{tab:destabilizing_gadget2}. Observe that in the former MES selects all $6$ candidates, which is an exhaustive outcome. Here, candidates $c_1$, $c_2$ and $c_3$ all violate UQ. Similarly, MES selects all candidates in the double gadget, and $c'_1-c'_6$ all witness UQ violations.

We add 15 copies of the single gadget to our instance (giving $15\cdot6\cdot 6x$ voters), as well as one double gadget, with all gadgets being disjoint, i.e. a voter from any gadget approves only candidates from that gadget. 
Call candidates $c'_1-c'_6$ of the double gadget ``Type A'' and $c'_7-c'_{18}$ ``Type B''. Call the voters sets in the last 24 columns of the double gadget $\{S_1,\dots,S_{12}\}$, as indicated in \Cref{tab:destabilizing_gadget2}.

\begin{table}[h!]
    \small
    \setlength{\tabcolsep}{3pt}
    \centering
    \begin{tabular}{|c|ccc:ccc|}
        \hline
        Candidates & $6x$ & $6x$ & $6x$ & $6x$ & $6x$ & $6x$ \\
        \hline
        $c_1-c_3$ & \cmark & \cmark & \cmark & & & \\
        \hdashline
        $c_4$ & \cmark & & & \cmark & & \\
        $c_5$ & & \cmark & & & \cmark & \\
        $c_6$ & & & \cmark & & & \cmark \\
        \hline
    \end{tabular}
    \caption{Single Gadget: Multi-winner instance with $n=36x$ and $k=6$. ``$6x$'' in the column headers represents the number of voters with that approval profile.}
    \label{tab:destabilizing_gadget}
\end{table}

\begin{table}[h!]
    \footnotesize
    \setlength{\tabcolsep}{2pt}
    \centering
    \begin{tabular}{|c|cccccc:cccccc:cccccccccccc|}
        \hline
        & $R_1$ & $R_2$ & $R_3$ & $R_4$ & $R_5$ & $R_6$ & $R_7$ & $R_8$ & $R_9$ & $R_{10}$ & $R_{11}$ & $R_{12}$ 
        & $S_1$ & $S_2$ & $S_3$ & $S_4$ & $S_5$ & $S_6$ & $S_7$ & $S_8$ & $S_9$ & $S_{10}$ & $S_{11}$ & $S_{12}$ 
        \\
        \hline
        Candidates 
        & $3x$ & $3x$ & $3x$ & $3x$ & $3x$ & $3x$ & $3x$ & $3x$ & $3x$ & $3x$ & $3x$ & $3x$ 
        & $3x$ & $3x$ & $3x$ & $3x$ & $3x$ & $3x$ & $3x$ & $3x$ & $3x$ & $3x$ & $3x$ & $3x$ 
        \\
        \hline
        $c'_1-c'_3$ & \cmark & \cmark & \cmark & \cmark & \cmark & \cmark  & & & & & & & & & & & & & & & & & &  \\
        \hdashline
        $c'_4-c'_6$ & & & & & & & \cmark & \cmark & \cmark & \cmark & \cmark & \cmark  & & & & & & & & & & & &  \\
        \hdashline
        $c'_7$ & \cmark & & & & & & \cmark & & & & & & \cmark & & & & & & \cmark & & & & & \\
        $c'_8$ & & \cmark & & & & & & \cmark & & & & & & \cmark & & & & & & \cmark & & & &  \\
        $c'_9$ & & & \cmark & & & & & & \cmark & & & & & & \cmark & & & & & & \cmark & & & \\
        $c'_{10}$ & & & & \cmark & & & & & & \cmark & & & & & & \cmark & & & & & & \cmark & & \\
        $c'_{11}$ & & & & & \cmark & & & & & & \cmark & & & & & & \cmark & & & & & & \cmark &  \\
        $c'_{12}$ & & & & & & \cmark & & & & & & \cmark & & & & & & \cmark & & & & & & \cmark \\
        \hline
    \end{tabular}
    \caption{Double Gadget: Multi-winner instance with $n=72x$ and $k=12$. ``$3x$'' in the column headers represents the number of voters with that approval profile. We call the voter sets in the first $12$ columns $R_1$ through $R_{12}$, and those in the last $12$ columns $S_1$ through $S_{12}$.}
    \label{tab:destabilizing_gadget2}
\end{table}

We additionally add the following voter sets to the instance:

\begin{itemize}
    \item $\{T_1,\dots,T_{12}\}$, each containing $9x+1$ voters.
    \item $\{U_1,\dots,U_{12}\}$, each containing $9x+1$ voters.
    \item $Z$, containing $12x-24$ additional voters to ensure $n=6xk$.
\end{itemize}

This gives us a total of $140\cdot6x=26040$ voters.

Divide the $216x+24$ voters from $\{T_1,\dots,T_{12}\}$ and $\{U_1,\dots,U_{12}\}$ into $35$ disjoint groups of equal size (each containing $192$ voters), and call these groups $\{Y_1,\dots,Y_{35}\}$.\footnote{The choice of $x=31$ ensured that $216x+24$ was divisible by $35$.}

We then add the following candidates to the instance, including them in the approval sets of some voters from the double gadget, and some of the additional voters above:

\begin{itemize}
    \item ``Type C'' candidates:
    \begin{itemize}
    \item Candidates $\hat{c}_i$ for $1\leq i\leq 12$, approved by $S_i \cup T_i$.
    \item Candidates $\hat{c}_{i+12}$ for $1\leq i\leq 12$, approved by $S_i \cup U_i$.
    \end{itemize}
    \item ``Type D'' candidates:
    \begin{itemize}
    \item Candidates $\bar{c}_{i}$ for $1\leq i\leq 34$, approved by $Y_i$ and $Y_{i+1}$.
    \item Candidate $\bar{c}_{35}$, approved by $Y_{35}$ and $Y_{1}$.
    \end{itemize}
    \item Candidate $c^Z$, approved by the voters in $Z$.
\end{itemize}

\paragraph{MES outcome} Let $n_A,n_B,n_C,n_D$ and $q_A,q_B,q_C,q_D$ be the number of supporters and quotas of each candidate type. Observe the following:

\begin{itemize}
    \item Type A candidates are approved by $n_A=18x$ voters and thus have $q_A=3$
    \item Type B candidates are approved by $n_B=12x$ voters and thus have $q_B=2$
    \item Type C candidates are each approved by $12x+1$ voters, which means $2<q_C<3$
    \item Type D candidates are each approved by $18x>384>12x+1$ voters, and thus $2<q_D<3$ also, but $q_D>q_C$.
\end{itemize}

For the purposes of MES, we can treat each single gadget separately, as they are disjoint, and also disjoint from the (enhanced) double gadget. We can also treat the voters in $Z$ separately.
MES selects every candidate from every single gadget ($15\cdot6=90$ in total), as well as $c^Z$. From the enhanced double gadget, MES selects, in order:

\begin{itemize}
    \item All $6$ Type A candidates $\{c'_1,\dots,c'_6\}$, exhausting the budgets of voters in $\{R_1,\dots,R_{12}\}$.
    \item All $35$ Type D candidates $\{\bar{c}_1,\dots,\bar{c}_{35}\}$. At this point, the voters in $\{T_1,\dots,T_{12}\}$ and $\{U_1,\dots,U_{12}\}$ have purchased $35$ candidates, from their combined MES budgets of $(216x+24)/6x\approx 36.129$. Thus, by symmetry, each of the $24$ voter groups has less than $0.048$ budget remaining among its $9x+1$ members. Meanwhile, each voter group in $\{S_1,\dots,S_{12}\}$ has an MES budget of $0.5$ among its $3x$ members. We can conclude that no Type C candidate is affordable by its supporters.
    \item All $6$ Type B candidates $\{c'_7,\dots,c'_{12}\}$.
\end{itemize}

Thus, MES selects a total of $138$ candidates. To complete this committee to size $k=140$, we choose any two Type C candidates, and call the resulting set $W$.

\paragraph{JUQ swaps} First, observe that no Type B candidate currently constitutes a UQ violation for $W$, even if any one additional candidate is added to it. For instance, a committee in which $c'_7$ violates UQ, must also contain $\hat{c}_1,\hat{c}_7,\hat{c}_{13}$ and $\hat{c}_{19}$, as well as two of $c'_1-c'_3$, and two of $c'_4-c'_6$ in order to give all voters in $R_1,R_7,S_1$ and $S_7$ a satisfaction of $3$. In other words, it is necessary (but not sufficient) for at least 4 Type C candidates to be in a committee in order to involve any Type B candidate in a UQ violation.

In $W$, the only candidates that violate UQ are $\{c_1,c_2,c_3\}$ in each of the 15 single gadgets, as well as the Type A candidates from $\{c'_1,c'_2,c'_3\}$ and $\{c'_4,c'_5,c'_6\}$ in the double gadget. Observe that each of these candidates has exactly the same amount of supporters, $18x$, with each of those supporters obtaining a satisfaction of $4$ from committee $W$. Removing one candidate from any set of three candidates listed above leads to the other two no longer violating UQ. We assume our method of selecting JUQ swaps uses the profile of a candidate's supporter satisfactions to select which swap to satisfy. Thus, it cannot distinguish between these violations, so we will assume that the first candidate removed in a JUQ swap is $c'_1$. Each of the 22 remaining Type C candidates not selected by our completion method can be added to the outcome without violating upper quota. Observe that adding the third Type C candidate cannot bring any additional candidate over quota: Type C candidates can never constitute a UQ violation and a Type B candidate can only constitute a UQ violation once at least four Type C candidates have been added. Thus, we will assume that the second candidate removed in a JUQ swap is $c'_4$. 

As we continue to perform JUQ swaps we will remove 15 UQ-violating single gadget candidates and 5 of the 6 Type B candidates, in some order, while adding in the remaining 20 Type C candidates. Notably, only the last of these JUQ swaps will make the remaining Type B candidate violate UQ (let's say this is $c'_{12}$ WLOG). In addition to $c'_{12}$, the committee currently contains 5 candidates from each of the 15 single gadgets, $c^Z$, 4 Type A candidates $\{c'_2,c'_3,c'_5,c'_6\}$, all 24 Type C Candidates, and all 25 Type D candidates. Observe that $c'_{12}$ witnesses a JUQ violation with either $c'_1$ or $c'_4$, the Type A candidates we initially removed. Notably, we did not want to add a Type A candidate back into the committee as part of a JUQ swap until 5 of the 6 Type B candidates were removed. Performing either JUQ swap leads to a committee for which the other Type A candidate witnesses an EJR+ violation. Furthermore, no candidates in that committee violate UQ, and thus the sequence of JUQ swaps terminates in a committee violating EJR+.
\end{proof}

\section{Additional Examples and Results}\label{app:additional}

In this section, we report additional examples and results that did not fit into the main body.
\subsection{Adams-AV Alternative Formulation}\label{app:adams:thiele}
\begin{observation}
	For instances $(\approfile, k)$ with $n$ voters Adams-AV is equivalent to the Thiele rule with vector $w = (n\cdot k, 1, \frac{1}{2}, \frac{1}{3},\dots)$.
\end{observation}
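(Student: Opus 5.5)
The plan is to compare the two objectives directly on the finite set $\kcommittees$. Let $w_1=(1,0,0,\ldots)$ and $w_2=(1,1,\nicefrac12,\nicefrac13,\ldots)$, and let $w=(nk,1,\nicefrac12,\ldots)$. For every voter, $\score_w(A_i,W)=\score_{w_2}(A_i,W)+(nk-1)\cdot\indicator[A_i\cap W\neq\emptyset]$, because $w$ differs from $w_2$ only in the first coordinate. Summing over voters gives
\[
\score_w(\approfile,W)=(nk-1)\,\score_{w_1}(\approfile,W)+\score_{w_2}(\approfile,W).
\]
So it suffices to show that maximizing this combination over $\kcommittees$ is the same as maximizing the pair $(\score_{w_1},\score_{w_2})$ lexicographically. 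That lexicographic maximization is exactly the $2$-composite rule $F_{w_2}\circ F_{w_1}$.

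The key step is a bound on the range of $\score_{w_2}$. A voter with $s\le k$ approved members contributes $1+\sum_{j=1}^{s-1}\nicefrac1j=1+H_{s-1}$. This lies in $[1,1+H_{k-1}]$ if the voter is covered and equals $0$ otherwise. Hence $0\le \score_{w_2}(\approfile,W)\le n(1+H_{k-1})$ for every $W$.

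Now take two committees $W,W'$ with $\score_{w_1}(W)>\score_{w_1}(W')$. Since $\score_{w_1}$ is integer-valued, the difference is at least $1$. Then
\[
\score_w(W)-\score_w(W')\ge (nk-1)-\bigl(\score_{w_2}(W')-\score_{w_2}(W)\bigr).
\]
We need the right-hand side to be positive, so we want to show $\score_{w_2}(W')-\score_{w_2}(W)<nk-1$.

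This is the delicate step. The crude bound $n(1+H_{k-1})$ is below $nk-1$ only when $k\ge 3$ or so, so I would sharpen it instead of using it as is. Voters covered by both committees contribute at most $H_{k-1}$ to the difference. Voters covered by $W'$ but not by $W$ number at most $\score_{w_1}(W')$, and each contributes at most $1+H_{k-1}$. Voters covered by $W$ but not by $W'$ contribute at most $-1$ each, and there are at least one more of these than of the previous kind. Altogether the difference is at most $nH_{k-1}+(1+H_{k-1})\cdot(\text{count})-(\text{count}+1)$. This gives a bound of roughly $nH_{k-1}-1$, which is below $nk-1$ because $H_{k-1}<k$ for all $k\ge1$. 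I expect this accounting to be the main obstacle. If it proves messy, the fallback is to use the trivial bound for $k\ge 3$ and check $k\le2$ by hand; for $k=1$ the term $w_2$ is irrelevant anyway.

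Consequently every $\score_w$-maximizer is $\score_{w_1}$-optimal. Among $\score_{w_1}$-optimal committees, $\score_w$ differs from $\score_{w_2}$ only by a constant, so the maximizers coincide. This shows $F_w(\approfile,k)=(F_{w_2}\circ F_{w_1})(\approfile,k)$, which is Adams-AV.
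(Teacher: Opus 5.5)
Your proof is correct and follows essentially the same route as the paper. Both arguments use $\score_w=(nk-1)\score_{w_1}+\score_{w_2}$, show that a strict $w_1$-improvement forces a strict $w$-improvement, and observe that among $w_1$-ties the scores $w$ and $w_2$ differ only by a constant. The paper handles your ``delicate step'' more briefly by bounding $\score_w$ directly, as $\score_w(W)\ge t\cdot nk$ and $\score_w(W')\le t'nk+t'H_k<(t'+1)nk$ with $t'\le n-1$. In your accounting, the $H_{k-1}$ term should be charged only to voters covered by both committees, which gives $\score_{w_1}(W')\cdot H_{k-1}-1\le nH_{k-1}-1$. Your displayed expression is lossier than this, but it still stays below $nk-1$.
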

\begin{proof}
	Let $W$ and $W'$ be two committees. To show that equivalence, we show that if $W$ has a strictly lexicographically higher ``Adams-AV score'' than $W'$ if and only if $W$ also has a strictly higher score according to $w$. As in the other proofs we let $w_1 = (1,0,0,\dots)$ be the first score vector of Adams-AV and $w_2 = (1,1,\frac{1}{2}, \frac{1}{3}, \dots)$ the second.
	
	First, we show that if $\score_{w_1}(\approfile, W) > \score_{w_1}(\approfile, W')$ then  $\score_{w}(\approfile, W) > \score_{w}(\approfile, W')$. 
	
	Let $t = \score_{w_1}(\approfile, W)$ and $t' = \score_{w_1}(\approfile, W')$. That is, $t$ voters approve a candidate in $W$ while $t'$ voters approve a candidate in $W'$ with $t > t'$. As a consequence the $w$-score of $W$ is at least $t \cdot n \cdot k$. On the other hand, the $w$-score of $W'$ is at most $t' \cdot n\cdot k + t' \cdot H_k < t' \cdot n\cdot k + (n-1)\cdot k < (t' + 1) \cdot n \cdot k \le t \cdot n \cdot k$, where $H_k$ is the $k$-th harmonic number. Thus, in this case $\score_{w}(\approfile, W) > \score_{w}(\approfile, W')$. 
	
	Secondly, we observe the following: based on the definition of $w$ for any committee $W''$ it holds that $\score_{w}(\approfile, W'') = \score_{w_2} (\approfile, W'') + \score_{w_1}( \approfile, W'') \cdot (n\cdot k - 1)$ as $\score_{w_1}( \approfile, W'') \cdot (n\cdot k - 1)$ is precisely the difference in the ``first coordinate'' between $w_2$ and $w$. 
	
	Thus, for our two committees $W$ and $W'$ if $\score_{w_1}(\approfile, W) = \score_{w_1}(\approfile, W')$ we immediately obtain that $\score_{w_2}(\approfile, W) > \score_{w_2}(\approfile, W')$ if and only if $\score_{w}(\approfile, W) > \score_{w}(\approfile, W')$ as $\score_{w_1}( \approfile, W'') \cdot (n\cdot k - 1)$ is the same for both committees.
	
	Consequently Adams-AV and the $w$-Thiele method are equivalent in this instance.
\end{proof}

\subsection{Swap Dynamics}\label{app:swap-dynamics-extra}

In the main body of the paper (Section~\ref{sec:JUQ-swap-dynamics}), we have shown that it is possible to remove (via JUQ swaps) and then add back (again, via JUQ swaps) the same candidate. Here, we show that the above problem persists even assuming some clever schemata on how to pick which JUQ swap to perform next. We present two such schemata.

Firstly, what if we always choose a swap $(c,d)$ such that $c$ has the least supporters? Consider the instance in \Cref{tab:swap_2}. Suppose we start with the following size $k$ committee: 
$$\{c_1,c_2,c_3,c_4,c_6,c_7,c_8,c_{12},\dots, c_{19}\}.$$

Notably, $c_5$ is not in this committee, while candidates in $\{c_1, c_2, c_3, c_{12}, \ldots, c_{19}\}$ have all their (respective) supporters above upper quota. We can replace $c_3$ (a minimally-supported, upper-quota-violating candidate) with, say, $c_5$. After this change, candidates in $\{c_6,c_7,c_8\}$ are over their upper quotas. We can replace each of these with a candidate from $\{c_9,c_{10}, c_{11}\}$. Note that now we can add $c_3$ back (e.g., by swapping it for $c_{19}$). 

Secondly, what if we consider some kind of condition on $d$ instead (or in addition)? If that condition relates to choosing a $d$ with the fewest or most supporters, then an analogous argument to the above applies.

\begin{table}[h!]
\small
    \centering
    \setlength{\tabcolsep}{3pt}
    \begin{tabular}{ccccccccccccc}
	    \toprule
	    Candidate  & $A_1$ & $A_2$ & $A_3$ & $A_4$ & $A_5$ & $A_6$ & $A_7-A_9$ & $A_{10}-A_{15}$ \\
         \midrule
         $c_1-c_3$ &  \cmark & \cmark & \cmark \\
         $c_4,c_5$ & & & & \cmark & \cmark & \cmark \\
         $c_6$     & \cmark & & & \cmark  \\
         $c_7$     & & \cmark & & & \cmark  \\
         $c_8$     & & & \cmark & & & \cmark  \\
	 $c_9-c_{11}$ & & & & & & & \cmark  \\
	 $c_{12}-c_{19}$ & & & & & & & & \cmark  \\
	 \bottomrule
    \end{tabular}
    \caption{Instance with $k=n=15$.}
    \label{tab:swap_2}
\end{table}

\subsection{Relationships with Other Proportionality Notions}\label{app:additional-more-lower-quota}

We define some additional proportionality notions for multi-winner voting.

\begin{definition}[\citep{PPS21a}]
	Given an integer $\ell\in\posNats$ and a set of candidates $\candidates^\prime\subseteq\candidates$, a group of voters $\voters^\prime\subseteq\voters$ is weakly $(\ell,\candidates^\prime)$-cohesive if $|\voters^\prime|\geq|\candidates^\prime|\cdot\nicefrac{n}{k}$ and if for all $i\in\voters^\prime$ it holds that $|A_i\cap \candidates^\prime|\geq\ell$.

	A committee $W$ satisfies \emph{fully justified representation} if for every $\ell\in\posNats$, set of candidates $\candidates^\prime\subseteq\candidates$, and weakly $(\ell,\candidates^\prime)$-cohesive group of voters $\voters^\prime$ there is some $i\in\voters^\prime$ with $|A_i\cap W|\geq\ell$.
\end{definition}

\begin{definition}[\citep{SFF+17a}]
	A committee $W$ satisfies \emph{perfect representation} if $\voters$ can be partitioned in $k$ disjoint groups $\left(\voters_i\right)_{i\in[k]}$ of size $\nicefrac{n}{k}$ and if we can assign to each group $\voters_i$ a distinct candidate $c_i\in W$ such that $c_i\in\bigcap_{j\in\voters_i} A_j$.
\end{definition}

\begin{definition}[\citep{PeSk20b}]
 A committee is \emph{priceable} if there exists a per-voter budget $b\in\posReals$ and a payment function $p_i\colon\candidates\rightarrow[0, 1]$ for each $i\in\voters$ such that:
 \begin{enumerate}
	 \item $\sum_{c\in\candidates}p_i(c)\leq b$ for each $i\in\voters$;
	 \item $p_i(c)=0$ for each $i\in\voters$ and $c\not\in A_i$; 
	 \item $\sum_{i\in\voters}p_i(c)=\indicator[c\in W]$;
	 \item $\sum_{i\in\voters(c)}\left(b-\sum_{d\in W}p_i(d)\right)\leq1$ for each $c\not\in W$.\qedhere
 \end{enumerate}
\end{definition}

Note that there are several competing notions of priceability defined in the literature. For instance, \citet{BrPe24a} use a notion where the per-voter budget is set to $b=\nicefrac{k}{n}$. We do not impose such a restriction here.

\begin{proposition}
	JUQ is incompatible with (i) fully justified representation, (ii) perfect representation, and, if we additionally require exhaustiveness, (iii) priceability.\label{prop:other-lower-quota-axioms}
\end{proposition}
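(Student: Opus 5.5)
The statement bundles three separate impossibilities, and I would prove each with its own explicit counterexample instance. For each instance, the argument would show that every committee meeting the respective lower-quota-type axiom (restricted to exhaustive committees for priceability) has a JUQ witness: a candidate $c\in W$ all of whose supporters exceed $\lceil q_c\rceil$, together with some $d\notin W$ and $N'\subseteq\voters(d)$ that remain at most at $\lceil q(N')\rceil$ after the swap. Since all three axioms are existential or structural, the cleanest route is to pick instances where the set of committees satisfying the lower-quota axiom is small enough to enumerate by hand. Then Proposition~\ref{prop:juq-verif-poly}'s characterization of JUQ violations can be checked for each one directly.

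For (ii), perfect representation, I would use a small instance with $n$ a multiple of $k$, say $k=3$ and $n=6$. The ballots are chosen so that the only way to partition the voters into pairs with a commonly approved candidate forces the selection of a candidate $c$ approved by exactly two voters. Both of these voters also approve the other two selected candidates, giving them satisfaction $3>\lceil q_c\rceil=1$. A further unselected candidate $d$ is approved by a voter who has satisfaction $0$ in $W$. The swap then lands that voter at $1\le\lceil q(\{i\})\rceil$.

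Since perfect representation forces the committee to have exactly $k$ members, non-exhaustiveness gives no escape here. I would then argue that every perfect-representation committee has this shape, usually by noting that one voter approves only a single candidate, which must therefore be assigned.

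For (i), fully justified representation, I would try to reuse the same instance, or a variant of it. The idea is to exhibit a weakly $(\ell,\candidates')$-cohesive group with $\ell\ge 2$, where each member approves $\ell$ candidates of $\candidates'$ but the candidates of $\candidates'$ are individually supported by few voters. FJR then forces some member to reach satisfaction $\ell$ using candidates of small quota, which overrepresents that candidate's supporters. Meanwhile, a disjoint singleton-ballot voter supplies the JUQ swap partner $d$.

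The delicate point for (i) is ruling out the empty or small committees: FJR must force at least the relevant candidates into $W$ regardless of the committee size. It must also hold that each way of satisfying the cohesive group produces some fully overrepresented candidate. I would handle this by case analysis over which member reaches $\ell$.

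For (iii), priceability with exhaustiveness, I would adapt Example~\ref{ex:why-difficult}, which has $k$ larger than the total support available, so exhaustiveness forces many seats onto few voters. The plan is to choose ballots such that:
\begin{itemize}
\item condition~4 of priceability (no unselected candidate is affordable by its supporters' leftover money) forbids leaving the JUQ-friendly candidate $d$ out unless its supporters have spent their budget;
\item the budget-balance conditions~1 and~3 force the payments to concentrate on one group's candidates;
\item for every feasible $b$, the priceable exhaustive committees all give some group satisfaction above its upper quota, while a disjoint small group still has room below its upper quota.
\end{itemize}

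I expect (iii) to be the main obstacle, because $b$ is free and we must quantify over all payment functions. I would first try an instance where one voter approves only $d$ and others approve large disjoint sets. I would then show via condition~4 that $d$ must be bought, and via a counting argument on $\sum_i\sum_c p_i(c)=k$ that the remaining seats overload one group, yielding the JUQ witness.
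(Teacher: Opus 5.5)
Your plan for (i) follows the paper's construction in spirit. The paper uses a weakly $(3,\{c_1,\dots,c_6\})$-cohesive group of $12$ voters with $n/k=2$, plus singleton-ballot voters as receivers. Two details are still missing from your version. Each ballot type in the cohesive group needs a \emph{private} candidate, so that all of that candidate's supporters reach satisfaction $\ell$. And since FJR permits small committees, you need more singleton candidates than free seats (the paper has $8$ singletons, $3$ forced candidates and $k=10$), so that some singleton candidate is always unselected.

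Part (ii) fails as stated. Under perfect representation every voter belongs to a group whose assigned candidate she approves, so $|A_i\cap W|\ge 1$ for all $i$; your receiver with satisfaction $0$ cannot exist. In fact no small receiving group can work. Take any $i\in N'\setminus N(c)$; this set must be nonempty, as in the proof of Theorem~\ref{thm:JUQ-adams}. Such a voter already has satisfaction at least $1$, so she ends at least at $2$ after the swap. That forces $\lceil q(N')\rceil\ge 2$, i.e.\ $|N'|>n/k$. You therefore need more than $n/k$ voters who are each already represented, are strictly below $\lceil q(N')\rceil$, and share an unselected candidate. Your $k=3$, $n=6$ sketch has no such group. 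The paper builds one with $k=4$, $n=12$: in $\{c_1,c_2,c_3,c_4\}$, the four supporters of $c_5$ (upper quota $2$) are each represented once via $c_3$ or $c_4$, while $c_2$'s three supporters sit at $2>1$.

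In (iii) the mechanism runs backwards. A JUQ witness needs $d\notin W$, so if condition~4 forces $d$ to be bought, $d$ is no longer a swap partner, and you have named no other receiver. The actual tension is the reverse. A voter approving only $d$ has leftover $b$ when $d\notin W$, so condition~4 lets $d$ stay out whenever $b\le 1$. Buying $d$, by contrast, forces $b\ge 1$, which gives a larger group so much money that condition~4 cannot be met with the remaining seats.

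Scarcity of approved candidates, as in Example~\ref{ex:why-difficult}, does not create the conflict by itself. There, $W_1=\{a,b,c,d,e,g\}$ satisfies JUQ and is priceable with $b=3$ (voter~$2$'s leftover for $f$ is exactly $1$).

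The paper uses $k=3$ and ballots $\{c_1\},\{c_2\},\{c_3,c_5\},\{c_3,c_5\},\{c_4,c_5\},\{c_4,c_5\}$. If $W$ contains $c_5$ and, say, $c_4$, then voters $5,6$ are at $2>1=\lceil q_{c_4}\rceil$, and swapping $c_4$ for an unselected $c_1$ or $c_2$ violates JUQ. Otherwise $W$ contains $c_1$ or $c_2$, so $b\ge 1$. Then voters $3$--$6$ hold at least $4$ units of budget, which the at most two remaining seats cannot absorb without violating condition~4.
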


\begin{proof}
    (i) Consider the instance in \Cref{tab:JUQ_FJR}. $A_{1}-A_{12}$ are 3-cohesive over $c_1-c_6$. Thus, without loss of generality, a committee satisfying FJR must contain $c_1$, $c_3$ and $c_4$. However, this leads to $c_4$, with $q_{c_4}=2$ violating UQ. Furthermore, for any committee of size $k\leq 10$ there will always be an unselected candidate $c\in \{c_7,\dots, c_{14}\}$, such that $(c_4,c)$ is a JUQ violation.
\begin{table}[h!]
    \small
    \setlength{\tabcolsep}{3pt}
    \centering
    \begin{tabular}{ccccccccccccc}
	    \toprule
        Candidate  & $A_1-A_4$ & $A_5-A_8$ & $A_9-A_{12}$ & $A_{13}$ & \dots & $A_{20}$ & \\
        \midrule
        $c_1$ &  \cmark & \cmark \\
        $c_2$ &  & \cmark & \cmark \\
        $c_3$ &  \cmark & & \cmark \\
        $c_4$ &  \cmark \\
        $c_5$ &  & \cmark \\
        $c_6$ &  & & \cmark \\
        $c_7$ &  & & & \cmark \\
        \dots & & & & & \dots \\
        $c_{14}$ & & & & & & \cmark \\
	    \bottomrule
    \end{tabular}
    \caption{Example instance with $k=10$.}
    \label{tab:JUQ_FJR}
\end{table}

    (ii) Consider the instance in \Cref{tab:JUQ_Perf_Rep}. A committee satisfying perfect representation exists and must contain $c_1$, $c_2$ and two of $\{c_3,c_4,c_5\}$. Without loss of generality, consider $\{c_1,c_2,c_3,c_4\}$. Then, $(c_2,c_5)$ witnesses a JUQ violation, as $q_{c_2}=1$ and $\ceil{q_{c_5}}=2$.
\begin{table}[h!]
    \small
    \setlength{\tabcolsep}{3pt}
    \centering
    \begin{tabular}{ccccccccccccc}
	    \toprule
        Candidate  & $A_1$ & $A_2$ & $A_3$ & $A_4$ & $A_5$ & $A_6$ & $A_7$ & $A_8$ & $A_9$ & $A_{10}$ & $A_{11}$ & $A_{12}$ \\
        \midrule
        $c_1$ & \cmark & \cmark & \cmark & \cmark & \cmark & \cmark \\
        $c_2$ & \cmark & \cmark & \cmark \\
        $c_3$ & & & & & & & \cmark & \cmark & \cmark & \cmark \\ 
        $c_4$ & & & & & & & \cmark & \cmark & & & \cmark & \cmark \\
        $c_5$ & & & & & & & & & \cmark & \cmark & \cmark & \cmark \\ 
	    \bottomrule
    \end{tabular}
    \caption{Example instance with $k=4$.}
    \label{tab:JUQ_Perf_Rep}
\end{table}

    (iii) Consider the instance depicted in \Cref{tab:JUQ_Priceability} and consider any committee $W$ of size $\lvert W\rvert = 3$. We will show that $W$ either violates priceability or JUQ.
    
    First, we consider the case that $W$ contains $c_5$ and at least one of $c_3$ and $c_4$ (without loss of generality $c_4$). Then, we observe that $\lceil q_{c_4}\rceil = 1$ and thus $c_4$ together with the unselected candidate among $c_1$ and $c_2$ witnesses a JUQ violation. 
    
    Next, assume that either $c_1$ or $c_2$ is contained in $W$. Then, if $W$ were priceable voters $1$ and $2$ must have a budget of at least $1$. If $W$ additionally contains $c_5$, we know by averaging that one of the groups $\{3,4\}$ or $\{5,6\}$ must have a leftover budget of $1.5$ therefore violating the fourth condition of priceability together with the candidate they commonly approve. If on the other hand, the committee contains either $c_3$ or $c_4$ the other of the two candidates could also be afforded by its supporters. 
    
    Hence, $W$ either violates priceability or JUQ.
\begin{table}[h!]
    \small
    \setlength{\tabcolsep}{3pt}
    \centering
    \begin{tabular}{ccccccccccccc}
	    \toprule
        Candidate  & $A_1$ & $A_2$ & $A_3$ & $A_4$ & $A_5$ & $A_6$ \\
        \midrule
        $c_1$ & \cmark \\
        $c_2$ & & \cmark \\
        $c_3$ & & & \cmark & \cmark \\
        $c_4$ & & & & & \cmark & \cmark \\
        $c_5$ & & & \cmark & \cmark & \cmark & \cmark \\
	    \bottomrule
    \end{tabular}
    \caption{Example instance with $k=3$.}
    \label{tab:JUQ_Priceability}
\end{table}
\end{proof}

\section{Counterexamples for Other Rules Satisfying JUQ or JNQ}\label{app:counterexamples}

In this section, we present several counterexamples for familiar multi-winner voting rules for JUQ and JNQ. We tested all the rules present in the \texttt{abcvoting} Python package~\citep{joss-abcvoting} at the time of writing, as well as some additional rules of interest. For the sake of simplicity, for the rules implemented in the \texttt{abcvoting} package, we refer to them by their name in the package.\footnote{See this link for more information: \url{https://abcvoting.readthedocs.io/en/latest/intro-abcrules.html} (accessed: 2026-02-05.)} For the other rules, we provide references. 

\subsection{JUQ}\label{app:counterexamples:JUQ}

\begin{description}
	
	\item[\textbf{Example 1}] 
	$m=6$, $n=10$, $k=2$.
	\[
	\begin{aligned}
		A_1&=\{a\},\\
		A_2&=\{b,c,d\},\\
		A_3&=\{b,c\},\\
		A_4&=\{b\},\\
		A_5&= \{e\},\\
		A_6&=\{b,c\},\\
		A_7&= \{f\},\\
		A_8&=\{d\},\\
		A_9&=\{b,c\},\\
		A_{10}&=\{b\}.
	\end{aligned}
	\]
	In this example $\{b,c\}$ does not satisfy JUQ, as we can swap $c$ with $a$. The rules AV, SAV, PAV, SLAV, GEOM2, SEQPAV, REVSEQPAV, SEQSLAV, SEQPHRAGMEN, MINIMAXPHRAGMEN, MAXIMIN-SUPPORT, MINIMAXAV, EQUAL-SHARES (in all its variants), and EPH select $\{b,c\}$ in this instance.
	
	\item[\textbf{Example 2}] 
	$m=4$, $n=4$, $k=3$.
	\[
	\begin{aligned}
		A_1&=\{b,c,d\},\\
		A_2&=\{a,d\},\\
		A_3&=\{b,c,d\},\\
		A_4&=\{b,c,d\}.
	\end{aligned}
	\]
	In this instance, $\{a,b,d\}$ does not satisfy JUQ, as we can swap $a$ with $c$. The rules CC, SEQCC, LEXCC, MONROE, the CONSENSUS-RULE, PHRAGMEN-ENESTROEM, and LEXIMINMAXAV can select $\{a,b,d\}$ (among other committees).
	\item[Example 3] 
	$m=4$, $n=4$, $k=2$.
	\[
	\begin{aligned}
		A_1&=\{a,b,c\},\\
		A_2&=\{a,d\},\\
		A_3&=\{a,c,d\},\\
		A_4&=\{a,d\}.
	\end{aligned}
	\]
	In this intance $\{a,c\}$ does not satisfy JUQ, as we can swap $c$ with $d$. The LEXIMAXPHRAGMEN and RSD rules can select $\{a,c\}$.
	\item[\textbf{Example 4}] 
	$m=5$, $n=4$, $k=2$.
	\[
	\begin{aligned}
		A_1&=\{e\},\\
		A_2&=\{a,c,d\},\\
		A_3&=\{c\},\\
		A_4&=\{a,b,c\}.
	\end{aligned}
	\]
	In this instance $\{a,c\}$ does not satisfy JUQ, as we can swap $a$ with $e$. GREEDYMONROE can select $\{a,c\}$ in this instance.
	\item[\textbf{Example 5}] 
	$m=5$, $n=6$, $k=3$.
	\[
	\begin{aligned}
		A_1&=\{a\},\\
		A_2&=\{e\},\\
		A_3&=\{c\},\\
		A_4&=\{b,e\},\\
		A_5&=\{b,c,d\},\\
		A_6&=\{d\}.
	\end{aligned}
	\]
	In this instance $\{b,c,e\}$ does not satisfy JUQ, as we can swap $b$ with $a$. The sequential formulation of Adams (defined analogously as sequential simple Thiele rules, see the definition by \citet[Section 2.3]{LaSk22a}) can select $\{b,c,e\}$ in this instance.
	\item[\textbf{Example 6}] 
	$m=9$, $n=15$, $k=8$.
	\[
	\begin{aligned}
		A_1&=\{a,b,c\},\\
		A_2&=\{a,d,e\},\\
		A_3&=\{a,f,g\},\\
		A_4&=\{b,h\},\\
		A_5&=\{b,h\},\\
		A_6&=\{c,i\},\\
		A_7&=\{c,i\},\\
		A_8&=\{d\},\\
		A_9&=\{d\},\\
		A_{10}&=\{e\},\\
		A_{11}&=\{e\},\\
		A_{12}&=\{f\},\\
		A_{13}&=\{f\},\\
		A_{14}&=\{g\},\\
		A_{15}&=\{g\}.
	\end{aligned}
	\]
	In this instance $\{a,b,c,d,e,f,g\}$ does not satisfy JUQ, as we can swap $a$ with $h$. GJCR (\Cref{def:GJCR}) can select $\{a,b,c,d,e,f,g\}$ in this instance. Note that the violation persists even if we consider the exhaustive committee $\{a,b,c,d,e,f,g,h\}$.

	\item[\textbf{Example 7}] 
	$m=3$, $n=4$, $k=2$.
	\[
	\begin{aligned}
		A_1&=\{a\},\\
		A_2&=\{a,b\},\\
		A_3&=\{a,b,c\},\\
		A_4&=\{a,b,c\}.
	\end{aligned}
	\]
	In this instance $\{a,c\}$ does not satisfy JUQ, as we can swap $c$ with $b$. The Greedy Cohesive Rule introduced by \citet{PPS21a} (in the formulation given by~\citet{BFL+23a}) can select $\{a,c\}$ in this instance.
\end{description}

\subsection{JNQ}\label{app:counterexamples:JNQ}

\begin{description}
	
	\item[\textbf{Example 1}] 
	$m=7$, $n=5$, $k=3$.
	\[
	\begin{aligned}
		A_1&=\{b,e,f\},\\
		A_2&=\{f\},\\
		A_3&=\{a\},\\
		A_4&=\{a,b,c,g\},\\
		A_5&=\{g\}.
	\end{aligned}
	\]
	In this instance $\{a,b,f\}$ does not satisfy JNQ. We can swap $b$ with $g$. The quota of $\voters(b)$ is $\nicefrac{6}{5}$, while the quota of voter $5$ is $\nicefrac{3}{5}$. 
	The rules AV, SAV, CC,  LEXCC, GEOM2, SEQPAV, REVSEQPAV, SEQCC, MINIMAXPHRAGMEN, MAXIMIN-SUPPORT, MONROE, GREEDY-MONROE, MINIMAXAV, LEXIMINMAXAV, EQUAL-SHARES (in all its variants), PHRAGMEN-ENESTROM, can select $\{a,b,f\}$ in this instance.
	
	\item[\textbf{Example 2}] 
	$m=5$, $n=5$, $k=3$.
	\[
	\begin{aligned}
		A_1&=\{b,d\},\\
		A_2&=\{a,d\},\\
		A_3&=\{a,b\},\\
		A_4&= \{c\},\\
		A_5&=\{e\}.
	\end{aligned}
	\]
	In this instance, $\{a,b,d\}$ does not satisfy JNQ, as we can swap $b$ with $e$. The rules PAV, SEQSLAV, SEQPHRAGMEN, RSD and the CONSENSUS-RULE	can select $\{a,b,d\}$.
	
	\item[\textbf{Example 3}] 
	$m=7$, $n=7$, $k=3$.
	\[
	\begin{aligned}
		A_1&=\{a,c\},\\
		A_2&=\{a,b,e\},\\
		A_3&=\{b,c,e\},\\
		A_4&=\{c,e,f,g\},\\
		A_5&=\{d\},\\
		A_6&=\{b,c,e,f,g\},\\
		A_7&=\{a\}.
	\end{aligned}
\]In this instance, $\{a,b,c\}$ does not satisfy JNQ, as we can swap $c$ with $e$. Adams-AV (\Cref{def:adams-av}) and EPH can select $\{a,b,c\}$.	
	\item[\textbf{Example 4}] 
	$m=5$, $n=5$, $k=2$.
	\[
	\begin{aligned}
		A_1&=\{a,b,d\},\\
		A_2&=\{c,d\},\\
		A_3&=\{b,d\},\\
		A_4&= \{e\},\\
		A_5&=\{a,c,d\}.
	\end{aligned}
	\]
	In this instance, $\{b,c\}$ does not satisfy JNQ, as we can swap $b$ with $d$. LEXIMAXPHRAGMEN can select $\{b,c\}$.

\end{description}

\end{document}